\newcommand{\strongcontract}{{\sc Path-contraction Preserving Strong Connectivity}}
\newcommand{\strongnodedel}{{\sc Vertex-deletion Preserving Strong Connectivity}}
\newcommand{\twoconndel}{{\sc Weighted Biconnectivity Deletion}}
\newtheorem*{theorem*}{Theorem}
\newtheorem*{lemma*}{Lemma}
\theoremstyle{plain}
\newtheorem{claim}[theorem]{Claim}
\newtheorem{observation}[theorem]{Observation}
\newtheorem{redr}[theorem]{Reduction Rule}
\renewcommand{\geq}{\geqslant}
\renewcommand{\leq}{\leqslant}
\newcommand{\reals}{{\mathbb R}}
\newcommand{\naturals}{{\mathbb N}}
\newcommand{\lv}[1]{}
\newcommand{\sv}[1]{}
\newcommand{\contract}{\mathbin{/\mkern-6mu/}}
\newcommand{\heavy}{{\sf Heavy}}
\newcommand{\segment}{{\sf Segment}}
\newcommand{\component}{{\sf Component}}
\newcommand{\critical}{{\sf Critical}}
\newcommand{\partner}{{\sf Partner}}
\newcommand{\cP}{{\mathcal{P}}}
\newcommand{\cH}{{\mathcal{H}}}
\newcommand{\old}[1]{}
\newcommand{\cF}{{\mathcal{F}}}
\newcommand{\cB}{{\mathcal{B}}}
\newenvironment{tightcenter}
 {\parskip=0pt\par\nopagebreak\centering}
 {\par\noindent\ignorespacesafterend}
\newlength{\RoundedBoxWidth}
\newsavebox{\GrayRoundedBox}
\newenvironment{GrayBox}[1]%
   {\setlength{\RoundedBoxWidth}{\textwidth-4.5ex}
    \def\boxheading{#1}
    \begin{lrbox}{\GrayRoundedBox}
       \begin{minipage}{\RoundedBoxWidth}%
   }{%
       \end{minipage}
    \end{lrbox}%
    \begin{tightcenter}%
    \begin{tikzpicture}%
       \node(Text)[draw=black!20,fill=white,rounded corners,%
             inner sep=2ex,text width=\RoundedBoxWidth]%
             {\usebox{\GrayRoundedBox}};
        \coordinate(x) at (current bounding box.north west);
        \node [draw=white,rectangle,inner sep=3pt,anchor=north west,fill=white] 
        at ($(x)+(6pt,.75em)$) {\boxheading};
    \end{tikzpicture}
    \end{tightcenter}\vspace{0pt}%
    \ignorespacesafterend
}    
\newenvironment{problem}[2][]{\noindent\ignorespaces%
                                \FrameSep=6pt%
                                \parindent=0pt%
                \vspace*{-.5em}
                \ifthenelse{\isempty{#1}}{%
                  \begin{GrayBox}{\textsc{#2}}%
                }{%
                  \begin{GrayBox}{\textsc{#2} parameterized by~{#1}}%
                }
                \newcommand\Prob{Problem:}%
                \newcommand\Input{Input:}%
                \begin{tabular*}{\textwidth}{@{\hspace{.1em}} >{\itshape} p{1.2cm} p{0.85\textwidth} @{}}%
            }{
                \end{tabular*}%
                \end{GrayBox}%
                \vspace*{-.5em}
                \ignorespacesafterend
            } 
\title{Path-contractions, edge deletions and connectivity preservation\footnote{The research of Gregory Gutin  was partially supported by Royal Society Wolfson Research Merit Award. M. S. Ramanujan acknowledges support from Austrian Science Fund (FWF, project P26696).}}
\author[1]{Gregory Gutin}\author[2]{M. S. Ramanujan}\author[1]{ Felix Reidl}\author[1]{ Magnus Wahlstr\"{o}m}
 \affil[1]{Royal Holloway University, University of London, Egham, UK.     \texttt{\{G.Gutin|Magnus.Wahlstrom\}@rhul.ac.uk}, \texttt{felix.reidl@gmail.com}}      
 \affil[2]{Algorithms and Complexity Group, TU Wien, Vienna, Austria.     \texttt{ramanujan@ac.tuwien.ac.at}} 
\authorrunning{G. Gutin, M. S. Ramanujan, F. Reidl and M. Wahlstr\"{o}m} 
\subjclass{G.2.2 Graph algorithms}
\keywords{connectivity, strong connectivity, vertex deletion, arc contraction}
\begin{document}
\maketitle

\begin{abstract}
  We study several problems related to graph modification problems under
  connectivity constraints from the perspective of parameterized complexity:
  {\sc (Weighted) Biconnectivity Deletion}, where we are tasked with deleting~$k$
  edges while preserving biconnectivity in an undirected graph, {\sc Vertex-deletion
  Preserving Strong Connectivity}, where we want to maintain strong connectivity
  of a digraph while deleting exactly~$k$ vertices, and {\sc Path-contraction Preserving
  Strong Connectivity}, in which the operation of path contraction on arcs is used instead.
  The parameterized tractability of this last problem was 
  posed by Bang-Jensen and Yeo [DAM 2008] as an open question and we answer it
  here in the negative: both variants of preserving strong connectivity are
  $\sf W[1]$-hard. Preserving biconnectivity, on the other hand, turns out to be
  fixed parameter tractable and we provide a $2^{O(k\log k)} n^{O(1)}$-algorithm
  that solves {\sc Weighted Biconnectivity Deletion}. Further, we show 
  that the unweighted case even admits a randomized polynomial kernel. 
  All our results provide further interesting data points for the systematic
  study of connectivity-preservation constraints in the parameterized setting.
\end{abstract}

\section{Introduction}

Some of the most well studied classes of network design problems involve
starting with a given network and making modifications to it so that the
resulting network satisfies certain connectivity requirements, for instance a
prescribed edge- or vertex-connectivity. This class of problems has a long and
rich history (see \eg  \cite{bang2008digraphs,frank11}) and has recently started 
to be examined through the lens of parameterized complexity. Under this paradigm,
we ask whether a (hard) problem admits an algorithm with a running time $f(k)n^{O(1)}$, 
where $n$ is the size of the input, $k$ the {\em parameter}, and $f$ some computable function.
A natural parameter to consider in this context is the number of editing operations
allowed and we can reasonably assume that this number is small compared to the size of
the graph. 

To approach this line of research systematically, let us identify the `moving
parts' of the broader question of editing under connectivity-constraints:
first and foremost, the network in question might best be modelled as either a
directed or undirected graph, potentially with edge- or vertex-weights. This,
in turn, informs the type of connectivity we restrict, \eg strong connectivity
or fixed value of edge-/vertex-connectivity. Additionally, the connectivity
requirement might be non-uniform, \ie it might be specified for individual
vertex-pairs. The constraint one operates under might either be to
\emph{preserve}, to \emph{augment}, or to \emph{decrease} said connectivity.
Finally, we need to fix a suitable editing operation; besides the obvious
vertex- and edge-removal, more intricate operations like edge contractions are
possible.

While not all possible combinations of these factors might result in a problem
that currently has an immediate real-world application, they are nonetheless
important data points in the systematic study of algorithmic tractability. For
example, if we fix the editing operation to be the addition of edges (often
called `links' in this context) and our goal is to increase connectivity, then
the resulting class of {\em connectivity augmentation problems} has been
thoroughly researched. We refer to the monograph by Frank~\cite{frank11} for
further results on polynomial-time solvable cases and approximation
algorithms. Under the parameterized complexity paradigm, Nagamochi~\cite{Nagamochi200383} and Guo and
Uhlmann~\cite{GuoU10} studied the problem of augmenting a 1-edge-
connected graph with~$k$ links to a 2-edge-connected graph. Nagamochi obtained an {\sf FPT} algorithm for this problem while Guo and Uhlmann showed that
this problem, alongside its vertex-connectivity variant, admits a quadratic
kernel. Marx and V\'{e}gh~\cite{marx2015fixed} studied the more general
problem of augmenting the edge-connectivity of an undirected graph from
$\lambda - 1$ to $\lambda$, via a minimum set of links that has a total {\em
cost} of at most~$k$, and obtained an {\sf FPT} algorithm as well as a
polynomial kernel for this problem. 
Basavaraju \etal~\cite{basavaraju2014} improved the running time of
their algorithm and further showed the fixed-parameter tractability of a dual
parameterization of this problem.

A second large body of work can be found in the antithetical class of problems,
where we ask to {\em delete} edges from a network while {\em preserving} connectivity.
Probably the most studied member of these {\em connectivity preservation problems}
is the {\sc Minimum Strong Spanning Spanning Subgraph} (MSSS) problem: given a
strongly connected digraph we are asked to find a strongly connected subgraph
with a minimum number of arcs. The problem is NP-complete 
(an easy reduction from the {\sc Hamiltonian
Cycle} problem) and there exist a number of approximation algorithms for it (see
the monograph by Bang-Jensen and Gutin for details and references~\cite{bang2008digraphs}).
Bang-Jensen and Yeo \cite{bang2008minimum} were the
first to study {\sc MSSS} from the parameterized complexity perspective. They
presented an algorithm that runs in time $2^{O(k \log k)}n^{O(1)}$ and decides
whether a given strongly connected digraph $D$ on $n$ vertices and $m$ arcs
has a strongly connected subgraph with at most $m-k$ arcs provided $m \geq
2n-2$. Basavaraju \etal~\cite{basavarajuMRS17} extended this result not only
to arbitrary number~$m$ of arcs but also to $\lambda$-arc-strong connectivity
for an arbitrary integer~$\lambda$, and they further extended it to
$\lambda$-edge-connected {\em undirected} graphs.

We consider the undirected variant of this problem, however, we aim to
preserve the {\em vertex-connectivity} instead of edge-connectivity. As noted
by Marx and V\'{e}gh~\cite{marx2015fixed}, vertex-connectivity variants of
parameterized connectivity problems seem to be much harder to approach than
their edge-connectivity counterparts.\footnote{Marx and V\'{e}gh~\cite{marx2015fixed} compare 
\cite{WatanabeN87} and \cite{Frank92} to \cite{FrankJ95} and \cite{Vegh11} with respect to polynomial-time exact and approximation algorithms.}
Moreover,  even the complexity of the problem of
augmenting the vertex-connectivity of an undirected graph from~$2$ to~$3$, via
a minimum set of up to $k$ new links remains open~\cite{marx2015fixed}. Our
main result in this direction is the first {\sf FPT} algorithm for the following
problem\footnote{Note that since 1-vertex-connectivity is trivially
equivalent to 1-edge-connectivity, the $1$-vertex-connectivity case
was proved to be {\sf FPT} by Basavaraju \etal~\cite{basavarajuMRS17}.}:

\begin{problem}[$k$]{\twoconndel}
  \Input & A biconnected graph $G$,  $k\in \naturals$, 
           $w^*\in \reals_{\geq 0}$ and a function $w:E(G)\to \reals_{\geq 0}$. \\
  \Prob  & Is there a set $S \subseteq E(G)$ of size at most $k$ such that 
           $G - S$ is biconnected and $w(S) \geq w^*$? 
\end{problem}

\begin{restatable}{theorem}{thmbiconnected}\label{thm:biconnected}
  {\twoconndel} can be solved in time $2^{O(k \log k)}n^{O(1)}$.
\end{restatable}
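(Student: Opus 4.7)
My approach is built around an open ear decomposition of $G$, which exists since $G$ is biconnected: write $G = C_0 \cup P_1 \cup \cdots \cup P_r$ where $C_0$ is a cycle and each ear $P_i$ ($i\geq 1$) is an open path whose endpoints lie in $C_0 \cup P_1 \cup \cdots \cup P_{i-1}$ and whose interior consists of new vertices. Call an ear \emph{trivial} if it is a single edge (a chord) and \emph{long} otherwise. As a preliminary step I would apply reduction rules exploiting the fact that an internal vertex of a long ear of degree exactly $2$ in $G$ forbids removing either of its incident ear edges; similar local rules identify edges whose deletion immediately destroys biconnectivity and thus cannot lie in $S$. After this normalization, each long ear contributes only $O(1)$ candidate edges to any feasible solution.

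The central technical component is a structural lemma bounding the number of \emph{relevant ears}---those intersecting an optimal $S$---by $O(k)$. The rough plan is a case split: either $r = O(k)$ already (and we proceed directly to enumeration), or there are many trivial ears and many candidate edges inside long ears. In the latter case, a careful exchange argument identifies a subset of $O(k)$ ears guaranteed to contain an optimal $S$: swapping a "redundant" ear out of consideration is safe provided we can replace its contribution to $w(S)$ from elsewhere without breaking biconnectivity. Because the input is weighted, the replacement must be performed via a global polynomial-time subroutine, for example by reducing the selection of chord-deletions to a weighted independence problem on an auxiliary structure derived from $G$ and its ear decomposition. The unweighted variant would admit a simpler greedy proof, but the weighted setting forces this indirect route.

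Once $O(k)$ relevant ears have been isolated, a solution is specified by choosing at most $k$ of them together with the at-most-$O(1)$ candidate edge(s) from each chosen long ear. This yields $k^{O(k)} = 2^{O(k\log k)}$ configurations, each checkable in polynomial time by testing biconnectivity of $G - S$ and comparing $w(S)$ with $w^*$. Combined with the polynomial cost of the preprocessing and the structural reduction, this gives the claimed $2^{O(k\log k)}n^{O(1)}$ bound.

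The main obstacle is the structural reduction: proving that only $O(k)$ ears are truly relevant \emph{in the weighted setting}. A purely local or greedy exchange is insufficient because swapping a heavy edge into $S$ in place of a lighter one may be globally infeasible due to interactions at shared ear endpoints; the argument must simultaneously respect the biconnectivity constraint and the weight threshold $w^*$. Overcoming this will require either a matroid-intersection-style subroutine on an appropriate auxiliary structure or a branching rule that charges only $O(\log k)$ bits per decision. The subsequent enumeration and verification steps are comparatively routine once this structural bound is in hand.
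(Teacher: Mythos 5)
Your plan has a genuine gap, and it sits exactly where the real difficulty of the problem lies. First, the claim that after the degree-$2$ reduction ``each long ear contributes only $O(1)$ candidate edges'' does not hold: the internal vertices of an ear need not have degree $2$ in $G$, since later ears may attach to them, so a single long ear can contain arbitrarily many edges whose deletion preserves biconnectivity---and many of them can even be deleted simultaneously if the intermediate stretches are re-attached by other ears. Second, your central ``structural lemma'' conflates two very different statements: that any fixed solution of size at most $k$ meets at most $k$ ears (trivial, but useless), and that one can compute in polynomial time a set of $O(k)$ ears, or more generally ${\rm poly}(k)$ candidate edges, guaranteed to contain \emph{some} optimal solution. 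The latter is the entire content of the theorem, and your proposal only gestures at it (``careful exchange argument'', ``weighted independence problem on an auxiliary structure'', ``matroid-intersection-style subroutine'') without an argument. In the weighted setting, swapping one candidate edge for another can fail not merely because of ``interactions at shared ear endpoints'' but because the two edges may jointly lie in a mixed cut of the remaining graph, a global obstruction that an ear decomposition does not localize.

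For comparison, the paper does not establish any static ${\rm poly}(k)$ bound on relevant edges at all. Its key lemma says: if there are more than $\mu(k)=O(k^3)$ potential solution edges, then among the $\mu(k)$ heaviest ones there is either a solution edge or an \emph{irrelevant} edge computable in polynomial time. Proving this requires a greedy construction of a deletion set, a fine analysis of the edges that become $2$-critical after a deletion (via mixed cuts of size two consisting of a critical edge and a ``partner'' vertex on a two-path flow), a rerouting argument showing that many distinct partner sets yield a size-$k$ deletion set inside the heavy edges, and, in the opposite case of long stretches of critical edges with identical partners, an explicit exchange replacing the lightest such edge by another of no smaller weight while re-certifying biconnectivity by concrete path surgery. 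Only after iterating this (branching on heavy edges or discarding irrelevant ones) does the instance shrink to $O(k^3)$ candidates, at which point the $2^{O(k\log k)}$ enumeration you describe applies. Your outline would need an argument of comparable depth for the exchange/irrelevance step; as written, it assumes the conclusion it is supposed to prove.
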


\noindent 
We further show that this problem has a randomized polynomial kernelization when
the edges are required to have only unit weights. To be precise, all
inputs for the unweighted variant \textsc{Unweighted Biconnectivity Deletion}
are of the form $(G,k,w^*,w)$, where $w^*=k$ and $w(e)=1$ for every $e\in E(G)$.

\begin{restatable}{theorem}{kernelbiconnected}\label{thm:kernelbiconnected}
  {\sc Unw. Biconnectivity Del.} has a randomized
  kernel with $O(k^9)$ vertices. \looseness-1
\end{restatable}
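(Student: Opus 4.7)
The strategy is to combine structural reduction rules based on the SPQR decomposition of $G$ with the randomized matroid-compression framework of Kratsch and Wahlstr\"om. We first preprocess~$G$ by examining its SPQR-tree: edges internal to an S-node with no parallel partner are forced into every solution (they are individually critical), parallel multiplicity inside a P-node can be trimmed to at most a linear-in-$k$ number of representatives, and R-nodes that meet at most one separation pair of~$G$ can be treated as black boxes. This phase yields a ``skeleton'' in which each triconnected component that can host a deletable edge is recorded, the total number of such components is bounded, and the remaining task is to select~$k$ deletable edges distributed over these components so that no separation pair becomes a cut pair after deletion.

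The core step is a matroidal encoding of the joint-deletability constraint. For every separation pair $\{u,v\}$ exposed by the SPQR-tree, there is a local obstruction: deleting too many edges on one side of $\{u,v\}$ disconnects the graph. We encode these obstructions as a linear matroid $M$ on $E(G)$ whose independent sets of size~$k$ are the feasible deletion sets, built via a gammoid of $k+2$ internally disjoint paths rooted at each relevant separation pair; the matroid is representable over a field of size $n^{O(1)}$, so a random representation succeeds with high probability. We then apply the Kratsch-Wahlstr\"om representative-sets / matrix-truncation tool to $M$ to extract a family~$\cF$ of $\poly(k)$ edges which preserves every feasible solution. Edges outside~$\cF$ are declared irrelevant, and a reduction rule either contracts them (if they are internal to an S-component whose remaining skeleton still provides the required two disjoint paths) or re-labels them as non-deletable, trimming the host component accordingly.

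After irrelevant edges are removed, the supporting graph collapses to a biconnected graph with $\poly(k)$ ``interesting'' edges and $\poly(k)$ surviving separation pairs. A final cleanup step on long series chains in each S-node, replacing each such chain by a constant-length gadget that has the same deletability signature, bounds the total number of vertices by $O(k^9)$. The $k^9$ arises naturally: roughly $O(k^2)$ surviving separation pairs, $O(k^3)$ path-witness edges retained per pair through matroid representative-set marking, and a further $O(k^4)$ factor for the cross-product bookkeeping when the local matroids are combined by matroid-union/intersection.

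The principal obstacle is the matroid encoding. Biconnectivity is a non-matroidal global property, so the encoding must decompose ``$G-S$ is biconnected'' into a $\poly(k)$-sized conjunction of local, matroidally expressible constraints, one per potentially exposed separation pair. Making this decomposition go through requires showing (i) that the number of separation pairs whose deletion status is ``in doubt'' in a $k$-edge solution is $\poly(k)$, so only $\poly(k)$ local matroids need to be combined, and (ii) that the resulting union/intersection matroid still admits a representation of polynomial dimension, so that representative-sets of size $\poly(k)$ exist. The analysis is delicate because a single edge may simultaneously participate in obstructions for several separation pairs; resolving this multiplicity without paying more than the claimed $O(k^9)$ factor is the main technical challenge we anticipate.
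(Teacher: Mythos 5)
There is a genuine gap at the heart of your plan: the claimed matroidal encoding does not exist. The family of feasible deletion sets (sets $S$ with $G-S$ biconnected) is downward closed but violates the exchange axiom, so it cannot be the family of independent sets of any matroid, linear or otherwise. A concrete counterexample is $K_4$ with vertices $a,b,c,d$: the singleton $\{ab\}$ and the pair $\{ac,bd\}$ are both feasible, but neither $\{ab,ac\}$ nor $\{ab,bd\}$ is (the shared endpoint drops to degree one). Consequently the gammoid/representative-set/truncation machinery you invoke has nothing to act on, and the step ``edges outside $\cF$ are irrelevant'' has no justification. You partly anticipate this by calling the decomposition into local matroidal constraints the main obstacle, but that obstacle is exactly where the argument would have to live, and no route around the non-matroidality is indicated. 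A second, independent gap is that nothing in your proposal bounds the number of \emph{candidate deletable} edges by a polynomial in $k$; the SPQR-based trimming only disposes of edges that are already critical or sit in long series chains, and your $O(k^2)$/$O(k^3)$/$O(k^4)$ accounting for the $k^9$ bound is asserted rather than derived. (Minor but symptomatic: edges whose deletion creates a cut vertex are \emph{excluded} from every solution, not ``forced into'' it.)

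For comparison, the paper's proof does not encode feasibility matroidally at all. It first shrinks the set $F$ of potential solution edges to $O(k^3)$ by rerunning the greedy procedure and case analysis of the {\sf FPT} algorithm: either a deletion set of size $k$ is found (yes-instance), or many distinct partner sets yield one (Lemma~\ref{lem:manypartnersets}), or an irrelevant edge is identified inside a clean stretch (Lemma~\ref{lem:irrelevant_edge}) and moved into $E^\infty$; if none of these fires, $|F|\leq 20k^3+46k^2+k$. Only then is the Kratsch--Wahlstr\"om tool applied, and in its cut-covering form (Lemma~\ref{lemma:cutcover}) rather than as representative sets over a matroid on edges: an auxiliary digraph with $O(|F|)$ terminals is built, a cut-covering set $Z$ of size $O(|F|^3)$ is computed, and the instance is replaced by the torso on $Z$ together with the endpoints of $F$ (Lemma~\ref{lemma:kernel-representation}), giving $O(|F|^3)=O(k^9)$ vertices. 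If you want to salvage your approach, you would need both the irrelevant-edge analysis (or a substitute) to get $|F|=\poly(k)$ and a formulation of ``$G-S$ stays biconnected'' as a cut/linkage condition over a bounded terminal set, which is precisely what the paper's auxiliary digraph achieves.
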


\noindent 
Along with arc-additions and arc-deletions, a third interesting operation on
digraphs is the \emph{path-contraction} operation which has been used to
obtain structural results on paths in digraphs~\cite{bang2008digraphs}. To
{\em path-contract} an arc~$(x,y)$ in a digraph~$D$, we remove it from~$D$,
identify $x$ and~$y$ and keep the in-arcs of~$x$ and the our-arcs of~$y$ for
the combined vertex. The resulting digraph is denoted by $D \contract (x,y)$.
It is useful to extend this notation to sequences of contractions:
let $S=(a_1,a_2,\dots ,a_p)$ be a sequence of arcs of a digraph $D$. Then $D
\contract S$ is defined as $(\dots ((D \contract a_1) \contract a_2) \contract
\dots ) \contract a_p$. Since the resulting digraph does not depend on the order of
the arcs~\cite{bang2008digraphs}, this notation can equivalently
be used for arc-sets.

Bang-Jensen and Yeo~\cite{bang2008minimum} asked whether the problem of path-
contracting at least $k$ arcs to maintain strong connectivity of a given
digraph $D$ is fixed-parameter tractable. Formally, the problem is stated as
follows:

\begin{problem}[k]{\strongcontract}
  \Input & A strongly connected digraph $D$ and an integer $k$. \\
  \Prob  & Is there a sequence $S=(a_1,\dots,a_k)$ of arcs of $D$ 
           such that $D \contract S$ is also strongly connected?
\end{problem}

\noindent Our first result is a negative answer to the  question of Bang-Jensen and Yeo.
That is, we show that  this problem is unlikely to be {\sf FPT}. 

\begin{restatable}{theorem}{thmpathcontractmain}\label{thm:pathcontract}
	{\strongcontract} is {\sf W[1]}-hard.
\end{restatable}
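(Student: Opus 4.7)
The plan is to give a parameterized reduction from {\sc Multicolored Clique}, which is well known to be $\mathsf{W[1]}$-hard. Given an instance $(G, V_1 \cup \cdots \cup V_k)$ of that problem, I will construct a strongly connected digraph $D$ and an integer $k'=f(k)$ such that $G$ contains a multicolored $k$-clique if and only if there is a sequence of $k'$ arcs whose path-contraction leaves $D$ strongly connected.

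The construction employs two families of gadgets. For each color class $V_i$ I build a \emph{selection gadget}: a directed cycle $C_i$ in which every vertex $v \in V_i$ is associated with a distinguished arc $a_v$, and which is attached via additional arcs to ``witness'' vertices outside the cycle. These attachments are arranged so that path-contracting any arc of $C_i$ \emph{other} than one of the $a_v$ would destroy either an out-arc of the tail or an in-arc of the head that cannot be rerouted, thereby forcing a valid contraction sequence to select exactly one distinguished arc per color class. For each pair $(i,j)$ of color classes I build an \emph{edge-verification gadget} linking $C_i$ and $C_j$. Exploiting the fact that path-contracting $(x,y)$ also destroys every other in-arc of $y$ and every other out-arc of $x$, the verification arcs are placed so that after contracting the pair $(a_u, a_v)$ with $u \in V_i$ and $v \in V_j$, a replacement path between the two cycles survives if and only if $\{u,v\}$ is an edge of~$G$. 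The parameter $k'$ is set to the total number of selection contractions, which depends only on~$k$, and $D$ is strongly connected by construction.

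The main obstacle will be calibrating the edge-verification gadget: it must be rich enough that every true edge of~$G$ provides an explicit alternative route that preserves strong connectivity under the combined contractions, while every non-edge creates an irreparable gap that no contractions elsewhere in $D$ can patch. The asymmetric semantics of path contraction in digraphs --- only the out-arcs of the tail and the in-arcs of the head are destroyed --- means that this gadget must be oriented with care, and the selection gadgets must be coordinated with it so that the ``good'' contractions really are free but the ``bad'' ones really are fatal. Once the gadgets are in place, correctness follows the standard pattern: in the forward direction, a multicolored clique yields an explicit contraction sequence together with alternative paths witnessing strong connectivity of $D \contract S$; in the backward direction, any valid length-$k'$ sequence must contract exactly one distinguished arc per selection gadget (else some witness in that gadget becomes unreachable), and the verification gadgets then force the chosen vertices to be pairwise adjacent in $G$, yielding the required multicolored $k$-clique.
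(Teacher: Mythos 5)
Your submission is a plan rather than a proof: the two gadget families that would carry the entire argument are never constructed, and you yourself identify the edge-verification gadget as an unresolved ``main obstacle.'' Beyond the incompleteness, the approach has a conceptual difficulty you do not confront. Path-contracting an arc $(x,y)$ keeps only the in-arcs of $x$ and the out-arcs of $y$, so every path in $D \contract S$ lifts back to a path in $D$; contractions are purely destructive and never create connectivity. Consequently a gadget cannot ``reward'' adjacency by making a replacement route appear when $\{u,v\}\in E(G)$: all one can encode is conflict constraints of the form ``contracting both of these arcs is fatal.'' To verify a clique this way you would have to place such a conflict gadget for every \emph{non-adjacent} cross-colour pair, i.e.\ you would in effect be reducing from multicoloured \textsc{Independent Set} (equivalently, \textsc{Independent Set} in the complement graph) --- which is exactly the simpler route the paper takes directly: it reduces from \textsc{Independent Set}, attaches to each edge $e=(u,v)$ a hub vertex $\hat e$ whose only connections to the rest of the graph go through $u^-,u^+,v^-,v^+$, so that $\hat e$ is cut off precisely when both $(u^-,u^+)$ and $(v^-,v^+)$ are contracted, and protects all auxiliary vertices with $k+1$ pendant in/out neighbours so that no arc incident to them can belong to a budget-$k$ solution.

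There is also a concrete gap in the selection mechanism. You assert that a valid sequence must contract \emph{exactly one} distinguished arc per colour class, but nothing in the sketch enforces this: cross-class verification gadgets do not prevent a solution from contracting two distinguished arcs in one cycle $C_i$ and none in another, so you would additionally need same-class conflict gadgets together with a tight budget-counting argument (or some other mechanism inside $C_i$ making a second contraction fatal), none of which is specified. Similarly, the claim that contracting any non-distinguished arc of $C_i$ ``destroys an arc that cannot be rerouted'' is exactly the kind of statement that requires an explicit protection device (the paper's $k+1$ pendants) and a case analysis like the paper's second claim; asserting it does not establish it. As it stands the reduction cannot be checked for correctness, so the theorem is not proved by this proposal.
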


\noindent 
We follow up this result by considering a natural vertex-deletion variant of
the problem and extending our {\sf W[1]}-hardness result to this problem as well. In
this variant, the objective is to check for the existence of a set of {\em
exactly} $k$ vertices such that on deleting these vertices from the given
digraph, the digraph stays strongly connected.

\begin{restatable}{theorem}{thmnodedelmain}\label{thm:nodedel}
	{\strongnodedel} is {\sf W[1]}-hard.
\end{restatable}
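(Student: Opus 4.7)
The plan is to give a parameterized reduction from Multicolored Clique (MCC), which is W[1]-hard parameterized by the number $k$ of color classes. Given an MCC instance $(G, V_1, \ldots, V_k)$ with each $V_i$ an independent set of $G$ (a standard assumption enforceable with no loss of generality) and $|V_i|\geq 2$, I construct a strongly connected digraph $D'$ and set the deletion budget to $k'=k$ so that $G$ has a multicolored clique iff $D'$ admits a set $S$ of exactly $k$ vertices with $D' - S$ strongly connected.

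The digraph $D'$ is built from three pieces. First, a backbone vertex $s$ together with, for each color $V_i$, auxiliary vertices $a_i, b_i$ and arcs $s \to a_i$, $b_i \to s$, $a_i \to v$, $v \to b_i$ for every $v \in V_i$; this parallel-paths selection gadget makes $D'$ strongly connected and ensures at least one vertex of each $V_i$ must survive. Second, for every non-edge $\{u,v\}$ of $G$ (in particular for every pair inside some $V_i$) I add a non-edge gadget with two new vertices $s_{uv}, t_{uv}$ and arcs $s_{uv}\to u$, $s_{uv}\to v$, $u\to t_{uv}$, $v\to t_{uv}$, $t_{uv}\to s_{uv}$. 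Third, to every auxiliary vertex $x$ (that is $s$ and every $a_i, b_i, s_{uv}, t_{uv}$) I attach a private 3-cycle $x \to x' \to x'' \to x$ in which $x'$ and $x''$ have no other arcs; deleting any of $x, x', x''$ leaves one of the other two as a source or sink, so no auxiliary or private vertex can lie in any valid solution.

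The correspondence then runs as follows: any solution $S$ with $|S|=k$ lies entirely in $V(G)$ by the private-cycle argument; if $S$ contains both endpoints $u, v$ of some non-edge of $G$, then in the corresponding gadget $s_{uv}$ loses both of its out-arcs to $u$ and $v$ and retains only its arc into its private 3-cycle, so $s_{uv}$ fails to reach $t_{uv}$ in $D'-S$, breaking strong connectivity; hence $S$ induces a clique in $G$. Applied inside each independent $V_i$ this yields $|S \cap V_i|\leq 1$, and $|S|=k$ across $k$ classes then forces exactly one vertex per class, producing a multicolored clique. Conversely, any multicolored clique gives a valid deletion set: each selection gadget retains at least one surviving path from $a_i$ to $b_i$, and each non-edge gadget retains at least one of $u,v$. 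The main obstacle to navigate is preventing cheap auxiliary deletions, since otherwise the budget $k$ could be spent on structural vertices rather than on the clique encoding; the private 3-cycle trick resolves this uniformly and may be viewed as the vertex-deletion analog of the structural padding used in the reduction behind Theorem~\ref{thm:pathcontract}.
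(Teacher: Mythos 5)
There is a genuine gap: your undeletability gadget does not work once the budget is at least $2$, because its size is a constant rather than a function of $k$. Your source/sink argument only rules out solutions that delete \emph{exactly one} vertex of a private $3$-cycle $x \to x' \to x'' \to x$; it says nothing about solutions that delete two or all three of them. In fact, deleting the two pendant vertices $x'$ and $x''$ of \emph{any} gadget always preserves strong connectivity, since these two vertices carry no arcs to the rest of the digraph and lie on no path between other vertices. Hence for every even $k \geq 2$ the constructed instance is a yes-instance (delete the pendant pairs of $\lfloor k/2\rfloor$ distinct gadgets), regardless of whether $G$ has a multicolored clique, so the backward direction of your reduction fails; e.g.\ for $k=2$ a graph with no edge between $V_1$ and $V_2$ still maps to a yes-instance. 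Deleting an entire triple $\{x,x',x''\}$ can also be feasible (for instance the triple of $a_i$ when the vertices of $V_i$ are reachable through non-edge gadgets), so even the claim that the auxiliary vertex $x$ itself is protected is false. The underlying issue is that with an ``exactly $k$'' (or ``at most $k$'') budget, any constant-size private padding can be consumed wholesale by the solution, so undeletability must be enforced by structures whose size grows with $k$.

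This is exactly the point the paper's proof takes care of: it reduces from \textsc{Independent Set} via an intermediate problem with marked (undeletable) vertices, and makes a vertex $w$ undeletable by threading it through a directed cycle of length $k+2$ whose other vertices are new and have no further arcs; then no strongly-connectivity-preserving deletion of fewer than $k+1$ vertices can touch the cycle at all, so within budget $k$ every marked vertex and its padding survive. Your overall architecture (selection gadgets per color class plus non-edge gadgets, forward direction via a multicolored clique) is otherwise sound and could be repaired by replacing each private $3$-cycle with a private directed cycle of length $k+2$ through the auxiliary vertex, after which your case analysis for the backward direction (at most one vertex per non-edge, hence a clique, hence one vertex per class) goes through; but as written the reduction is incorrect.
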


\medskip\noindent
{\bf Our Methodology.}\quad%
Our algorithm for {\twoconndel} builds upon the recent  approach introduced by
Basavaraju~\etal~\cite{basavarajuMRS17} to handle connectivity
preservation problems, in particular the $p$-{\sc $\lambda$-Edge Connected
Subgraph} ($p$-{\sc $\lambda$-ECS}) problem where the objective is to
delete $k$ edges while keeping the graph $\lambda$-edge connected. Call an
edge {\em deletable} (we refer to it as {\em non-critical} in the case of
vertex-connectivity) if deleting it keeps the given (di)graph
$\lambda$-edge connected, {\em undeletable} ({\em critical}) otherwise, and call an edge
\emph{irrelevant} if there is a solution disjoint from the edge.

For an even value of $\lambda$ and a $\lambda$-edge-connected undirected
graph $G$, Basavaraju~\etal~\cite{basavarajuMRS17} proved that unless the
total number of deletable edges is bounded by $O(\lambda k^2$), it is
possible in polynomial time to obtain a set $F$ of $k$ edges such that $G-F$ is still
$\lambda$-edge-connected. This result does not hold for odd values of
$\lambda$ as can be seen, \eg, when~$\lambda=1$ and~$G$ is a cycle. In this
much more involved case, unless the total number of deletable edges is bounded
by $O(\lambda k^3)$, it is possible in polynomial time to obtain either a set $F$ of
$k$ edges such that $G-F$ is still $\lambda$-edge-connected or to identify 
an irrelevant edge.

{\twoconndel} is similar to the case of odd~$\lambda$ as we find either a
solution or an irrelevant edge. The main difference between our {\sc FPT} algorithm
and the one presented by Basavaraju~\etal is the deep structural analysis necessitated
by the shift from edge-connectivity to vertex-connectivity:
While in the former case the failure to find a solution
means that~$G$ can be decomposed into a `cycle-like' structure, in our
case no such simple structure arises. Instead, we perform a careful 
examination of mixed cuts in the graph, each of which comprise precisely 
one critical edge~$e$ and a vertex $w$ which we call the {\em partner}
of~$e$. We show that either a large number of critical edges share a common partner or there is a
large number of critical edges with pairwise distinct partners. In the former
case, we proof the existence of an irrelevant edge while in the latter
case we are able to construct a solution. Our result is based on a non-trivial combination
of several new structural properties of biconnected graphs and critical
edges which we believe is of independent interest and
useful in the study of other connectivity-constrained problems.

The kernel stated in Theorem \ref{thm:kernelbiconnected} relies on the
powerful {\em cut-covering lemma}  of Kratsch and
Wahlstr\"om~\cite{KratschW12} which has been central to the development of
several recent kernelization algorithms~\cite{Kratsch14}. While Basavaraju
\etal obtained a randomized compression for the $p$-{\sc $\lambda$-ECS}
problem using sketching techniques from dynamic graph algorithms, we provide
an alternative approach and show that when dealing with biconnectivity it is
also possible to obtain a (randomized) polynomial {\em kernel}. We believe
that this approach could be applicable for higher values of vertex-
connectivity and for other connectivity deletion problems, as long as one is
able to bound the number of critical or undeletable edges in the given
instance by an appropriate function of the parameter.

\medskip\noindent
{\bf Further related work.} In the  \textsc{Minimum Equivalent Digraph}
problem, given a digraph $D$, the aim is to find a spanning subgraph $H$ of
$D$ with minimum number of arcs such that if there is an $x$-$y$ directed path
in $D$ then there is such a path in $H$ for every pair $x,y$ of vertices  of
$D$. Since it is not hard to solve \textsc{Minimum Equivalent Digraph} for
acyclic digraphs, \textsc{Minimum Equivalent Digraph} for general digraphs can
be reduced to MSSS in polynomial time.  Chapter 12 of the monograph of Bang-
Jensen and Gutin~\cite{bang2008digraphs} surveys pre-2009 results on
\textsc{Minimum Equivalent Digraph}.
The first exact algorithm for the  \textsc{Mnimum Equivalent Digraph} problem,
running in time $2^{O(m)}$,  was given by Moyles and
Thompson~\cite{moyles1969algorithm} in 1969, where $m$ is the number of arcs
in the graph. More recently, Fomin, Lokshtanov, and Saurabh~\cite{fomin2016efficient} gave the
first vertex-exponential algorithm for this problem, \ie an algorithm with a
running time of $2^{O(n)}$.

\section{Preliminaries}

\medskip\noindent
{\bf Graphs.}\quad%
For an undirected graph $G$ and vertex set $S\subseteq V(G)$, we denote by $E(S)$ the set of
edges of $G$ with both endpoints in $S$. For a pair of disjoint vertex sets
$X,Y\subseteq V(G)$, we denote by $E(X,Y)$ the set of edges with one endpoint
in $X$ and the other in $Y$. For a vertex set $X\subseteq V(G)$, we denote by
$N_G(X)$ the set of vertices of $V(G)\setminus X$ which are adjacent to a
vertex in $X$. We denote by $\delta_G(X)$ the set $E(X,V(G)\setminus X)$.
A vertex in a connected undirected graph is a {\em cut-vertex} if deleting
this vertex disconnects the graph. 
A {\em biconnected graph} is a connected graph on two or more vertices having no cut-vertices.

For a directed or undirected path $P$, we denote by $V(P)$ and $E(P)$ the 
set of vertices and edges in~$P$, respectively. We further
denote by $V_{\sf int}(P)$ the set of internal vertices of $P$.

We say that two paths $P_1$ and $P_2$ are {\em
internally vertex-disjoint} if $V_{\sf int}(P_1) \cap V_{\sf int}(P_2)
=\emptyset$. Note that under this definition, a path consisting of a single
vertex is internally vertex-disjoint to any other path.

For two internally vertex-disjoint paths $P_1=v_1,\dots v_t$ and
$P_2=w_1,\dots, w_q$ such that $v_1\neq w_1$ and $v_t=w_1$, we denote by
$P_1+P_2$ the concatenated path $v_1,\dots, v_{t-1},v_t,w_2,\dots, w_q$. When
we deal with undirected graphs, we will abuse this notation and also
use $P_1+P_2$ to refer to the concatenated path that arises when $v_1=w_1$ and
$v_t\neq w_q$ or $v_1=w_q$ and $w_1\neq v_t$ or $w_1=v_t$ and $v_1\neq w_q$.
In short, the two `orientations' of any undirected path are used
interchangeably and when we need to differentiate between the two
orientations, we explicitly say that we are {\em traversing} the path from one
specified endpoint to the other.

\begin{definition}
  Let $G$ be a graph and $x,y\in V(G)$ two vertices. An $x$-$y$ {\em separator} (an $x$-$y$
  {\em cut}) is a set $S\subseteq V(G)\setminus \{x,y\}$ (respectively
  $S\subseteq E(G)$) such that there is no $x$-$y$ path in $G-S$. A {\em mixed
  $x$-$y$ cut} is a set $S\subseteq V(G)\cup E(G)$ such that $|S\cap E(G)|=1$
  and there is no $x$-$y$ path in $G-S$. 
\end{definition}

Let $S\subseteq V(G)\cup E(G)$. We denote by $R_G(x,S)$ the set of vertices in
the same connected component as $x$ in the graph $G-S$.  The reference to
$G$ is dropped if it is clear from the context.

\begin{definition}
  Let $G$ be a graph and $x,y\in V(G)$. Let $\cP$ be a set of internally
  vertex-disjoint $x$-$y$ paths in $G$. Then, we call $\cP$ an $x$-$y$ {\em
  flow}. The {\em value} of this flow is $|\cP|$. We say that
  an edge $e$ {\em participates} in the $x$-$y$ flow $\cP$ if $e\in \bigcup_{P
  \in \cP} P$.
\end{definition}

We denote by $\kappa_G(x,y)$ the value of the maximum $x$-$y$ flow in $G$
with the reference to~$G$ dropped when clear from the context. 

Recall that Menger's theorem states that for distinct \emph{non-adjacent}
vertices~$x$ and~$y$, the size of the smallest $x$-$y$ separator is precisely
$\kappa(x,y)$. We extend the definition of flows to vertex sets as
follows. Let $x\in V(G)$ and $Y\subseteq V(G)$ be such that $x\notin Y$. Let
$\cP$ be a set of paths in $G$ which have an endpoint in $Y$ and intersect
only in $x$. Then, we refer to $\cP$ as an $x$-$Y$ flow, with the value of
this flow defined as $|\cP|$.

\medskip\noindent
{\bf Directed graphs.}\quad%
We will refer to edges in a digraph as {\em arcs}. For a vertex~$x$
in a digraph~$D$ we write~$N^-_D(x)$ and~$N^+_D(x)$ to denote its
in- and out-neighbours, respectively.
A {\em sink} is a vertex with no out-neighbours and a {\em source} is  a vertex
with no in-neighbours. While we will use path-contraction in digraphs only for
single arcs, \ie directed paths of length one, we restate the more general
definition for context.

\begin{definition}[Bang-Jensen and Gutin~\cite{bang2008digraphs}]
  Let $P$ be an $(x,y)$-path in a  directed multigraph $D$. Then, $D \contract
  P$ denotes the multigraph obtained from $D$ by deleting all vertices of $P$
  and adding a new vertex $z$ such that every arc with head $x$ (tail $y$) and
  tail (respectively head) in $V\setminus V(P)$ becomes an arc with head
  (tail) $z$ and the same tail (respectively head).
\end{definition}

\noindent
The path-contraction of a single arc~$(x,y)$ is equivalent to 
identifying the vertices~$x$ and~$y$ as a new vertex~$z$ and then removing
the resulting loop as well as all arcs from~$z$ to~$N^+(x)$ and~$N^-(y)$.

\medskip\noindent
{\bf Parameterized Complexity.}\quad%
An instance of a parameterized problem $\Pi$
is a pair $(I,k)$ where $I$ is the \emph{main part} and $k$ is the
\emph{parameter}; the latter is usually a non-negative integer.  
A parameterized problem is
\emph{fixed-parameter tractable} if there exists a computable function
$f$ such that instances $(I,k)$ can be solved in time $O(f(k)|{I}|^c)$
where $|I|$ denotes the size of~$I$. The class of all fixed-parameter
tractable decision problems is called {\sf FPT} and algorithms which run in
the time specified above are called {\sf FPT} algorithms.

To establish that a problem under a specific parameterization is not
in {\sf FPT} (under common complexity-theoretic assumptions) we provide
\emph{parameter-preserving reductions} from problems known to lie in intractable
classes like~$\sf W[1]$ or~$\sf W[2]$. In such a reduction, an instance~$(I_1,k_1)$
is reduced in polynomial time to an instance~$(I_2,k_2)$
where~$k_2 \leq f(k_1)$ for some function~$f$. In the context of this paper we will
use that {\sc Independent Set} under its natural parameterization (the size of the
independent set) is $\sf W[1]$-hard~\cite{CyganFKLMPPS15}.

A {\em reduction rule} for a parameterized problem $\Pi$  is an algorithm that
given an instance $(I,k)$ of a problem $\Pi$ returns an instance $(I',k')$ of
the \emph{same} problem. The reduction rule is said to be {\em sound} if it
holds that  $(I,k) \in \Pi$ if and only if $(I',k') \in \Pi$. A
\emph{kernelization} is a polynomial-time algorithm that given any instance
$(I,k)$ returns an instance $(I',k')$ such that $(I,k) \in \Pi$ if and only if
$(I',k') \in \Pi$ {\em and} $|I'|+k'\leq f(k)$ for some computable function
$f$. The function $f$ is called the {\em size} of the kernelization, and we
have a polynomial kernelization if $f(k)$ is polynomially bounded in $k$.  A
{\em randomized kernelization} is an algorithm which is allowed to err with
certain probability. That is, the returned instance will be equivalent to the
input instance only with a  certain probability.

\section{Preserving strong connectivity}

In this section, we prove Theorem~\ref{thm:pathcontract} and Theorem~\ref{thm:nodedel}.
%


\thmpathcontractmain*

\begin{proof}
   We reduce {\sc Independent Set} to {\strongcontract}.

	\medskip
	
	\noindent
	{\bf Construction.}   Let $(G,k)$ be an instance of {\sc Independent Set}. We now define a digraph $D$ as follows. We begin with the vertex set of $D$.	
	For every vertex $v\in V(G)$, $D$ has two vertices $v^-,v^+$.  For every edge $e=(u,v)\in E(G)$, the digraph $D$ has $k+2$ vertices $\hat e,\hat e_1,\dots, \hat e_{k+1}$. Finally, there are $2k+4$ special vertices $x,y,x^1,\dots,x^{k+1},y^1,\dots, y^{k+1}$. This completes the definition of $V(D)$. We now define the arc set of $D$ (see Figure~\ref{fig:path_contraction_reduction}).
	\begin{itemize}
\item For every $v\in V(G)$, we add the arc $(v^-,v^+)$ in $D$ .
\item 	For every $i\in [k+1]$,
	we add the arcs $\{(x,x^i),(x^i,x),(y,y^i),(y^i,y),(y,x)\}$. 
	\item For every edge $e=(u,v)\in E(G)$ and $i\in [k+1]$,  we add the arcs $\{(\hat e,\hat e_{i}),(\hat e_i,\hat e),$  $(v^-,\hat e),$ $(\hat e,v^+),$ $(u^-,\hat e),(\hat e,u^+)\}$ in $D$ .
	\item For every $v\in V(G)$, we add the arc $(x,v^-)$ and the arc $(v^+,y)$. 
\end{itemize}
This completes the construction of the digraph $D$. Clearly, $D$ is strongly-connected. 
	
	For an edge $e=(u,v)\in E(G)$, we denote by $\cB_e$ the set of arcs $\{(v^-,\hat e),(\hat e,v^+),$ $(u^-,\hat e),(\hat e,u^+)\}$ and by $\cF_e$, the set of arcs $\cB_e\cup \{(\hat e,\hat e_i),(\hat e_i,\hat e)|i\in [k+1]\}$  $\cup \{(u^-,u^+),(v^-,v^+),$ $(x,v^-),(v^+,y)$, $(x,u^-),(u^+,y),(y,x)\}$. We refer to the subgraph of $D$ induced by $\cF_e$ as the \emph{edge-selection gadget} in $D$ corresponding to $e$ (see Figure~\ref{fig:path_contraction_reduction}). The intuition here is that, as we will prove formally, any solution in $D$ will contain at most one of the two arcs $(u^-,u^+),(v^-,v^+)$.
	
	\medskip
	\noindent
	{\bf Proof of correctness.} We now argue that $(G,k)$ is a yes-instance of {\sc Independent Set} if and only if $(D,k)$ is a yes-instance of {\strongcontract}.
In the forward direction, suppose that $(G,k)$ is a yes-instance of {\sc Independent Set} and let $X\subseteq V(G)$ be a solution. Observe that $S=\{(v^-,v^+) \mid v\in X\}$ is a pairwise vertex-disjoint set of arcs. We claim that $S$ is a solution for the instance $(D,k)$. That is, $|S|\geq k$ and $D \contract S$ is strongly connected. The former is true by definition. We now argue the latter.
	
	\begin{claim}
	$D'=D \contract S$ is strongly connected.
		
	\end{claim}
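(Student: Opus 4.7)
My plan is to verify the claim by describing $D\contract S$ explicitly and then exhibiting, for every vertex $v$, directed paths $x\to v$ and $v\to x$. The first step is to observe that the arcs of $S$ are pairwise vertex-disjoint, so the iterated path-contractions commute, and for each $v\in X$ the contraction of $(v^-,v^+)$ identifies these two vertices into a new vertex $v^*$ that inherits exactly the in-arcs of $v^-$ and the out-arcs of $v^+$. Inspecting the construction, the only in-arc of $v^-$ is $(x,v^-)$ and the only out-arc of $v^+$ is $(v^+,y)$, so in $D'$ the vertex $v^*$ carries precisely the two arcs $(x,v^*)$ and $(v^*,y)$, while every arc of the form $(v^-,\hat e)$ or $(\hat e,v^+)$ for $e$ incident to $v$ in $G$ is discarded by the contraction.

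The second step invokes independence of $X$ to guarantee that each edge-selection gadget still routes traffic through its centre $\hat e$. Fix $e=(u,v)\in E(G)$; since $X$ is independent in $G$, there exists an endpoint $w\in\{u,v\}\setminus X$, and the gadget arcs $(w^-,\hat e)$ and $(\hat e,w^+)$ are untouched by any contraction in $S$ and therefore survive in $D'$. In particular, every $\hat e$ retains at least one surviving in-arc from the $V(G)$-side and one surviving out-arc to the $V(G)$-side.

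With these two observations, strong connectivity reduces to a short enumeration of paths. From $x$ I reach $y$ via $x\to v^-\to v^+\to y$ for any $v\notin X$, or via $x\to v^*\to y$ if $X=V(G)$; every satellite vertex $x^i,y^i,\hat e_i$ is two-way adjacent to its centre $x,y,\hat e$ respectively; each $v^-,v^+$ or $v^*$ is reached from $x$ and reaches $y$ by the arcs identified above; each $\hat e$ is reached from $x$ via $x\to w^-\to \hat e$ and exits via $\hat e\to w^+\to y$ using the endpoint $w\notin X$ from the previous paragraph; finally the arc $(y,x)$ closes every cycle. Hence both $x\to v$ and $v\to x$ paths exist for every $v\in V(D')$.

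I expect the only delicate point to be the second step. A priori, contracting arcs corresponding to both endpoints of an edge of $G$ would strip $\hat e$ of all its $V(G)$-side arcs and risk isolating it from the rest of $D'$; independence of $X$ in $G$ is precisely the hypothesis that prevents this scenario, and once it is established the remaining path constructions follow directly from the definition of $D$.
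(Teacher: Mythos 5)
Your proof is correct and follows essentially the same approach as the paper: the key point in both is that independence of $X$ guarantees each gadget centre $\hat e$ keeps the arcs $(w^-,\hat e)$ and $(\hat e,w^+)$ for some endpoint $w\notin X$, while everything else remains attached to $x$ and $y$ and the arc $(y,x)$ closes the cycle. The only difference is presentational: the paper argues about the subgraph $D-Q$ obtained by deleting the arcs lost in the contractions, whereas you describe the contracted digraph with the merged vertices $v^*$ directly, which amounts to the same argument.
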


	\begin{proof}
	Observe that it is sufficient to prove that $D''=D-Q$ is strongly connected, where $$Q=\bigcup_{v\in X}(N^+(v^-)\cup N^-(v^+))\setminus \{(v^-,v^+)\},$$ and $N^+(v) \mbox{ and } N^-(v) \mbox{ are the sets of out-neighbours and in-neighbours of $v$.}$	
	In other words,  for every arc $(v^-,v^+)\in S$, $Q$ contains all the arcs that are lost when we path-contract this arc. We begin by observing that the set $Q$ is disjoint from $\{(v^-,v^+) \mid v\in V(G)\}$. This follows from the definition of $Q$.
Due to this observation and the presence of the arc $(y,x)$, it follows that the vertices in $${\cal P}=\{(v^-,v^+) \mid v\in V(G)\}\cup  \{x,y,x^1,\dots,x^{k+1},y^1,\dots, y^{k+1}\}$$ occur in a single strongly connected component of $D''$. Hence, it suffices to argue that for every $e\in E(G)$, the vertex $\hat e$ is also in the same strongly connected component of $D''$.
	
	Note that since $X$ is an independent set in $G$, it must be the case that for any $e=(u,v)\in E(G)$, either $(u^-,u^+)\notin S$ or $(v^-,v^+)\notin S$. But this implies that either $\{(u^-,\hat e),(\hat e,u^+)\}\cap Q=\emptyset$ or $\{(v^-,\hat e),(\hat e,v^+)\}\cap Q=\emptyset$, implying that $\hat e$ is also in the same strongly connected component as the vertices in ${\cal P}$. Thus, $D''$ is strongly connected and so is $D'$.  This completes the proof of the claim and hence proves the correctness of  the forward direction of the reduction.
	\end{proof}
	
	\begin{figure}[t]
  \begin{center}
    \includegraphics[scale=0.5]{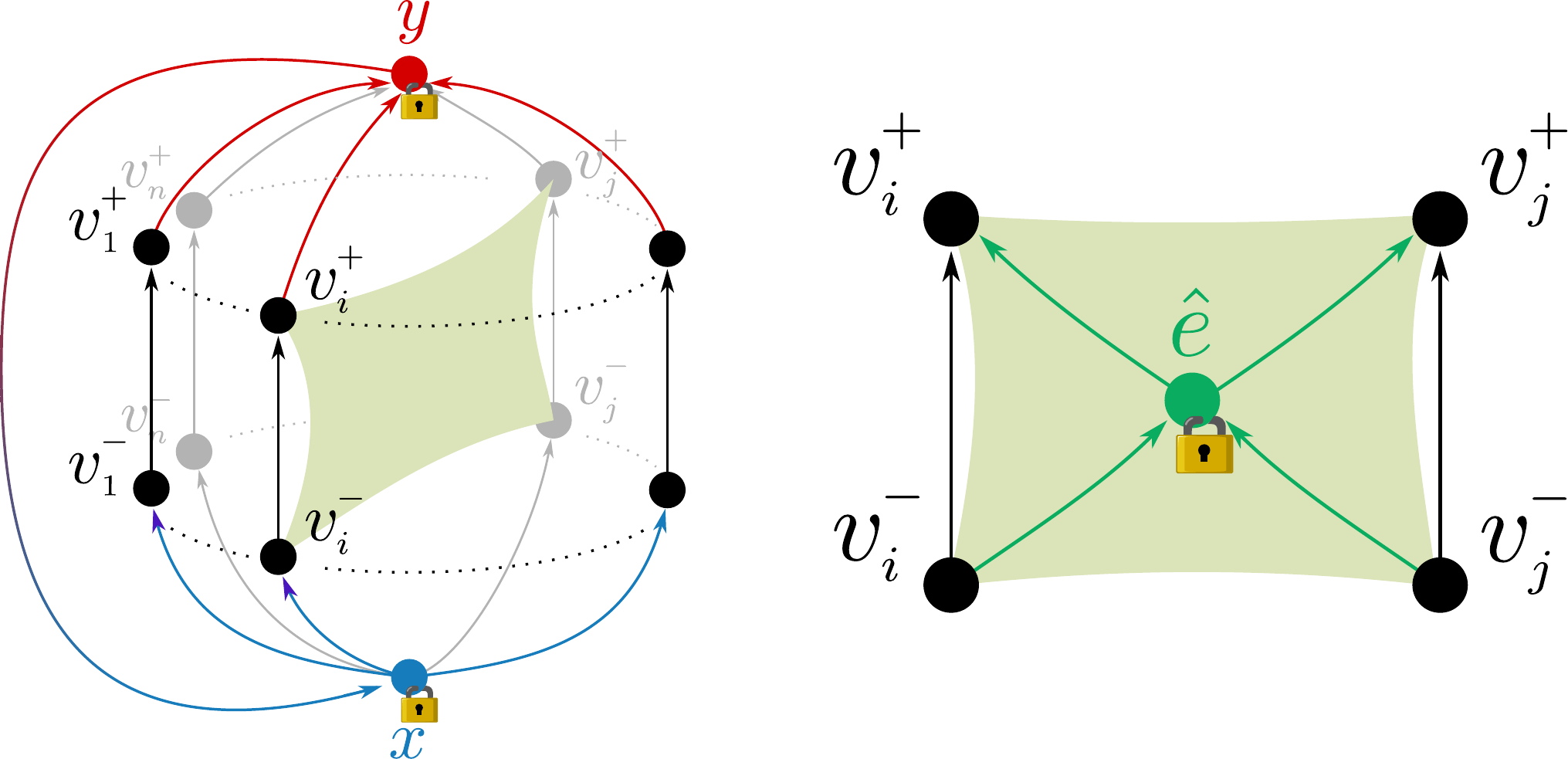}
  \end{center}

  \caption{An illustration of the arcs in the reduced instance of {\strongcontract}. The second figure only contains the arcs of the edge-selection gadget corresponding to the edge $e=(v_i,v_j)\in E(G)$. Vertices with a padlock have additional $k+1$ pendant
  vertices with arcs in both directions.}
  \label{fig:path_contraction_reduction}
\end{figure}

\noindent	
We now consider the converse direction. Suppose that $(D,k)$ is a yes-instance of {\strongcontract} and let $S=\{a_1,\dots, a_k\}$ be a solution for this instance. We require the following claim.	
	
	\begin{claim} For every edge $e=(u,v)\in E(G)$, $|S\cap \{(u^-,u^+),(v^-,v^+)\}|\leq 1$.  Furthermore, $S\subseteq \{(v^-,v^+) \mid v\in V(G)\}$.
	\end{claim}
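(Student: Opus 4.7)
The plan is to establish the stronger second assertion — that $S\subseteq\{(v^-,v^+):v\in V(G)\}$ — first, and then derive the pairwise bound as a consequence. The construction is engineered around \emph{pendant} vertices: each $x^i$ forms a $2$-cycle solely with the hub $x$, each $y^i$ with the hub $y$, and, for every edge $e\in E(G)$, each $\hat e_i$ with the hub $\hat e$. Crucially, each of these clusters contains $k+1$ pendants and no arc of $D$ joins two pendants. I will use this abundance of pendants to rule out every arc in $S$ other than those of the form $(v^-,v^+)$.

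Suppose towards a contradiction that some arc $a\in S$ is not of the form $(v^-,v^+)$. A short case analysis over the possible types of $a$ — pendant-incident arcs such as $(x,x^i)$ or $(\hat e_i,\hat e)$; the spine arc $(y,x)$; the hub-attachment arcs $(x,v^-)$ and $(v^+,y)$; and the arcs of $\cB_e$ such as $(v^-,\hat e)$ or $(\hat e,u^+)$ — shows that in each case, contracting $a$ annihilates either every out-arc or every in-arc of one of the hubs $h\in\{x,y,\hat e\}$. For instance, contracting $(x,v^-)$ destroys every $(x,x^i)$, and contracting $(v^-,\hat e)$ destroys every $(\hat e_i,\hat e)$. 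This wipes out at least $k$ of the $k+1$ arcs tying the pendants of $h$ to $h$, leaving at least $k$ pendants with no incoming (or no outgoing) arc. Since no arc of $D$ joins two pendants, each further contraction can absorb at most one such wounded pendant into its (growing) hub. The bad arc $a$ has already consumed one of the $k$ slots of $S$, so at most $k-1$ of the $\geq k$ wounded pendants can be absorbed, and any unabsorbed wounded pendant immediately violates strong connectivity of $D\contract S$ — a contradiction. Hence $S\subseteq\{(v^-,v^+):v\in V(G)\}$.

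For the pairwise bound, suppose both $(u^-,u^+)$ and $(v^-,v^+)$ lie in $S$ for some edge $e=(u,v)\in E(G)$. The contraction of $(u^-,u^+)$ deletes $(u^-,\hat e)$ as an out-arc of $u^-$ distinct from the contracted arc, and $(\hat e,u^+)$ as an in-arc of $u^+$ distinct from it; symmetrically, contracting $(v^-,v^+)$ deletes $(v^-,\hat e)$ and $(\hat e,v^+)$. By the assertion just proved, every other arc of $S$ has the form $(w^-,w^+)$ with $w\notin\{u,v\}$, and such a contraction touches neither $\hat e$ nor any $\hat e_i$. Consequently, in $D\contract S$ the set $\{\hat e,\hat e_1,\dots,\hat e_{k+1}\}$ retains only arcs internal to itself, forming a closed component with no arcs to the rest of the digraph — again contradicting strong connectivity. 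The delicate point in the whole argument is the pendant-counting in the second paragraph: it relies on pendants being pairwise non-adjacent in $D$ so that each contraction can heal at most one pendant, and the $(k+1)$-pendants versus $k$-contractions gap forces some wounded pendant to survive.
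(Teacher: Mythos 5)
Your proof is correct and follows essentially the same route as the paper's: a case analysis showing that contracting any arc other than one of the form $(v^-,v^+)$ destroys all in-arcs or all out-arcs of one of the hubs $x$, $y$, $\hat e$, so that with only $k$ contractions against $k+1$ pendant $2$-cycles some pendant survives as a source or sink, together with the observation that contracting both $(u^-,u^+)$ and $(v^-,v^+)$ removes all of $\cB_e$ and cuts the $\{\hat e,\hat e_1,\dots,\hat e_{k+1}\}$ cluster off from the rest of the digraph. The only difference is organizational: you establish $S\subseteq\{(v^-,v^+)\mid v\in V(G)\}$ first and then deduce the pairwise bound (which in fact tidies up the paper's counting in the $(x,v^-)$-type cases), whereas the paper argues the two statements in the opposite order.
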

	
	\begin{proof}
	For the first statement, suppose to the contrary that $S$ contains both the arcs $(u^-,u^+)$ and $(v^-,v^+)$ for some $e=(u,v)\in E(G)$. Then, observe that in the graph $D \contract S$, the arcs in the set $\cB_e$  
	are absent. Since $\cB_e$ contains all arcs incident to $\hat e$ except the ones incident to $\hat e_i$ for $i\in [k+1]$, this disconnects the undirected graph underlying $D \contract S$, implying that $S$ is not a solution, a contradiction.

	For the second statement, we argue that no arc incident to $x,y$ or $\{\hat e \mid e\in E(G)\}$ can be in $S$. Suppose to the contrary that for some $i\in [k+1]$, the arc $(x,x^i)\in S$. Then, the arcs from $x$ to $\{v^- \mid v\in V(G)\}$ are all absent from $D'$, implying that $D'$ is not strongly connected, a contradiction. On the other hand, if for some $v\in V(G)$, we path-contract the arc $(x,v^-)$, the arc from $x$ to $x^i$ is absent in $D \contract S$ for every $i\in [k+1]$. Since $|S|\leq k$, there is at least one $i\in [k+1]$ such that the arc $(x,x^i)$ is not in $S$. Since the arc $(x,x^i)$ is absent from $D \contract S$, it follows that it is not strongly-connected, a contradiction. Finally, if $S$ contains the arc $(y,x)$, the arc $(x^i,x)$ is not in $D \contract S$ for any $i\in [k+1]$, implying that it is not strongly-connected for the same reason as that in the previous case. Hence, we conclude that no arc incident on $x$ is in $S$. The argument for $y$ is analogous and hence we do not address it explicitly.

	Suppose that for some $e=(u,v)\in E(G)$ and $i\in [k+1]$, there is an arc in $\{(\hat e,\hat e_i), (\hat e_i, \hat e)\}$ which is in $S$. Observe that in the former case, the arcs $(\hat e,u^+)$ and $(\hat e,v^+)$ are absent in $D \contract S$, implying that the new vertex is a sink, a contradiction. In the latter case, the new vertex is a source, a contradiction. Now, suppose that $S$ contains an arc in $\cB_e$. Then, for some $i\in [k+1]$, the vertex $\hat e_i$ is left as a source or sink  in $D \contract S$, a contradiction. This completes the proof of the claim.
	\end{proof}

	\noindent
	The claim above implies that if $X$ is a solution for the reduced instance of {\strongcontract}, then the set $S$ of arcs corresponds  independent set in $G$. In other words, $(G,k)$ is a yes-instance of {\sc Independent Set}.
	This proves the correctness of the reduction and  completes the proof of the theorem. 
\end{proof}

\noindent
We can prove a similar result for the {\strongnodedel} problem. The problem is formally defined as follows. 

\begin{problem}[$k$]{\strongnodedel}
  \Input & A strongly connected digraph $D$ and an integer $k$. \\
  \Prob  & Is there a vertex set $S$ of size (exactly) $k$ 
             such that the graph  $D-S$  is  strongly connected? 
\end{problem}

\noindent
We have to require ``exactly $k$'' rather than ``at least $k$'' since
otherwise we could delete all but one vertices of $D$ and get a trivially 
strongly connected digraph.

\thmnodedelmain*

\begin{proof}
We will again use a reduction from  {\sc Independent Set}. Let $G$ be a graph, an input of {\sc Independent Set} with parameter $k$. 
We first reduce {\sc Independent Set} to {\sc Vertex-deletion Preserving Connectivity with Undeletable Vertices}: Given a connected graph $H$ with some vertices marked and parameter $k$, is there $k$ unmarked vertices in $H$ whose deletion keeps $H$ connected? To construct $H$, start from $G$ with all vertices unmarked. Subdivide every edge of $G$ with a marked vertex. Add another marked vertex $x$ with edges to all unmarked vertices. 
It is easy to see that the reduction is correct since deleting two unmarked vertices in $H$ which are adjacent in $G$ leaves the corresponding subdivision vertex isolated. 

Now we reduce {\sc Vertex-deletion Preserving Connectivity with Undeletable Vertices} to \strongnodedel{}. Replace every edge $uv$ of $H$ by arcs $uv$ and $vu$,  unmark every marked vertex $w$ of $H$ and replace it by a directed cycle of length $k+2$ containing $w$ (all other vertices of the cycle are new).  Denote the resulting digraph by $D$; note that it is strongly connected. 
To see the correctness, it suffices to observe that we cannot delete less than $k+1$ vertices of any directed cycle of length $k+2$ and keep $D$ strongly connected. This completes the proof of the theorem. 
\end{proof}

\section{Edge deletion to biconnected graphs}\label{sec:biconnectivity}

In this section, we present our {\sf FPT} algorithm for the {\twoconndel} problem
on undirected graphs. Recall that the problem is defined as follows:

\begin{problem}[$k$]{\twoconndel}
  \Input & A biconnected graph $G$,  $k\in \naturals$, 
           $w^*\in \reals_{\geq 0}$ and a function $w:E(G)\to \reals_{\geq 0}$. \\
  \Prob  & Is there a set $S \subseteq E(G)$ of size at most $k$ such that 
           $G - S$ is biconnected and $w(S) \geq w^*$? 
\end{problem}

\noindent
We refer to a set $S\subseteq E(G)$ such that $G-S$ is biconnected as a {\em
biconnectivity deletion set} of $G$. For an instance $(G,k,w^*,w)$ of
{\twoconndel} and a biconnectivity deletion set $S$ of $G$, we say that $S$ is
a {\em solution} if $|S|\leq k$ and $w(S)\geq w^*$. The main result of this
section is the following.

\thmbiconnected*

\noindent
We will first make a short digression in order to define the notion of
critical edges and list certain structural properties that will be required in
this and the following section.

\subsection{Properties of critical edges}

\begin{definition}\label{def:critical}
	We denote by $\kappa(G)$ the vertex-connectivity of a graph $G$. Let $G$ be a $\rho$-vertex connected   graph. 
	An edge $e\in E(G)$ is called $\rho$-{\em critical} if $\kappa(G-e)<\rho$.	We denote by ${\sf Critical^\rho_G(e)}$ the subset of $E(G)$ comprising edges which are $\rho$-critical in $G-e$ but not in $G$. 
	We denote by $\critical^\rho_G(\emptyset)$ the set of edges which are already $\rho$-critical in $G$. 
	In all notations, we ignore the explicit reference to $G$ and $\rho$ when these are clear from the context.
	We say that $e$ is $\rho$-{\em  critical for} a pair of vertices $u,v$ in $G$ if $u$ and $v$ are non-adjacent and $e$ participates in every  $u$-$v$ flow of value $\rho$ in $G$. 
\end{definition}

\noindent
The following lemma gives a useful structural characterization of edges which
become $\rho$-critical upon the deletion of a particular edge of the graph.

\begin{figure}[t]
  \begin{center}\includegraphics[scale=0.4]{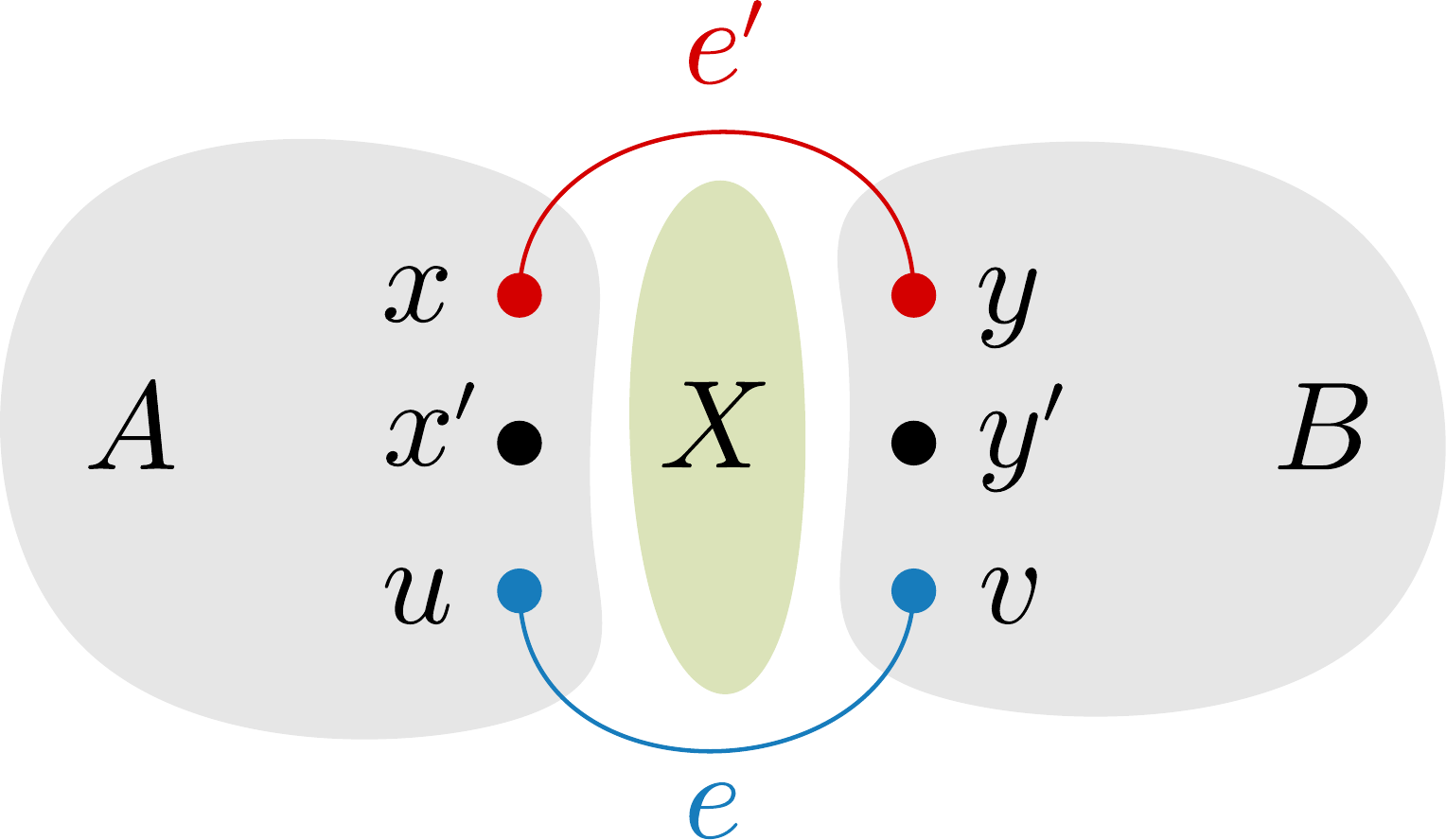}
  	
  \end{center}

  \caption{An illustration of the edges $e,e'$ and the sets $A$ and $B$ in the proof of Lemma~\ref{lem:full_existence}.  Observe that there are no more edges with one endpoint each in $A$ and $B$. }
  \label{fig:first_separation}
\end{figure}

\begin{lemma}\label{lem:full_existence}
	Let $G$ be a $\rho$-vertex connected graph. Let $e=(x,y)$ and $e'=(u,v)$ be 
  distinct non-critical edges, \ie they are not in $\critical^\rho_G(\emptyset)$. Then the following are equivalent:
	
	\begin{enumerate}\item $e'\in \critical^\rho_G(e)$. 
	\item There is a pair of vertices $x',y'\in V(G)$ and a mixed $x'$-$y'$ cut $X$ of size $\rho$ in $G-e$, where $e'\in X$.
\item $e'$ participates in  every $x$-$y$ flow of value $\rho$ in $G-e$.
\end{enumerate}
\end{lemma}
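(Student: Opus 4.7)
The plan is to show the cyclic implications $(1) \Rightarrow (2) \Rightarrow (3) \Rightarrow (1)$. Throughout, write $H := G - e$; since $e$ is non-critical, $H$ is $\rho$-vertex connected. Two standard facts will be used repeatedly: (a) removing a single edge from a graph drops $\kappa(\cdot)$ by at most one, and (b) $S_0 \subseteq V(H)$ is a vertex cut of $H - e'$ separating a pair $u, v$ if and only if $S_0 \cup \{e'\}$ is a mixed $u$-$v$ cut of $H$ (of size $|S_0| + 1$).

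$(1) \Rightarrow (2)$. Combining (a) with the hypothesis $\kappa(H - e') < \rho$ forces $\kappa(H - e') = \rho - 1$ exactly. Pick a non-adjacent pair $x', y'$ of $H - e'$ realising this connectivity, and let $S_0$ be a minimum $x'$-$y'$ vertex cut in $H - e'$, so $|S_0| = \rho - 1$. Then $X := S_0 \cup \{e'\}$ is a mixed $x'$-$y'$ cut of $H$ of size exactly $\rho$.

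$(2) \Rightarrow (3)$ is the core step. Let $X = S_0 \cup \{e'\}$ be a mixed $x'$-$y'$ cut of size $\rho$ in $H$, set $A := R_H(x', X)$ and $B := V(H) \setminus (A \cup S_0)$, so that $y' \in B$. Since $|S_0| = \rho - 1 < \kappa(H)$, the set $S_0$ alone cannot disconnect $H$, which forces $e' = (a, b)$ with $a \in A$ and $b \in B$. The main obstacle is the analogous structural claim for the original edge $e = (x, y)$: one of its endpoints lies in $A$, the other in $B$. Suppose not; by symmetry, assume both endpoints of $e$ lie in $A \cup S_0$. Then in $G - e' - S_0$ the edge $e$ is either internal to $A$ or killed by removing $S_0$, so $A \setminus S_0$ and $B$ remain in different components. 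The pair $x' \in A \setminus S_0$, $y' \in B$ is still non-adjacent in $G - e'$ (because $e$ has no endpoint in $B$, hence $e \neq (x', y')$) and is separated by $S_0$, which gives $\kappa(G - e') \leq \rho - 1$ and contradicts the non-criticality of $e'$ in $G$. Hence (up to relabelling) $x \in A$, $y \in B$. Now $X$ is a mixed $x$-$y$ cut of $H$ of size $\rho$; replacing $e'$ in $X$ by whichever endpoint of $e'$ is not in $\{x, y\}$ (such an endpoint exists because $e \neq e'$) yields an $x$-$y$ vertex cut of $H$ of size $\rho$, so $\kappa_H(x, y) \leq \rho$. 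Combined with $\kappa_H(x, y) \geq \rho$ (from non-adjacency in $H$ and $\rho$-connectivity), $\kappa_H(x, y) = \rho$. Meanwhile $S_0$ is an $x$-$y$ vertex cut of $H - e'$ of size $\rho - 1$, so $\kappa_{H - e'}(x, y) < \kappa_H(x, y)$. Any $x$-$y$ flow of value $\rho$ in $H$ avoiding $e'$ would live in $H - e'$, contradicting this strict inequality; hence every such flow uses $e'$.

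$(3) \Rightarrow (1)$. Because internally vertex-disjoint paths cannot share any internal vertex and $e' \neq (x, y)$, at most one path of any $x$-$y$ flow in $H$ can use the single edge $e'$. So if $\kappa_H(x, y) > \rho$, then discarding this at-most-one path from a maximum flow leaves at least $\rho$ internally vertex-disjoint paths avoiding $e'$, contradicting (3). Hence $\kappa_H(x, y) = \rho$, and (3) then implies $\kappa_{H - e'}(x, y) \leq \rho - 1 < \rho$. Thus $\kappa(H - e') < \rho$, so $e'$ is $\rho$-critical in $H = G - e$ but not in $G$; this is exactly $e' \in \critical^\rho_G(e)$.
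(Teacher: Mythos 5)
Your proof is correct and follows essentially the same route as the paper's: the cyclic chain $(1)\Rightarrow(2)\Rightarrow(3)\Rightarrow(1)$, with $(2)\Rightarrow(3)$ argued via the decomposition into $A$, $S_0$, $B$ and showing that both $e'$ and $e$ must cross between $A$ and $B$ (your contradiction with the non-criticality of $e'$ in $G$ is exactly the paper's argument). The only differences are cosmetic: your detours establishing $\kappa_H(x,y)=\rho$ in steps $(2)\Rightarrow(3)$ and $(3)\Rightarrow(1)$ are harmless but unnecessary, since a $\rho$-flow avoiding $e'$ already contradicts the size-$(\rho-1)$ separator $S_0$ directly.
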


\begin{proof} We prove that $(1) \Rightarrow (2)$, $(2)\Rightarrow (3)$ and $(3)\Rightarrow (1)$.
 Consider the first implication. Since $e'\in \critical^\rho_{G}(e)$, it must be the case that there are vertices $x',y'$ such that $e'$ is $\rho$-critical for the pair $x',y'$. That is,  $x'$ is non-adjacent to $y'$ and $e'$ is $\rho$-critical for the pair $x',y'$ in $G'=G-e$. That is, $e'$ participates in every $x'$-$y'$ flow of value $\rho$ in $G-e$. As a result, the maximum value of any $x'$-$y'$ flow in $G-e-e'$ is precisely $\rho-1$. By Menger's theorem, this implies the presence of a set $S\subseteq V(G)$ of size $\rho-1$ which intersects all $x'$-$y'$ paths in $G-e-e'$. Setting  $X=S\cup \{e'\}$  completes the argument for the first implication.

Consider the second implication. Let $A=R_{G-e}(x',X)$ and let $B=V(G)\setminus (A\cup S)$ (see Figure~\ref{fig:first_separation}), where $S=X\setminus \{e'\}$. Observe that if $u,v\in A$ or $u,v\in B$, then $X\setminus \{e'\}$ would be an $x'$-$y'$ separator of size $\rho-1$ in $G-e$, a contradiction to $G-e$ being $\rho$-vertex connected. Hence, it must be the case that 
either $u\in A$ and $v\in B$ or vice-versa. We assume without loss of generality that $u\in A$ and $v\in B$.

 By a similar argument,  since $e'$ is not $\rho$-critical in $G$ but is $\rho$-critical in $G-e$, it must be the case that the edge $e$ also has one endpoint in $A$ and one endpoint in $B$.  Again, we assume without loss of generality that  $x\in A$ and $y\in B$.	
	But now, observe that in the graph $G-e-S$, every $x$-$y$ path contains the edge $e'$.   Since $|S|=\rho-1$ and $G-e$ is $\rho$-connected,  we conclude that $e'$ participates in every $x$-$y$ flow of value $\rho$ in $G-e$. This completes the argument for the second implication.
	
	For the final implication, observe that while $G-e$ is $\rho$-connected, the fact that $e'$ participates in every $x$-$y$ flow of value $\rho$ in $G-e$ implies that $e'$ is $\rho$-critical in $G-e$. Since $e'$ is by definition, not $\rho$-critical in $G$, we conclude that $e'\in \critical^\rho_G(e)$. This completes the proof of the lemma.	
\end{proof}

\subsection{The {\sf FPT} algorithm for {\twoconndel}}

In this section, we will prove Theorem~\ref{thm:biconnected} by giving an algorithm for a more general version of the {\twoconndel} problem where the input also includes a set $E^\infty\subseteq E(G)$ and the objective is to decide whether there is a solution disjoint from this set. Henceforth, instances of {\twoconndel} will be of the form $(G,k,w^*,w,E^\infty)$ and any solution $S$ is required to be disjoint from $E^\infty$. We will refer to edges of $E(G) \setminus E^\infty$ as \emph{potential solution edges}. We say that a potential solution edge is \emph{irrelevant} if 
 either the instance has no solution, or has a solution that does not contain $e$. For an instance $I=(G,k,w^*,w,E^\infty)$ and $r\in \naturals$, we denote by $\heavy_I(r)$ the \emph{heaviest} $r$ potential solution edges of $G$ with respect to the function $w$. While this set is not necessarily unique (if multiple edges have the same weight, i.e., the same image under the function $w$), we will define $\heavy_I(r)$ as the first $r$ edges of \emph{a fixed  arbitrarily chosen ordering} of the edges of $G$ in non-increasing order of their weights. 
  If $I$ is clear from the context, we simply write $\heavy(r)$ when referring to $\heavy_I(r)$.

Since we will only be dealing with biconnected graphs in this section, we will also  drop the explicit reference to $\rho$ in the notations from Definition~\ref{def:critical}. For instance, when we say that an edge is critical (non-critical), we imply that it is 2-critical (not 2-critical respectively). 
%
%
%
Observe that no edge from the set $\critical_G(\emptyset)$ can be part of a
solution. As a result, we assume without loss of generality that for any
instance $(G,k,w^*,w,E^\infty)$, the set $\critical_G(\emptyset)$ is contained
in $E^\infty$.  Furthermore, since the edges in $E^\infty$ can never be part
of a solution, we assume without loss of generality that for every edge $e\in
E^\infty$, $w(e)=0$. The proof of Theorem~\ref{thm:biconnected} is based on
the following lemma which states that either a) the number of potential
solution edges in the instance is already bounded polynomially in $k$, or b) a
`small' set of the heaviest edges in the instance must intersect a solution, or
c) there is  an irrelevant edge which can be found in polynomial time. For 
ease of presentation, let use define the polynomial $\mu(x) := 20x^3+46x^2+x$
for the rest of this section.

\begin{lemma}\label{lem:biconnected_bound_on_non_critical_edges} Let $I=(G,k,w^*,w,E^\infty)$ be an instance of {\twoconndel}. If  
$|E(G)\setminus E^\infty|> \mu(k)$, then the set $\heavy(\mu(k))$ contains either a solution edge or an irrelevant edge which can be computed in polynomial time. 
\end{lemma}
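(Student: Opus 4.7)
My plan is to proceed via a structural case analysis on the \emph{partners} of critical edges within the heavy set $H := \heavy(\mu(k))$. The purely existential version of the claim is essentially automatic: if some solution $S^*$ is disjoint from $H$, then every edge in $H$ is irrelevant, and otherwise some edge in $H$ lies in a solution. The substantive content of the lemma is turning this into a polynomial-time procedure that outputs a concrete witness edge in $H$.

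I would begin with a natural weight-greedy construction: process $H$ in non-increasing weight order and maintain a tentative deletion set $F$, adding $e\in H$ to $F$ whenever $e$ is still non-critical in $G-F$. If this produces a set $F\subseteq H$ of size $k$ with $w(F)\geq w^*$, then $F$ is itself a solution and every $e\in F$ is a solution edge; we return any such $e$. Otherwise a large subset $H'\subseteq H$ of heavy edges must have been \emph{skipped} because at the moment they were considered they were critical in the intermediate graph $G-F_i$. By Lemma~\ref{lem:full_existence}, each such $e\in H'$ admits a \emph{partner} vertex $w_e$ such that $\{w_e,e\}$ is a mixed cut of size~$2$ in $G-F_i$.

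Since $|H'|=\Omega(k^3)$ by the choice of $\mu(k)$, a pigeonhole argument splits $H'$ into two regimes. In the first, some vertex $w^\star$ is the partner of $\Omega(k^2)$ edges of $H'$; the shared partner forces all these critical edges to lie in a very constrained block-like configuration around $w^\star$, so that only a bounded number of them can be simultaneously deleted without destroying biconnectivity. From this rigidity I would identify a specific edge in the shared-partner family whose removal from candidacy still leaves enough flexibility around $w^\star$ to complete any hypothetical solution, and declare it irrelevant. In the second regime, $H'$ contains $\Omega(k)$ edges with pairwise distinct partners. The distinctness of partners provides enough independence between the corresponding mixed cuts that one can carefully select $k$ heavy edges whose simultaneous deletion keeps $G$ biconnected, producing a solution contained in $H$ and hence a solution edge.

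The main obstacle will be the structural analysis in each regime, and especially the shared-partner case: translating ``common partner $w^\star$'' into a precise polynomial-time identification of an irrelevant edge requires carefully examining how the block decomposition of $G-F'$ around $w^\star$ evolves as edges of the family are removed, and ruling out that the chosen edge is forced into every solution. The distinct-partner case is more constructive but also delicate, since one must verify that distinct partners truly imply compatibility—that the corresponding edges can indeed be removed simultaneously—which I expect to reduce to a flow/uncrossing argument grounded in Menger's theorem and Lemma~\ref{lem:full_existence}. The precise coefficients in $\mu(k)=20k^3+46k^2+k$ will emerge from the pigeonhole thresholds between the two regimes and from the bookkeeping that tracks greedy-rejected edges against the budget of residual potential solution edges.
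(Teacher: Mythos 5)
Your high-level plan (weight-greedy on $\heavy(\mu(k))$, then a dichotomy ``many critical edges share a partner $\Rightarrow$ irrelevant edge'' versus ``many distinct partners $\Rightarrow$ a solution inside the heavy set'') is the same skeleton as the paper's, but as written it has two genuine gaps. First, your case analysis after the greedy phase is incomplete: if the greedy does return $k$ edges $F\subseteq\heavy(\mu(k))$ but $w(F)<w^*$, it does \emph{not} follow that ``a large subset $H'$ must have been skipped'' --- the greedy may have skipped only a couple of very heavy edges (whose deletion it blocked by its own earlier choices) and still filled up its $k$ slots with light edges, so $|H'|$ can be $O(1)$ and your pigeonhole never starts. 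The paper closes exactly this hole with Lemma~\ref{lem:greedy_yes_global_yes}: \emph{any} size-$k$ biconnectivity deletion set inside $\heavy(\mu(k))$ suffices, because if some solution avoided the heavy set one could exchange it edge-for-edge against the heavier set $F$; hence in this branch one may conclude that a solution edge lies in $\heavy(\mu(k))$ without ever requiring $w(F)\geq w^*$. Your proposal is missing this exchange argument.

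Second, and more centrally, you attach a partner $w_e$ to each skipped edge with respect to the \emph{varying} intermediate graphs $G-F_i$, and then pigeonhole over these partners. The paper's argument only works after a localization step you omit: since more than $\mu(k)-k$ heavy edges are skipped over at most $k$ greedy steps, some \emph{single} step exists where deleting one edge $e=f_i=(x,y)$ from one graph $G'=G-\hat S_{i-1}$ makes $\geq(\mu(k)-k)/k$ heavy edges critical (Observation~\ref{obs:exists_good_index}); Lemma~\ref{lem:full_existence} then places \emph{all} of these edges on a single $x$-$y$ flow $\{P_1,P_2\}$ in $G'-e$, so that all partners lie on $P_2$ and inherit the linear order exploited by Observation~\ref{obs:partner_switching}, the segment/component structure, and the ``clean stretch'' machinery. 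Without a common flow, partners of different skipped edges live in unrelated cuts of unrelated graphs and neither of your two regimes has the rigidity you invoke. Finally, the shared-partner regime --- which you defer as ``the main obstacle'' --- is precisely where the paper's technical work lies (unaffected components, a clean stretch of length $2k+4$, and the swap argument of Lemma~\ref{lem:irrelevant_edge}); note in particular that in the weighted setting the irrelevant edge must be chosen as a \emph{minimum-weight} edge of the stretch interior, since the swap replaces it by a stretch edge of weight at least as large --- a constraint your sketch does not account for.
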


\noindent
Given Lemma~\ref{lem:biconnected_bound_on_non_critical_edges},
Theorem~\ref{thm:biconnected} is proved as follows. Let
$I=(G,k,w,w^*,E^\infty)$ be an instance of {\twoconndel}. If the  number of
potential solution edges in this instance is already bounded by $\mu(k)$, then
we simply enumerate all $k$-sized subsets of this set (there are
$2^{O(k\log k)}$ choices) and check in polynomial time whether one of these subsets is a
solution. Otherwise, we invoke
Lemma~\ref{lem:biconnected_bound_on_non_critical_edges} and either correctly
conclude that the set $\heavy(\mu(k))$ contains a solution edge,  or we
compute an irrelevant edge $e$ in polynomial time. In the first case we branch
on the set $\heavy(\mu(k))$, reduce the budget $k$ by 1 and the target weight
$w^*$ accordingly and recursively solve the resulting instance. In the second
case, we add the edge $e$ to the set $E^\infty$ (thus decreasing the set of
potential solution edges) and repeat.

\begin{remark}\label{rem:diff_strategy}
 There is also an alternative strategy to the above, as follows. Let $S$ be the set of all edges of weight at least $w^*/k$. 
  Clearly $S$ must be non-empty and any solution must intersect $S$. If $|S| \leq \mu(k)$, then we branch on $S$ as above. 
  Otherwise, we will be able to either find a biconnectivity deletion set $S' \subseteq S$ with $|S'|=k$ or an irrelevant edge in $S$ as in Lemma \ref{lem:biconnected_bound_on_non_critical_edges}. 
  In the former case, $S'$ is already a solution; in the latter case, we proceed according to 
  the strategy above. Thus, this alternative strategy yields a slightly simpler proof, contains one less branching step and will be used in the kernelization algorithm in  Subsection~\ref{subsubsec:kernel}. 
  On the other hand, the strategy above does not explicitly depend on $w^*$, and therefore always gives a maximum-weight solution.
  In either case, the main technical challenges in the {\sf FPT} algorithm are exactly the same.
\end{remark}

\noindent
The rest of this section is devoted to proving Lemma~\ref{lem:biconnected_bound_on_non_critical_edges}. In order to do so, we will present a greedy algorithm that runs in polynomial time and, assuming $|E(G) \setminus E^\infty|>\mu(k)$, will either produce a biconnectivity deletion set of size $k$ contained strictly within $\heavy(\mu(k))$, or it will identify an irrelevant edge. In the former case, we will argue that this implies that there is always a solution intersecting $\heavy(\mu(k))$. 
More precisely, the algorithm will  delete one potential solution edge from $\heavy(\mu(k))$ at a time (while preserving biconnectivity), and will trace in each step the number of edges of $\heavy(\mu(k))$ that become critical due to the removal of such an edge $e$, i.e., the size of the set $\critical_{G'}(e)\cap \heavy(\mu(k))$ where $G'$ is the subgraph of $G$ remaining after deleting the edges before $e$.  We will then show that if $|\critical_{G'}(e)\cap \heavy(\mu(k))|\geq \frac{\mu(k)-k}{k}$, then $G$ contains a special configuration from which we can either recover the required biconnectivity deletion set or identify an irrelevant edge.

\subsubsection{Preliminary results}\label{subsubsec:greedy}

From now on, we assume that the given instance has more than $\mu(k)$ potential solution edges and begin by proving the following lemma which shows that if we find \emph{some} biconnectivity deletion set of size $k$ within $\heavy(\mu(k))$, then there is a solution intersecting $\heavy(\mu(k))$.

\begin{lemma}\label{lem:greedy_yes_global_yes}
Let $I=(G,k,w^*,w,E^\infty)$ be an instance of {\twoconndel} and let $S\subseteq \heavy(\mu(k))$ be a biconnectivity deletion set of size $k$. If $I$ is a yes-instance, then there is a solution for $I$ intersecting the set $\heavy(\mu(k))$.
\end{lemma}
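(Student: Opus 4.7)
\medskip\noindent
The plan is to prove the lemma via a straightforward weight-exchange argument, showing that $S$ itself can serve as the desired solution. Assume $I$ is a yes-instance and fix any solution $S^{\star}$ for $I$, so that $|S^{\star}|\leq k$, $w(S^{\star})\geq w^{\star}$, $S^{\star}\cap E^\infty=\emptyset$, and $G-S^{\star}$ is biconnected. If $S^{\star}\cap \heavy(\mu(k))\neq \emptyset$ then we are done immediately, as $S^{\star}$ already intersects $\heavy(\mu(k))$. Hence we may assume $S^{\star}\cap \heavy(\mu(k))=\emptyset$, and the aim is to deduce that $S$ itself is a solution that (trivially) intersects $\heavy(\mu(k))$.

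The key observation is that, by definition of $\heavy(\mu(k))$ as a prefix of a fixed non-increasing ordering of the potential solution edges $E(G)\setminus E^\infty$, every edge in $\heavy(\mu(k))$ has weight at least that of every potential solution edge outside $\heavy(\mu(k))$. Since $S\subseteq \heavy(\mu(k))$ and $S^{\star}\subseteq E(G)\setminus (E^\infty\cup \heavy(\mu(k)))$, this gives $w(e)\geq w(f)$ for every $e\in S$ and $f\in S^{\star}$. Pairing the $|S^{\star}|\leq k$ edges of $S^{\star}$ with (any) $|S^{\star}|$ edges of the size-$k$ set $S$, a termwise comparison yields $w(S)\geq w(S^{\star})\geq w^{\star}$.

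It remains to verify that $S$ satisfies the remaining defining properties of a solution. By hypothesis $|S|=k$ and $G-S$ is biconnected, while $S\subseteq \heavy(\mu(k))\subseteq E(G)\setminus E^\infty$ guarantees $S\cap E^\infty=\emptyset$. Combined with $w(S)\geq w^{\star}$ this shows $S$ is a solution, and $S\subseteq \heavy(\mu(k))$ immediately gives the required intersection. The argument is purely an exchange/rearrangement step, and the only point that needs (trivial) care is to ensure $S$ is disjoint from $E^\infty$, which is handled by the convention that $\heavy_I(\cdot)$ ranges only over potential solution edges; there is no substantive structural obstacle in this lemma.
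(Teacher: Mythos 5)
Your proof is correct and follows essentially the same argument as the paper: since the hypothetical solution avoiding $\heavy(\mu(k))$ has at most $k$ edges, each weighing no more than any edge of the size-$k$ set $S\subseteq\heavy(\mu(k))$, the weight comparison $w(S)\geq w(S^\star)\geq w^*$ shows $S$ is itself a solution intersecting $\heavy(\mu(k))$. The paper phrases this as a proof by contradiction while you argue directly by cases, but the exchange step and its justification are identical.
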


\begin{proof}
	Suppose that this is not the case and let $S'$ be a biconnectivity deletion set of size at most $k$ such that  $w(S')\geq w^*$. Note that $S'$ is disjoint from $\heavy(\mu(k))$ and $S$ is contained in $\heavy(\mu(k))$. Since $|S'|\leq |S|$, we infer that $w(S)\geq w(S')$, a contradiction to our assumption that there is no solution intersecting $\heavy(\mu(k))$. 
	This completes the proof of the lemma.
\end{proof}

\noindent
Let $\hat S=\{f_1,\dots, f_r\}\subseteq \heavy(\mu(k))$ be a set greedily constructed as follows. The edge $f_1$ is the heaviest potential solution edge. That is, $w(f_1)\geq w(e)$ for every $e\in E(G)\setminus E^\infty$. For each $2\leq i\leq r$, $f_i$ is the heaviest edge of $\heavy(\mu(k))$ which is \emph{not} critical in $G-\{f_1,\dots, f_{i-1}\}$. We terminate this procedure after $k$ steps if we manage to find edges $\{f_1,\dots, f_k\}$ or earlier if for some $r<k$, every edge of $\heavy(\mu(k))$  is critical in $G-\{f_1,\dots, f_r\}$.

Observe that by definition, $\hat S$ is a biconnectivity deletion set. Therefore, if $r=k$, then Lemma \ref{lem:greedy_yes_global_yes} implies that if there is a solution for the given instance, then there is one intersecting $\heavy(\mu(k))$ (as required in Lemma~\ref{lem:biconnected_bound_on_non_critical_edges}).  On the other hand, suppose that $r<k$. For each $i\in [r]$, we denote by $\hat S_{i}$, the set $\{f_1,\dots, f_i\}$ and by $\hat S_{0}$, the empty set. Recall that we have already assumed that the number of potential solution edges is greater than $\mu(k)=20k^3+46k^2+k$. As a result, we have the following observation. 

\begin{observation}\label{obs:exists_good_index}
	There is an $i\in [r]$ such that $G-\hat S_{i}$ is biconnected and   $|\critical_{G-\hat S_{i-1}}(f_i)\cap \heavy(\mu(k))|\geq \frac{\mu(k)-k}{k} = 20k^2+46k$.
\end{observation}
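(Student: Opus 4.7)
The plan is to treat this observation as a counting argument on the greedy process, with the biconnectivity clause handled as a trivial invariant of the construction. My proof will combine three short ingredients: a biconnectivity invariant, a partition of the ``newly critical'' edges of $\heavy(\mu(k))$ by the step at which they transition, and a pigeonhole estimate.

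First, I would verify the biconnectivity part by induction on $i \in \{0, 1, \dots, r\}$. The base case $i = 0$ is just the assumption that $G$ is biconnected. For the inductive step, by construction $f_i$ is chosen specifically to be non-critical in $G - \hat S_{i-1}$, so deleting it preserves biconnectivity. This immediately gives that $G - \hat S_i$ is biconnected for every $i \in [r]$.

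Next, I would assign to each element of $\heavy(\mu(k)) \setminus \hat S_r$ the step at which it becomes critical. The termination condition for the case $r < k$ says that every such edge is critical in $G - \hat S_r$. Since all edges of $\heavy(\mu(k))$ are potential solution edges and we have reduced to instances where $\critical_G(\emptyset) \subseteq E^\infty$, no edge of $\heavy(\mu(k))$ is critical in $G = G - \hat S_0$. Thus each $e \in \heavy(\mu(k)) \setminus \hat S_r$ admits a unique smallest index $i(e) \in [r]$ at which it first becomes critical, and for this index $e \in \critical_{G - \hat S_{i(e)-1}}(f_{i(e)}) \cap \heavy(\mu(k))$.

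The remaining step is purely arithmetic. There are at least $\mu(k) - r \geq \mu(k) - k + 1$ such edges to distribute over $r \leq k - 1$ indices, so pigeonhole yields some $i \in [r]$ with
\[
|\critical_{G - \hat S_{i-1}}(f_i) \cap \heavy(\mu(k))| \;\geq\; \frac{\mu(k) - r}{r} \;\geq\; \frac{\mu(k) - k + 1}{k - 1} \;\geq\; \frac{\mu(k) - k}{k} \;=\; 20k^2 + 46k,
\]
where the last inequality is a short algebraic check that, after cross-multiplication, reduces to $\mu(k) \geq 0$. I do not foresee any conceptual obstacle: the argument is essentially bookkeeping, and the only points that require care are confirming that each newly-critical edge is attributed to exactly one step and that the denominators remain positive (which they do, since the greedy procedure produces at least $f_1$ whenever $\heavy(\mu(k))$ is non-empty, giving $r \geq 1$).
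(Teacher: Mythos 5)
Your proof is correct and matches the argument the paper intends (the paper states this observation without an explicit proof, as an immediate consequence of the assumption $|E(G)\setminus E^\infty|>\mu(k)$ together with the greedy construction): biconnectivity is preserved at every step since each $f_i$ is chosen non-critical, every edge of $\heavy(\mu(k))\setminus\hat S_r$ starts non-critical (because $\critical_G(\emptyset)\subseteq E^\infty$) and ends critical, so it is charged to exactly one step, and pigeonhole over $r<k$ steps gives the stated bound. Your bookkeeping (unique first step of criticality, positivity of $r$, and the algebraic comparison $\frac{\mu(k)-r}{r}\geq\frac{\mu(k)-k}{k}$) is exactly the intended reasoning.
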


\noindent
Let $i\in [r]$ be the index referred to in this observation. In the rest of
the section, we let $\hat F=\hat S_{i-1}$, $e= f_i=(x,y)$ and $G'=G-\hat F$.
The following observation is a straightforward consequence of
Lemma~\ref{lem:full_existence} in our setting.

\begin{observation}
\label{obs:full_existence_3}
  Let $G'$ and $e$ be as above. Then, 
  $\kappa_{G'-e}(x,y)=2$ and for any $x$-$y$ flow $\cH=\{H_1,H_2\}$ of value 2 in $G'-e$, 
  the following holds.  
  \begin{enumerate}
  \item Every edge of $\critical_{G'}(e)$ is critical for the pair $x, y$ in $G'-e$ and hence lies on $H_1$ or $H_2$.
  \item For every edge $e_1 \in \critical_{G'}(e)$, say on $H_1$, there
    is at least one vertex $v$ on $H_2$ such that $\{e_1,v\}$ is a mixed $x$-$y$ cut in $G'-e$.
  \end{enumerate}
\end{observation}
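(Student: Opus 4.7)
The plan is to derive all three claims as short consequences of Lemma~\ref{lem:full_existence} applied to $G'$ with $\rho=2$. By construction of $\hat S$ the graph $G'$ is biconnected, by Observation~\ref{obs:exists_good_index} so is $G'-e$, and by the choice of index $i$ the set $\critical_{G'}(e)$ is nonempty (in fact of size at least $20k^2+46k$).

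First I would establish $\kappa_{G'-e}(x,y)=2$. The inequality $\kappa_{G'-e}(x,y)\geq 2$ is immediate from Menger's theorem, since $G'-e$ is biconnected and $x,y$ are non-adjacent in it (the unique edge between them, namely $e$, has been removed). For the upper bound, pick any $e'\in\critical_{G'}(e)$; the equivalence (1)$\Leftrightarrow$(3) of Lemma~\ref{lem:full_existence} asserts that $e'$ participates in every $x$-$y$ flow of value $2$ in $G'-e$. If $\kappa_{G'-e}(x,y)\geq 3$ we could select three internally vertex-disjoint $x$-$y$ paths $P_1,P_2,P_3$, and each of the three pairs $\{P_i,P_j\}$ would be a flow of value $2$ required to contain $e'$; but an edge can belong to at most one path of an internally vertex-disjoint family, a contradiction.

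Part~(1) is then immediate: Lemma~\ref{lem:full_existence}(3) says every $e'\in\critical_{G'}(e)$ participates in every $x$-$y$ flow of value $2$ in $G'-e$, which together with the non-adjacency of $x,y$ in $G'-e$ is precisely Definition~\ref{def:critical} of ``critical for the pair $x,y$'', and specialising to the flow $\{H_1,H_2\}$ places $e'$ on $H_1$ or on $H_2$. For Part~(2), fix $e_1\in\critical_{G'}(e)$ lying on, say, $H_1$. The equivalence (1)$\Leftrightarrow$(2) of Lemma~\ref{lem:full_existence} furnishes a pair $x',y'$ and a mixed $x'$-$y'$ cut $\{e_1,s\}$ of size $2$ in $G'-e$. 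Inspecting the proof of (2)$\Rightarrow$(3) of that lemma shows that $x$ and $y$ themselves lie on opposite sides of this cut, so $\{e_1,s\}$ is in fact a mixed $x$-$y$ cut in $G'-e$. Finally, $H_2$ is an $x$-$y$ path in $G'-e$ that cannot contain the edge $e_1$ (since $e_1$ is on the internally vertex-disjoint path $H_1$), so $H_2$ must traverse the vertex $s$, providing the required vertex.

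The only real obstacle here is the equality $\kappa_{G'-e}(x,y)=2$: biconnectivity alone only delivers the lower bound $\geq 2$, so I have to invoke the existence of at least one critical edge together with the ``three paths'' argument above to rule out higher local connectivity between $x$ and $y$. Everything else is either a direct reading of Lemma~\ref{lem:full_existence} or of the geometric picture inside its proof.
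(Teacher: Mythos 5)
Your proposal is correct and follows exactly the route the paper intends: the paper states this observation without proof as a ``straightforward consequence'' of Lemma~\ref{lem:full_existence}, and your derivation fills in precisely those details, including the one genuinely non-trivial point that $\kappa_{G'-e}(x,y)=2$ (lower bound from biconnectivity of $G'-e$ and non-adjacency of $x,y$, upper bound from the existence of an edge in $\critical_{G'}(e)$ that must participate in every flow of value $2$). The handling of part~(2) via the mixed cut produced in Lemma~\ref{lem:full_existence} and the fact that $H_2$ avoids $e_1$ and hence must pass through the cut vertex is likewise sound.
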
 
  
\noindent
We refer to the vertex $v$ above as a \emph{partner vertex} of $e_1$, and
refer to the set of all partner vertices of $e_1$ as the \emph{partner set} of
$e_1$ and denote this set by $\partner^e_{G'-e}(e_1)$. We do not explicitly
refer to $H_1$ or $H_2$ in this notation  because these will always be clear
from the context. We will also drop the explicit reference to $G'$ and $e$
when these are clear from the context.

From Observation~\ref{obs:exists_good_index} and
Obsevation~\ref{obs:full_existence_3}, we now conclude the following:

\begin{observation}\label{obs:rich_flow_exists}
	Let $G'$ and $e$ be as above. There is an $x$-$y$ flow $\cP=\{P_1,P_2\}$ of value 2 in $G'-e$ such that $|E(P_1)\cap \critical_{G'}(e)\cap \heavy(\mu(k))|\geq 10k^2+23k$.
	\end{observation}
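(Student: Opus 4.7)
The plan is to combine Observation~\ref{obs:exists_good_index} with the first part of Observation~\ref{obs:full_existence_3} and then apply a pigeonhole argument on an arbitrary flow decomposition.

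First I would fix any $x$-$y$ flow $\cP=\{P_1,P_2\}$ of value $2$ in $G'-e$. Such a flow exists since $G'$ is biconnected (and hence so is $G'-e$ for the non-critical edge $e$), which in particular gives $\kappa_{G'-e}(x,y)=2$, as recorded in Observation~\ref{obs:full_existence_3}. By the first assertion of Observation~\ref{obs:full_existence_3}, every edge of $\critical_{G'}(e)$ is critical for the pair $x,y$ in $G'-e$, so it must participate in every $x$-$y$ flow of value $2$ in $G'-e$, and therefore lies on $P_1$ or $P_2$.

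Next, let $C := \critical_{G'}(e)\cap \heavy(\mu(k))$. By the choice of $i$ in Observation~\ref{obs:exists_good_index}, we have $|C|\geq 20k^2+46k$. Since every edge of $C$ lies on $P_1$ or on $P_2$, pigeonhole yields
\[
\max\bigl(|E(P_1)\cap C|,\,|E(P_2)\cap C|\bigr)\ \geq\ \bigl\lceil |C|/2\bigr\rceil\ \geq\ 10k^2+23k.
\]
Relabelling the two paths if necessary so that $P_1$ is the one attaining this maximum gives $|E(P_1)\cap \critical_{G'}(e)\cap \heavy(\mu(k))|\geq 10k^2+23k$, which is the desired conclusion.

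There is no real obstacle here — the only thing one has to be careful about is that Observation~\ref{obs:full_existence_3} applies to the particular flow we picked (it is stated for arbitrary flows of value $2$), and that $e$ itself is not used by $P_1$ or $P_2$ because the flow lives in $G'-e$. Both are immediate from the setup.
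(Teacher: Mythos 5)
Your proposal is correct and follows exactly the route the paper intends: the paper derives Observation~\ref{obs:rich_flow_exists} directly from Observation~\ref{obs:exists_good_index} (which gives $|\critical_{G'}(e)\cap \heavy(\mu(k))|\geq 20k^2+46k$) together with the first part of Observation~\ref{obs:full_existence_3} (every such edge lies on one of the two flow paths), followed by the same pigeonhole/relabelling step you use.
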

	
\noindent
Henceforth, we work with this fixed $x$-$y$ flow $\cP=\{P_1,P_2\}$ in $G'-e$.

\begin{figure}[t]
  \begin{center}\includegraphics[scale=.35]{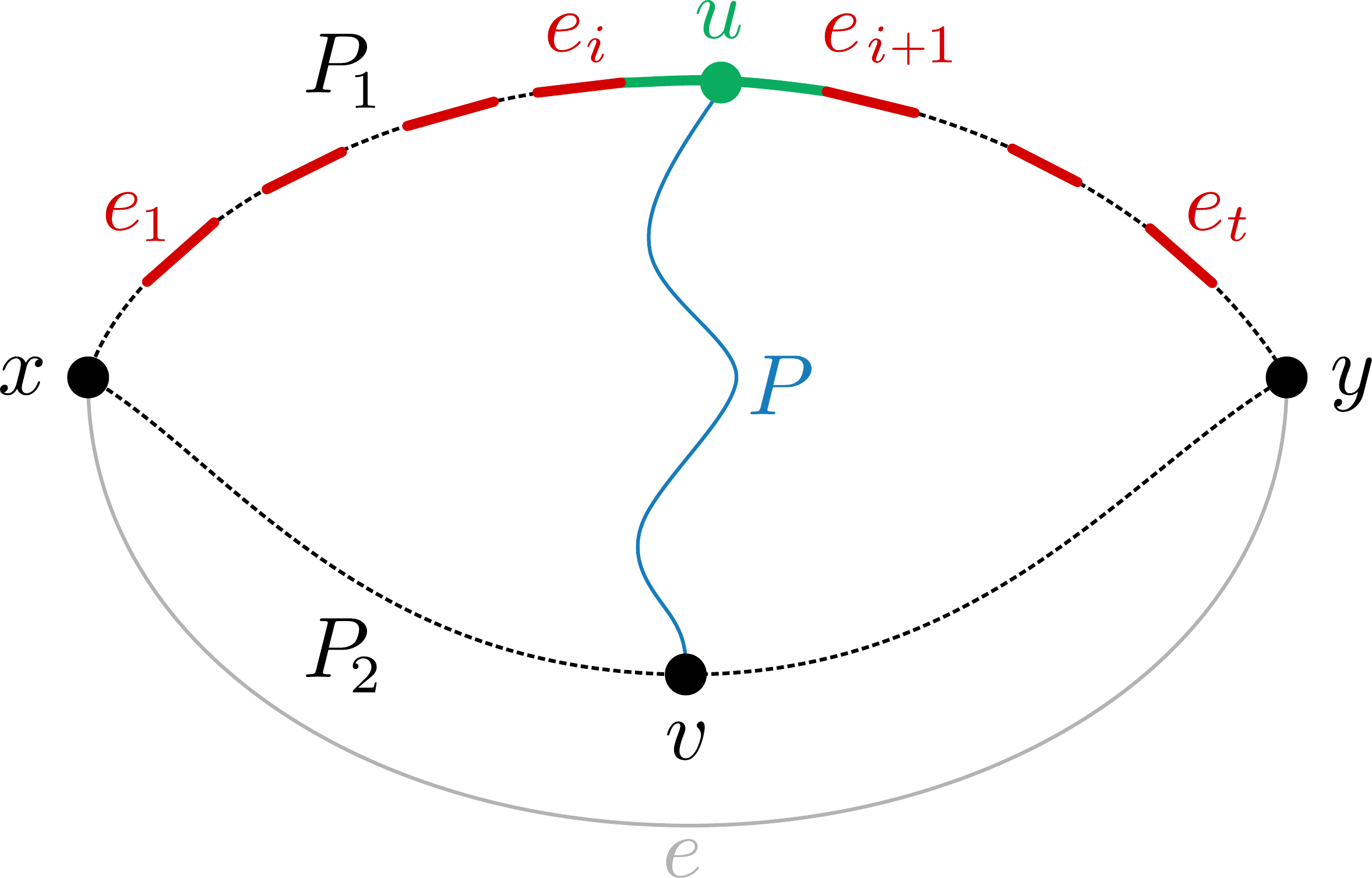}
  	
  \end{center}

  \caption{An illustration of $\segment[i,i+1]$ (the subpath of $P_1$ between $e_i$ and $e_{i+1}$) and $P$ which is a nice path for this segment.}
  \label{fig:segment}
\end{figure}

\begin{definition}Let $G'$,$e=(x,y),P_1$ and $P_2$ be as above. 
	Let $e_1, \ldots, e_t$ be some subset of $10k^2+23k$ edges in $\critical_{G'}(e) \cap \heavy(\mu(k))\cap E(P_1)$
  in the order in which they appear when traversing $P_1$ from $x$ to $y$ (see Figure~\ref{fig:segment}), where $e_i=(u_i,v_i)$ and we may have $v_i=u_{i+1}$. \begin{itemize}\item For $i \in [t-1]$, we  refer to the subpath of $P_1$
  from $v_i$ to $u_{i+1}$ inclusively as $\segment[i, i+1]$ of $P_1$ with the explicit reference to $P_1$ dropped when clear from the context. \item A path $P$   with endpoints $u, v \in V(P_1) \cup V(P_2)$
  but internally vertex-disjoint from $P_1$ and $P_2$ is said to be a {\em nice path} for $\segment[i,i+1]$ if $u$ is contained in $\segment[i,i+1]$ and $v\in V_{\sf int}(P_2)$. 
  \end{itemize}
\end{definition}

\noindent
When $e$, $P_1$, $P_2$ are as in the definition above, we write $w \leq w'$
for vertices $w, w' \in V(P_2)$ if either $w=w'$ or $w$ is encountered
before $w'$ when traversing $P_2$ from $x$ to $y$, and similarly for
vertices on $P_1$. Furthermore, for a set $Q\subseteq V(P_2)$ and vertex
$w\in V(P_2)$, we say that $w< Q$ ($w>Q$) if $w<q$ ($w>q$ respectively) for
every $q\in Q$. We need the following crucial structural lemma regarding the
structure of any path with endpoints in $P_1$ and $P_2$ but which is
otherwise disjoint from these two paths.

\begin{lemma} \label{lem:segmentconnections}
  Let $G',e=(x,y),P_1, P_2$, $e_1,\dots, e_t$ be as above.
   Let $P$ be a path in $G'$ with endpoints $u, v \in V(P_1) \cup V(P_2)$
  but internally vertex-disjoint from $P_1$ and $P_2$.  Then the following statements hold.
  \begin{enumerate}
  \item If $u, v \in V(P_1)$, then either $u, v \leq u_1$, or $u, v \geq v_t$, or
    there is a $j\in [t-1]$ such that $u,v$ lie on $\segment[j,j+1]$.
  \item If $u, v \in V(P_2)$, then the subpath of $P_2$ from $u$ to $v$ 
      is  internally vertex-disjoint from the set $\bigcup^t_{i=1} \partner^e_{G'}(e_i)$.
  \item If $u \in V(P_1)$ and $v \in V(P_2)$, where $u$ lies in $\segment[i,i+1]$,
    then for every $j \leq i$ and every vertex $w \in \partner(e_j)$ we have $w \leq v$,
    and for every $j \geq i+1$ and every vertex $w \in \partner(e_j)$ we have $w \geq v$. \label{item:nicep2order}
    \item  For every $i\in [t-1]$, there is at least one nice path for $\segment[i,i+1]$, and for any $i\neq j\in [t-1]$, and paths $P$ and $P'$ which are nice paths for $\segment[i,i+1]$ and $\segment[j,j+1]$ respectively, 
    $P$ and $P'$ are internally vertex-disjoint. 
    \label{item:pathsp1p2}
  \end{enumerate}
\end{lemma}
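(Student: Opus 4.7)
The plan is to prove the four assertions by a common swap technique applied in conjunction with Observation~\ref{obs:full_existence_3}. Whenever a conclusion fails, I will splice portions of $P_1$, $P_2$, and $P$ to form an $x$-$y$ path (hence, together with one of $P_1,P_2$, a flow) that either bypasses a critical edge $e_j$---contradicting that every $x$-$y$ flow of value $2$ in $G'-e$ uses $e_j$---or bypasses a mixed cut $\{e_j,w\}$. Concretely, for Part~1, if $u,v\in V(P_1)$ lie in no common group ($\leq u_1$, $\geq v_t$, or a single segment), some $e_j$ lies strictly between them on $P_1$, and the walk from $x$ along $P_1$ to $u$, then along $P$ to $v$, then along $P_1$ to $y$ avoids both $V_{\sf int}(P_2)$ and $e_j$; paired with $P_2$ this gives an $x$-$y$ flow of value $2$ in $G'-e-e_j$. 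For Part~2, if some $w\in\partner(e_j)$ lies strictly between $u$ and $v$ on $P_2$, the walk $x$-$P_2$-$u$-$P$-$v$-$P_2$-$y$ avoids $w$ and uses no edge of $P_1$, in particular not $e_j$, so $\{e_j,w\}$ fails to be a mixed cut. For Part~3, assuming $j\leq i$ and $w\in\partner(e_j)$ with $w>v$, the walk $x$-$P_2$-$v$-$P$-$u$-$P_1$-$y$ avoids $w$ (its $P_2$-portion stops at $v$ before reaching $w$) and avoids $e_j$ (since $u\geq v_i\geq v_j$, the $P_1$-tail from $u$ does not cross $e_j$); the symmetric case $j\geq i+1$ with $w<v$ is analogous.

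The pairwise internal disjointness claim in Part~4 follows directly from Part~1: if nice paths $P$ for $\segment[i,i+1]$ and $P'$ for $\segment[j,j+1]$ with $i\neq j$ shared an internal vertex $z$, then the $u$-to-$z$ prefix of $P$ concatenated with the reverse of the $u'$-to-$z$ prefix of $P'$ would be a path whose endpoints $u,u'\in V(P_1)$ lie in distinct segments and whose interior is disjoint from $P_1\cup P_2$, contradicting Part~1.

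The existence half of Part~4 is the technical heart of the lemma, and I would argue by contradiction. Assume no nice path exists for $\segment[i,i+1]$ and let $\Omega$ denote the set of vertices in $V(G')\setminus(V(P_1)\cup V(P_2))$ reachable from $\segment[i,i+1]$ by paths interior-disjoint from $P_1\cup P_2$. Combining the assumption with Parts~1--3 I would show that the only edges leaving $\segment[i,i+1]\cup\Omega$ in $G'-e$ are $e_i$ and $e_{i+1}$: any edge from $\Omega$ to $V(P_1)\setminus\segment[i,i+1]$ (including $\{x,y\}$) composes with the reaching path to violate Part~1, and any edge from $\Omega$ to $V_{\sf int}(P_2)$ would itself be (the tail of) a nice path. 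Since $e_i\notin\critical_{G'}(\emptyset)$, the graph $G'-e_i$ is biconnected, so Menger's theorem gives two internally vertex-disjoint paths in $G'-e_i$ from $v_i\in\segment[i,i+1]$ to a chosen $w_i\in\partner(e_i)\subseteq V_{\sf int}(P_2)$ (these vertices are non-adjacent in $G'-e_i$, for otherwise the edge between them would itself be a nice path). By the structural restriction, every such $v_i$-$w_i$ path must leave $\segment[i,i+1]\cup\Omega$ via $e_{i+1}$---the only exit remaining after the deletion of $e_i$---and therefore passes through $v_{i+1}\notin\{v_i,w_i\}$; so two such paths cannot be internally vertex-disjoint, yielding the desired contradiction. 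The delicate point will be the ``$\Omega$ can only re-enter through the segment'' argument, which I would spell out case by case on where an outgoing $\Omega$-edge could land, and separately verify the degenerate setting in which $\segment[i,i+1]$ is a single vertex.
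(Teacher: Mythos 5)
Your arguments for statements (1)--(3) and for the pairwise-disjointness half of statement (4) are essentially the paper's own: in each case you splice pieces of $P_1$, $P_2$ and $P$ into an $x$-$y$ path that avoids a mixed cut $\{e_j,w\}$ (or yields a flow of value $2$ in $G'-e$ avoiding $e_j$), exactly as in the paper, and the disjointness claim is reduced to statement (1) in the same way. Where you genuinely diverge is the existence half of statement (4). The paper argues directly from a $u_i$-$v_i$ flow of value $2$ in $G'-e_i$: it notes that $e\in\critical_{G'}(e_i)$, invokes Lemma~\ref{lem:full_existence} to force $e$ onto one flow path, and then extracts from the other path a subpath contradicting statement (1) (after first dismissing the case where the flow meets $V_{\sf int}(P_2)$, which would itself yield a nice path). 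You instead define the reachability region $\Omega$, use statements (1)--(3) and the no-nice-path assumption to show that $\segment[i,i+1]\cup\Omega$ has only $e_i$ and $e_{i+1}$ on its boundary, and then contradict the biconnectivity of $G'-e_i$ by observing that any two $v_i$-$w_i$ paths (with $w_i\in\partner(e_i)$, which is nonempty by Observation~\ref{obs:full_existence_3}) must both cross the unique remaining boundary edge $e_{i+1}$ and hence share the internal vertex $v_{i+1}$. This is a valid alternative: it avoids the paper's appeal to Lemma~\ref{lem:full_existence} for the edge $e$ and its rather terse ``flow meets $P_2$ implies nice path'' step, at the cost of the boundary case analysis -- which, as you anticipate, is where the care is needed (in particular, edges leaving directly from $\segment[i,i+1]$, not only from $\Omega$, must be handled by the same argument with a trivial reaching path: a chord to $V(P_1)\setminus\segment[i,i+1]$ contradicts statement (1), and an edge to $V_{\sf int}(P_2)$ is itself a nice path). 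Your region $\segment[i,i+1]\cup\Omega$ in fact anticipates the paper's later $\component[i,i+1]$ construction, so the argument also fits naturally with the structure used downstream.
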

\begin{proof}
For the first statement, suppose that there is an index $i \in [t]$ such that $u \leq u_i$ but $v \geq v_i$. Since $e_i$ is critical in $G'-e$, 
Observation~\ref{obs:full_existence_3} implies that there is a mixed $x$-$y$ cut $X=\{e_i,q\}$ where the vertex $q$ lies on $P_2$. However, the graph induced on $V(P_1)\cup V(P)$ contains an $x$-$y$ path disjoint from $X$, a contradiction. This completes the argument for the first statement. 

The argument for the second statement is similar. Suppose to the contrary that there is an $i\in [t]$ and a vertex $w\in \partner(e_i)$ such that the subpath of $P_2$ from $u$ to $v$ contains the vertex $w$. Recall that due to Observation~\ref{obs:full_existence_3}, the set $X=\{e_i,w\}$ is a mixed $x$-$y$ cut in $G'-e$. But then $u \in R_{G'-e}(x,X)$, $v \in R_{G'-e}(y,X)$, and the path $P$ is disjoint from $X$, which contradicts $X$ being an $x$-$y$ cut. 

For the third statement, suppose that there is a $j\leq i$ and $w\in \partner(e_j)$ such that $w>v$. Let $X=\{e_j,w\}$. Due to Observation~\ref{obs:full_existence_3}, we know that $X$ is a mixed $x$-$y$ cut in $G'-e$.  Let $A=R_{G'-e}(x,X)$ and $B=R_{G'-e}(y,X)$. Since $u\in \segment[i,i+1]$ and $j\leq i$, it follows that $u\in B$. Similarly, since $w>v$, it must be the case that $v\in A$.  
As above, we find that $P$ is a path disjoint from $X$ connecting $A$ and $B$, which contradicts that $X$ is an $x$-$y$ cut.
The argument for the case when there is a $j\geq i$ and $w\in \partner(e_j)$ such that $w<v$ is analogous. 
%
%

  For the first part of the final statement, assume for a contradiction that for some $i\in [t-1]$, the path $\segment[i,i+1]$ does not have a nice path. Recall that
   $e_i$ is not critical in $G'$ but is critical in $G'-e$. 
  Therefore, there is a $u_i$-$v_i$ flow of value 2 in the graph $G'-e_i$; let $\cH=\{H_1,H_2\}$ be such a flow. If $H_1$ or $H_2$ intersects the internal vertices of $P_2$, then 
  this implies the presence of a nice path for $\segment[i, i+1]$. 
  Hence, we assume that this does not happen. We also conclude that $e$ must occur in $H_1$ or $H_2$. 
  Indeed, observe that $e$ is critical in $G'-e_i$, since $G'-e_i$ is biconnected but $G'-e-e_i$ is not. Hence $e \in \critical_{G'}(e_i)$, and by Lemma~\ref{lem:full_existence}, $e$ must participate in $\cH$. 
  We may assume without loss of generality that $H_1$ contains the edge $e$.
  But now $H_2$ is a path from $u_i$ to $v_i$ in $G'$, disjoint from both $e$, $e'$, and $V_{\sf{int}}(P_2)$. 
  Clearly, $H_2$ contains a subpath $P$ in contradiction to the first statement of the present lemma.
  We conclude that for every $i\in [t-1]$, $\segment[i,i+1]$  has a nice path. 

  For the second part of the last statement, let $P$ and $P'$ be paths as described which are not internally vertex-disjoint. Then $P \cup P'$ contains a walk, and therefore also a path, with endpoints in $V(P_1)$, internally vertex-disjoint from $V(P_1) \cup V(P_2)$, and with endpoints in distinct segments on $P_1$, in contradiction with the first statement of this lemma.
  This completes the argument for the last statement and hence the proof of the lemma.    
%
%
%
%
%
%
%
%
\end{proof}


\noindent

Let us now consider how partner sets can intersect.

\begin{observation} \label{obs:partner_switching} Let $G',e=(x,y),P_1, P_2$, $e_1,\dots, e_t$ be as above. 
  Let $e_i, e_j$ be a pair of edges, $1 \leq i < j \leq t$, 
  let $w_1, \ldots, w_r$ be the partner vertices of $e_i$ in the order they appear on $P_2$,
  and let $w_1', \ldots, w_s'$ be the partner vertices of $e_j$ in the order they appear on $P_2$.
  Then $w_i \leq w_j'$ for every $i \in [r]$, $j \in [s]$. In particular, the set
  $\partner(e_i) \cap \partner(e_j)$ can consist of at most one vertex $w$, which {\em must} then 
be the last vertex of $\partner(e_i)$ and the first vertex of $partner(e_j)$ which is encountered when traversing $P_2$ from $x$ to $y$. 
\end{observation}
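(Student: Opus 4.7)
The plan is to deduce Observation~\ref{obs:partner_switching} essentially directly from Lemma~\ref{lem:segmentconnections}, using a nice path for the segment that lies strictly between $e_i$ and $e_j$. Since $1 \leq i < j \leq t$ we have $i \leq t-1$, so the segment $\segment[i,i+1]$ is well defined, and by part~(4) of Lemma~\ref{lem:segmentconnections} a nice path $P$ for $\segment[i,i+1]$ exists; denote its endpoints by $u \in \segment[i,i+1]$ on $P_1$ and $v \in V_{\sf int}(P_2)$ on $P_2$.

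I would then apply part~(3) of Lemma~\ref{lem:segmentconnections} to this nice path with the distinguished segment index equal to $i$. Specialising the ``$j' \leq i$'' half of that statement to $j' = i$ yields that every $w \in \partner(e_i)$ satisfies $w \leq v$ on $P_2$; specialising the ``$j' \geq i+1$'' half to $j' = j$ (legal, since $j \geq i+1$) yields that every $w' \in \partner(e_j)$ satisfies $w' \geq v$. Combining these, $w \leq v \leq w'$ for every $w \in \partner(e_i)$ and every $w' \in \partner(e_j)$, which is precisely the asserted inequality between the ordered partner sequences of $e_i$ and $e_j$.

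For the ``in particular'' clause, any vertex $w \in \partner(e_i) \cap \partner(e_j)$ must satisfy both $w \leq v$ and $w \geq v$, forcing $w = v$; so the intersection contains at most this one vertex. The same two bounds show that, if the intersection is non-empty, its unique element is simultaneously an upper bound for $\partner(e_i)$ and a lower bound for $\partner(e_j)$ along $P_2$, i.e.\ the last vertex of $\partner(e_i)$ and the first vertex of $\partner(e_j)$ encountered when traversing $P_2$ from $x$ to $y$. The only non-trivial point to verify is that $\segment[i,i+1]$ and the nice-path existence statement are actually applicable; this reduces to the edge-case $i \leq t-1$, which is immediate from $i < j \leq t$. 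There is no substantive obstacle beyond this, as all the delicate graph-theoretic work has been absorbed into Lemma~\ref{lem:segmentconnections}.
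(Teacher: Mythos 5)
Your proposal is correct and follows essentially the same route as the paper's own proof: both obtain a nice path for $\segment[i,i+1]$ via Lemma~\ref{lem:segmentconnections}(4) and then use Lemma~\ref{lem:segmentconnections}(3) to sandwich every vertex of $\partner(e_i)$ below and every vertex of $\partner(e_j)$ above the nice path's endpoint $v$ on $P_2$, from which the ``in particular'' clause follows immediately.
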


\begin{proof} 
  Let $w_i\in \partner(e_i)$ and $w_j'\in \partner(e_j).$ By Lemma~\ref{lem:segmentconnections} (4), $\segment[i,i+1]$ has a nice path $P$ with endpoints $u\in V(P_1)$ and $v\in V(P_2)$. By Lemma \ref{lem:segmentconnections} (3), we have $w_i\le v\le w_j'$. Thus, $w_i\le w_j'$ and the claim follows.
\end{proof}

\noindent
Thus, there is a well-defined first and last element for each partner set and these two elements (they may coincide) define a subpath of $P_2.$ Furthermore, the two subpaths corresponding to the partner sets of any two  critical edges on $P_1$ do not have a `strict' overlap and can only intersect in one vertex -- their respective endpoints.


Having identified some of the structure in the graph, we now proceed to examine two cases.
Recall that by Observation \ref{obs:rich_flow_exists},  the path
$P_1$ contains at least $10k^2+23k$ edges of $\critical_{G'}(e)\cap \heavy(\mu(k))$. 
We will consider one of two cases: either there is a sufficiently large number of distinct partner sets, or there is a sufficiently large number of critical edges with identical partner sets. 
We show how to handle each case in turn.

\subsubsection{Many distinct partner sets}\label{subsubsec:case_analysis}

We first handle the first case, by formally arguing that if there are
sufficiently many distinct partner sets, then $\heavy(\mu(k))$ contains a
solution edge. We begin with an observation about connectivity.

\begin{lemma} \label{lem:segmentdetour}
  Let $G',e=(x,y),P_1, P_2$, $e_1,\dots, e_t$ be as above.
  For each $i\in [t]$, there is a pair of internally vertex-disjoint
  $u_i$-$v_i$ paths $P_a$, $P_b$ in $G'-e_i$ as follows.
 \begin{enumerate}
 \item $P_a$ contains the edge $e$. Additionally,
   if $i>1$, then $P_a$  either contains $e_{i-1}$ or intersects $\partner(e_{i-1})$, 
   and if $i<t$, then $P_a$  either contains $e_{i+1}$ or intersects $\partner(e_{i+1})$.
  \item $P_b$ has an endpoint each in $\segment[i-1,i]$ and $\segment[i,i+1]$, but does not intersect $P_1$
    anywhere else except in these segments, \label{item:noothersegments}
  \item $P_b$ contains the set $\partner(e_i)$ and is disjoint from the set $\partner(e_j) \setminus \partner(e_i)$
    for any $j \in [t]$, $j \neq i$, except possibly the vertices of $\bigcup_j \partner(e_j)$
    immediately preceding and succeeding $\partner(e_i)$ on $P_2$.
  \end{enumerate} 
\end{lemma}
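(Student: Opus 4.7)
The plan is to first obtain a pair of internally vertex-disjoint $u_i$-$v_i$ paths in $G'-e_i$ by observing that $e_i \in \critical_{G'}(e)$ is, by definition, not critical in $G'$, so $G'-e_i$ is biconnected. Identifying which of the two paths plays the role of $P_a$ is then immediate from the cut structure: fixing any $w \in \partner(e_i)$, Observation~\ref{obs:full_existence_3} gives that $\{e_i, w\}$ is a mixed $x$-$y$ cut in $G'-e$, so $w$ separates $u_i$ from $v_i$ in $G'-e-e_i$. Thus in any pair of internally vertex-disjoint $u_i$-$v_i$ paths in $G'-e_i$, at most one can avoid $e$, forcing exactly one of them to contain $e$; call it $P_a$ and the other $P_b$. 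Applying the same cut argument to every $w \in \partner(e_i)$ in turn immediately gives $\partner(e_i) \subseteq V(P_b)$, which is half of Property~3.

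Property~1 would then follow from a symmetric cut argument on $P_a$. For each $w \in \partner(e_{i-1})$, Observation~\ref{obs:full_existence_3} tells us that $\{e_{i-1}, w\}$ is a mixed $x$-$y$ cut in $G'-e$; it separates $\{x, u_1, \dots, u_{i-1}\}$ from $\{y, v_{i-1}, u_i, \dots, v_t\}$, and hence also separates $x$ from $u_i$ in $G'-e-e_i$. The $u_i$-to-$x$ subpath of $P_a$ (which lives in $G'-e-e_i$ apart from the edge $e$ itself) must therefore use $e_{i-1}$ or pass through $w$; since this holds for every such $w$, the subpath either contains $e_{i-1}$ or meets $\partner(e_{i-1})$. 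The $e_{i+1}$ case follows symmetrically from the $y$-to-$v_i$ subpath.

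For Properties~2 and~3, the plan is to select a 2-flow $\{P_a, P_b\}$ in $G'-e_i$ (with $P_a$ containing $e$) that minimises a natural measure such as $|V(P_b) \cap V(P_1)|$, and then decompose $P_b$ at its intersections with $V(P_1) \cup V(P_2)$ into maximal sub-subpaths whose internal vertices are disjoint from both. Each such sub-subpath falls under Lemma~\ref{lem:segmentconnections}(1)--(3), which tightly constrains its endpoints: Lemma~\ref{lem:segmentconnections}(1) confines any all-$V(P_1)$ sub-subpath to a single segment, while Lemma~\ref{lem:segmentconnections}(3) together with Observation~\ref{obs:partner_switching} forces the $V(P_2)$-portion of $P_b$ to be a contiguous arc on $P_2$ covering exactly $\partner(e_i)$ and extending at most to the vertices of $\bigcup_j \partner(e_j)$ immediately preceding and succeeding $\partner(e_i)$, giving Property~3. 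Any $V(P_1)$-visit of $P_b$ outside $\segment[i-1, i] \cup \segment[i, i+1]$ can then be short-circuited using a local nice path from Lemma~\ref{lem:segmentconnections}(4), contradicting the minimisation and yielding Property~2.

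The main obstacle will be justifying the short-circuit step in Property~2. The re-routing must simultaneously preserve internal vertex-disjointness with $P_a$ and must not destroy the mandatory traversal of $\partner(e_i) \subseteq V(P_b)$, so it cannot be applied to $P_b$ in isolation. The cleanest way to handle this is to rewrite both paths of the 2-flow in a coordinated fashion, exploiting that the nice paths for distinct segments are pairwise internally vertex-disjoint (Lemma~\ref{lem:segmentconnections}(4)) and that $P_a$'s forced passage through $e$ is controlled by cuts involving $e_{i-1}, e_{i+1}$ rather than by the detour being re-routed.
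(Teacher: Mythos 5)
Your setup and Property~1 match the paper's proof: taking any pair of internally vertex-disjoint $u_i$-$v_i$ paths in $G'-e_i$, using that $\{w,e_i\}$ being a mixed $x$-$y$ cut in $G'-e$ makes $\{w,e\}$ a mixed $u_i$-$v_i$ cut in $G'-e_i$ (so one path, $P_a$, is forced through $e$ and the other through every $w\in\partner(e_i)$), and then using the cuts $\{e_{i\pm1},w\}$ to force $P_a$ through $e_{i-1}$ or $\partner(e_{i-1})$ (resp.\ $e_{i+1}$ or $\partner(e_{i+1})$) is exactly the argument in the paper. However, for Property~2 and the remaining half of Property~3 your plan has a genuine gap, and you flag it yourself: the minimisation of $|V(P_b)\cap V(P_1)|$ plus a ``short-circuit via nice paths'' re-routing is never actually justified. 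The re-routing has to be coordinated with $P_a$ (to keep internal vertex-disjointness) while preserving the forced passage of $P_b$ through all of $\partner(e_i)$, and you do not show how to do this; moreover the intermediate claim that Lemma~\ref{lem:segmentconnections}(3) and Observation~\ref{obs:partner_switching} already force the $V(P_2)$-portion of $P_b$ to be a contiguous arc around $\partner(e_i)$ presupposes that $P_b$ only meets $P_1$ in $\segment[i-1,i]\cup\segment[i,i+1]$, i.e.\ it presupposes Property~2 -- the very statement with the unresolved step. So as written, Properties~2 and~3 are not proved.

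The missing idea, which makes the whole minimisation/re-routing machinery unnecessary, is a parity (crossing-count) argument through a single mixed cut, and it works for an \emph{arbitrary} pair $(P_a,P_b)$. Let $w'$ be the last vertex of $\partner(e_{i-1})$ on $P_2$ and consider the mixed $x$-$y$ cut $X=\{e_{i-1},w'\}$ in $G'-e$. Both endpoints $u_i,v_i$ of $P_b$ lie on the $y$-side of $X$, and $P_b$ avoids $e$, so $P_b$ crosses $X$ an even number of times, each crossing consuming either the edge $e_{i-1}$ or a visit to the vertex $w'$. By Property~1, $P_a$ already uses one of these two elements; since $P_a$ and $P_b$ are internally vertex-disjoint (and hence edge-disjoint), $P_b$ has at most one cut element available, so it crosses at most once, hence exactly zero times. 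Therefore $P_b$ never enters the $x$-side of $X$: it cannot meet $P_1$ in any segment before $\segment[i-1,i]$ and cannot meet $P_2$ at any vertex strictly before $w'$ (it may still touch $w'$ itself, which is precisely the ``immediately preceding'' exception in Property~3). The symmetric argument with $e_{i+1}$ and the first vertex of $\partner(e_{i+1})$ handles the other side, and together these give Properties~2 and~3 directly, with no extremal choice of the flow needed.
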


\begin{proof}
  Let $(P_a, P_b)$ be a pair of internally vertex-disjoint $u_i$-$v_i$ paths in $G'-e_i$. This exists since $e_i$ is not critical in $G'$.
  Let $w \in \partner(e_i)$. Since $\{w,e_i\}$ is a mixed $x$-$y$ cut in $G'-e$, it follows that $\{w,e\}$ is a mixed $u_i$-$v_i$ cut in $G'-e_i$. 
  Hence one path, say $P_a$, must pass through $e$, and the other must pass through $w$. Since $w$ was arbitrarily chosen,
  we find that $P_b$ contains every vertex of $\partner(e_i)$. Next, assume $i>1$ and let $w'$ be the largest vertex in $\partner(e_{i-1})$ in the order $<$.
  Then $\{w', e_{i-1}\}$ is a mixed $x$-$y$ cut in $G'-e$, thus $P_a$ must pass through either $e_{i-1}$ or $w'$ on the way from $u_i$ to $x$. 
  The dual argument holds for $e_{i+1}$ if $i<t$. This covers the first property. 
  For the second and third properties, consider again the mixed cut $\{e_{i-1}, w'\}$. Since $P_b$ contains $u_i$ and $v_i$, 
  both of which are on the same side in the above cut, $P_b$ passes through the cut an even number of times; since $P_a$ intersects
  the cut, $P_b$ cannot pass through the cut and so cannot intersect $P_1$ in any segment before $\segment[i-1,i]$,
  nor $P_2$ in any vertex before $w'$. (Note that $P_b$ may intersect $w'$, but it cannot intersect any vertex
  on the other side of the cut.) An analogous argument holds for $e_{i+1}$ if $i<t$. 
\end{proof}

\noindent
We now state and prove the lemma which handles the first case, \ie there are a
sufficiently large number of distinct partner sets.

\begin{lemma}
  \label{lem:manypartnersets} 
  
  Let $G',e=(x,y),P_1, P_2$, $e_1,\dots, e_t$ be as above.
    Assume that there are more than $3k$ distinct partner sets for the edges $e_1,\dots, e_t$. 
  Then the instance $(G,k,w^*,w,E^\infty)$ has a solution intersecting $\heavy(\mu(k))$.
\end{lemma}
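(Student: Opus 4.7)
The plan is to exhibit a biconnectivity deletion set $S \subseteq \heavy(\mu(k))$ of size exactly $k$. By Lemma \ref{lem:greedy_yes_global_yes}, once such an $S$ is in hand, any yes-instance $I$ admits a solution intersecting $\heavy(\mu(k))$, which is the assertion of the lemma.

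To select $S$, I would first group the edges $e_1, \ldots, e_t$ by their partner set. By Observation \ref{obs:partner_switching}, edges sharing a partner set form a contiguous block on $P_1$; listing the resulting blocks $B_1, \ldots, B_m$ in the order their edges appear along $P_1$, the common partner sets $W_1, \ldots, W_m$ are essentially disjoint intervals on $P_2$ appearing in the same order, overlapping in at most a single endpoint at each consecutive boundary. Assuming $m > 3k$, I would pick $k$ blocks $B_{l_1}, \ldots, B_{l_k}$ with pairwise index gaps $l_{j+1} - l_j \geq 3$ (for instance, every third block), and from each $B_{l_j}$ select one representative edge $e_{i_j}$. Setting $S := \{e_{i_1}, \ldots, e_{i_k}\}$ gives $|S| = k$ and $S \subseteq \heavy(\mu(k))$ by construction.

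The crux is then to show that $G' - S$, and hence $G - S$, is biconnected. For each selected $e_{i_j}$, Lemma \ref{lem:segmentdetour} supplies a detour $P_b^{(j)}$ in $G' - e_{i_j}$ from $u_{i_j}$ to $v_{i_j}$ whose intersection with $P_1 \cup P_2$ is confined to the local window $\segment[i_j{-}1,i_j] \cup \segment[i_j,i_j{+}1]$ on $P_1$ and to $W_{l_j-1} \cup W_{l_j} \cup W_{l_j+1}$ on $P_2$. Because the selected blocks are separated by at least two intermediate blocks, these windows are pairwise disjoint on $P_1 \cup P_2$ for different $j$, which is precisely the reason behind the constant $3k$ in the hypothesis. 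I would then argue biconnectivity inductively: after deleting $e_{i_1}, \ldots, e_{i_j}$, every remaining $e_{i_{j'}}$ stays non-critical in the current graph, since any newly created mixed $u_{i_{j'}}$-$v_{i_{j'}}$ cut would, by Lemma \ref{lem:segmentconnections} combined with Observation \ref{obs:partner_switching}, be forced to invade the intact local window of some already-deleted $e_{i_{j''}}$ and contradict the preserved detour structure there.

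The hardest part will be executing this inductive argument cleanly. Although the windows are disjoint on $P_1 \cup P_2$, the detours $P_b^{(j)}$ may still share internal vertices lying off $P_1 \cup P_2$, and biconnectivity is a global property, so it does not suffice to handle the pair $(x,y)$. I therefore expect to supplement the local inductive argument with a routing argument for pairs of vertices far from $P_1$: reroute one of the two internally vertex-disjoint $G'$-paths for each such pair through the untouched $P_2$ or through the combined detour structure, exploiting the two buffer blocks between consecutive selected $B_{l_j}$'s that Lemma \ref{lem:segmentdetour}'s leakage clause only permits into immediately adjacent partner sets.
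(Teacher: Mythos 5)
Your choice of $S$ (one representative per partner set, taking every third block so that consecutive selected edges are buffered by at least two distinct partner sets) is exactly the paper's selection, and Lemma~\ref{lem:greedy_yes_global_yes} and Lemma~\ref{lem:segmentdetour} are indeed the right tools. However, the heart of the proof --- showing that $G'-S$ is biconnected --- is not actually carried out, and your worry that ``biconnectivity is a global property'' points in the wrong direction. Since $G'$ is biconnected, it suffices to exhibit, for \emph{each} deleted edge $(u_{i_j},v_{i_j})\in S$, two internally vertex-disjoint $u_{i_j}$-$v_{i_j}$ paths in $G'-S$: any disconnection or cut vertex of $G'-S$ would have to separate the endpoints of some edge of $S$, and one of its two certifying paths avoids any single vertex. (This is the reduction the paper uses implicitly here and states explicitly in the kernelization section.) Consequently no routing argument for ``pairs of vertices far from $P_1$'' is needed, and the detours $P_b^{(j)}$ for different $j$ never have to be disjoint from one another, because each deleted edge is certified separately; your proposed supplement addresses a non-issue while the real work is left undone.

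The genuine gap is the second path for each $e_{i_j}$. What you have in hand is only $P_b^{(j)}$; the companion path $P_a$ supplied by Lemma~\ref{lem:segmentdetour} is controlled only near $e_{i_j\pm 1}$ (it contains $e$ and meets $e_{i_j-1}$ or $\partner(e_{i_j-1})$, and symmetrically on the other side), so it may well pass through other edges of $S$, and your inductive claim that any newly created mixed $u_{i_{j'}}$-$v_{i_{j'}}$ cut would ``invade the intact local window'' of an already-deleted edge is an assertion rather than an argument --- neither Lemma~\ref{lem:segmentconnections} nor Observation~\ref{obs:partner_switching} yields that contradiction without further construction. The paper closes this gap by building the second path explicitly: a path $P_x$ from $u_{i_j}$ to $x$ that either follows $P_2$ (when $P_a$ meets $\partner(e_{i_j-1})$) or runs back along $P_1$ to an earlier segment, crosses to $P_2$ via a nice path, and then follows $P_2$ to $x$; a symmetric $P_y$ from $v_{i_j}$ to $y$; and the concatenation $P_x+(x,y)+P_y$ through the edge $e$ itself. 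Verifying that this path avoids $S$ and is internally disjoint from $P_b$ is precisely where the two-block buffer and the ordering facts (Observation~\ref{obs:partner_switching}, statements (3) and (4) of Lemma~\ref{lem:segmentconnections}) are spent. Without this construction --- in particular, without routing the second path through $e$ --- the proposal does not establish that $S$ is a biconnectivity deletion set, and hence does not reach the conclusion via Lemma~\ref{lem:greedy_yes_global_yes}.
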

\begin{proof}
  Let $Z=\{e_1,\dots, e_t\}$ and let $e_{i_1}$, \ldots, $e_{i_{3k+1}}$, be a subset of $Z$ such that 
  for every $1\leq p<q\leq 3k+1$, (a)  $i_{q}>i_p$  
  and (b) $\partner(e_{i_p})\neq \partner(e_{i_{q}})$.
  Let $S=\{e_{i_1}, e_{i_4}, \ldots, e_{i_{3k-2}}\}$. Clearly $|S|=k$;
  we claim that $S$ is a biconnectivity deletion set for $G'$. 
  
  To see this, let $e_{i_j}=(u_{i_j},v_{i_j})$ be an arbitrary edge of $S$,
  and let $P_a$, $P_b$ be $(u_{i_j},v_{i_j})$-paths given by  Lemma~\ref{lem:segmentdetour}.
  Then the path $P_b$ remains in $G'-S$; we will reconfigure $P_a$ to be disjoint from $S$.
  We will create a path $P=P_x + (x,y) + P_y$, by separately providing a path $P_x$ from $u_{i_j}$ to $x$
  and a path $P_y$ from $v_{i_j}$ to $y$ which are disjoint from $P_b$ and
  neither of which contains the edge $e=(x,y)$. If $j=1$, then $e_{i_j}$ is the first edge of $S$ along $P_1$
  and we may simply use $P_a$ from $u_{i_j}$ to $x$ as $P_x$, so assume $j>1$.
  If $P_a$ intersects $\partner(e_{i_j-1})$, then we may produce $P_x$ by continuing along $P_2$ to $x$.
  Otherwise $P_a$ uses the edge $e_{i_j-1}$. In this case, produce $P_x$ by continuing along $P_1$
  to $\segment[i_{j-3}, i_{j-3}+1]$, follow a nice path from this segment to $P_2$, and continue along $P_2$ to $x$.
  
  We argue that the resulting path $P_x$ is disjoint from $P_b$. If $j=1$, then the claim is trivial.
  If $P_a$ intersects $\partner(e_{i_j-1})$, then recall that $P_a$ and $P_b$ are internally disjoint,
  $P_b$ intersects $\partner(e_{i_j})$, and $\partner(e_{i_j-1}) \leq \partner(e_{i_j})$. 
  Thus $P_x$ lies entirely before $P_b$ on $P_2$. Otherwise, $P_x$ uses a nice path $P$ from $\segment[i_{j-3}, i_{j-3}+1]$.
  The initial part of $P_x$ follows $P_a$, which is disjoint from $P_b$ by Lemma~\ref{lem:segmentdetour};
  the part between $u_{i_j-1}$ and $P$ is disjoint from $P_b$ by Lemma~\ref{lem:segmentdetour}(\ref{item:noothersegments});  and $V_{\sf int}(P)$ is disjoint from $P_b$ by Lemma~\ref{lem:segmentconnections}(\ref{item:pathsp1p2}).
  Let $q$ be the endpoint of $P$ on $P_2$, and let $w$ be the first vertex of $\partner(e_{i_{j-2}})$ on $P_2$. Then $q \leq w$ by Lemma~\ref{lem:segmentconnections}(\ref{item:nicep2order}),
  and we claim $w < V(P_b) \cap V(P_2)$. Note that $w \notin \partner(e_{i_j})$ since the three sets $\partner(e_{i_{j-2}})$, $\partner(e_{i_{j-1}})$, $\partner(e_{i_j})$ are distinct
  and by Observation~\ref{obs:partner_switching}, let $w'$ be the first vertex of $\partner(e_{i_j})$ on $P_2$.
  Assume for a contradiction that $P_b$ intersects $w$. Then $P_b$ provides a path from $w$ to $e_{i_j}$ that avoids $e_{i_{j-1}}$ and $w'$;
  hence $w' \notin \partner(e_{i_{j-1}})$ and $\partner(e_{i_{j-1}})\cap \partner(e_{i_j})=\emptyset$. 
  But since $\partner(e_{i_{j-2}}) \neq \partner(e_{i_{j-1}})$, there must be at least one further vertex $w'' \in \partner(e_{i_{j-1}})$
  such that $w < w'' < w'$; this contradicts that $P_b$ intersects $w$ by Lemma~\ref{lem:segmentconnections}. 
  Thus $P_x$ and $P_b$ are internally vertex-disjoint.
  The argument for $P_y$ is analogous to that for $P_x$. 
  Now $P_a=P_x+(x,y)+P_y$ and $P_b$ form a pair of internally vertex-disjoint $u_{i_j}$-$v_{i_j}$-paths, 
  and since $e_{i_j} \in S$ was chosen arbitrarily, we conclude that $G'-S$ is biconnected.
  Since $G$ is a supergraph of $G'$, $G-S$ is also biconnected.
    
  Finally, it follows from Lemma~\ref{lem:greedy_yes_global_yes} that since $S\subseteq \heavy(\mu(k))$ is a biconnectivity deletion set of size $k$ for $G$, there is a solution for the given instance intersecting $\heavy(\mu(k))$. This completes the proof of the lemma.
\end{proof}

\subsubsection{Identical partner sets}

Due to Lemma~\ref{lem:manypartnersets}, we assume that there are at most $3k$ distinct partner sets for the edges of $\critical_{G'}(e)\cap \heavy(\mu(k))$ which lie on $P_1$. 
Let $e_1, \ldots, e_t$ be the set of all edges of $\critical_{G'}(e)\cap \heavy(\mu(k)) \cap E(P_1)$, in the order they appear on $P_1$ from $x$ to $y$.
We define a set of exceptional edges; initially we set $\hat I=\{ 1 \leq i < t \mid \partner(e_{i}) \neq \partner(e_{i+1})\}$
(later we will define further exceptional edges). Then $|\hat I| \leq 3k$ by Observation~\ref{obs:partner_switching}; 
we study the structure of contiguous stretches of edges $e_i$, \ldots, $e_j$
with indices disjoint from $\hat I$. Note that all edges in such a stretch have identical partner sets. 
We make an observation about the structure.

\begin{lemma}
  \label{lem:samepartners}
  Let $Z=\{e_i, \ldots, e_j\}$ be a set of edges of $\critical_{G'}(e)\cap \heavy(\mu(k))\cap E(P_1)$ such that for every $i \leq i'<j' \leq j$, $e_{i'}$ occurs before $e_{j'}$ when traversing $P_1$ from $x$ to $y$ and $\partner_{G'-e}(e_i)=\partner_{G'-e}(e_j)$. Then the following hold:
  \begin{enumerate}
  \item $|\partner_{G'-e}(e_{i'})|=1$ and $\partner_{G'-e}(e_{i'})$ = $ \partner_{G'-e}(e_i)$ for every $i \leq i' \leq j$, say $\partner_{G'-e}(e_{i'})=\{w\}$;
  \item For every $i \leq i' < j$ and  nice path $P$ for  $\segment[i', i'+1]$,  
      $V(P)\cap V(P_2)=\{w\}$.
  \end{enumerate}
\end{lemma}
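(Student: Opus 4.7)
The plan is to derive both statements from Observation~\ref{obs:partner_switching} together with Lemma~\ref{lem:segmentconnections}(\ref{item:nicep2order}). The key idea is that the hypothesis $\partner_{G'-e}(e_i)=\partner_{G'-e}(e_j)$, combined with the monotone ordering of partner sets along $P_2$, will squeeze every intermediate partner set onto a single common vertex~$w$, after which the constraint on nice paths will follow directly from the sandwiching condition of Lemma~\ref{lem:segmentconnections}(\ref{item:nicep2order}).

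For part~(1), I would first apply Observation~\ref{obs:partner_switching} to the pair $(e_i,e_j)$: every vertex of $\partner(e_i)$ is $\leq$ every vertex of $\partner(e_j)$ in the order along $P_2$. Since $\partner(e_i)=\partner(e_j)=:A$ by hypothesis, this forces $v\leq v'$ for every $v,v'\in A$, hence $|A|=1$; let $w$ denote its unique element. For any intermediate index $i<i'<j$ and any $w'\in\partner(e_{i'})$, applying the same observation to $(e_i,e_{i'})$ yields $w\leq w'$, while applying it to $(e_{i'},e_j)$ yields $w'\leq w$, so $\partner(e_{i'})=\{w\}$.

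For part~(2), consider a nice path $P$ for $\segment[i',i'+1]$ with $i\leq i'<j$, and let $v\in V_{\sf int}(P_2)$ denote its endpoint on $P_2$. By the definition of a nice path, $P$ is internally vertex-disjoint from $P_2$, so $V(P)\cap V(P_2)=\{v\}$, and it only remains to show $v=w$. Applying Lemma~\ref{lem:segmentconnections}(\ref{item:nicep2order}) with the index $i'$ and using $w\in\partner(e_{i'})$ gives $w\leq v$; applying it with the index $i'+1$ and using $w\in\partner(e_{i'+1})$ gives $w\geq v$. Hence $v=w$, and the statement follows.

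The main obstacle is essentially absent once part~(1) is in hand: the two monotonicity directions of Lemma~\ref{lem:segmentconnections}(\ref{item:nicep2order}) then sandwich the $P_2$-endpoint of any nice path between $w$ and $w$. The only care needed is to ensure that Observation~\ref{obs:partner_switching} is applied to non-empty partner sets with consistent indices, which is guaranteed because each $e_{i'}\in\critical_{G'}(e)$ has a well-defined non-empty partner set by Observation~\ref{obs:full_existence_3}.
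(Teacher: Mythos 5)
Your proposal is correct and follows essentially the same route as the paper: part (1) is obtained from Observation~\ref{obs:partner_switching} (the equality of the two partner sets together with the ordering forces a single common partner vertex $w$ for all edges in the stretch), and part (2) is exactly the sandwich argument via Lemma~\ref{lem:segmentconnections}(\ref{item:nicep2order}), which the paper invokes without spelling out; your write-up merely makes these steps explicit.
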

\begin{proof}
The first statement follows from Observation~\ref{obs:partner_switching}: since $\partner_{G'-e}(e_i)$ and $\partner_{G'-e}(e_j)$ can intersect in at most one vertex,
we have $|\partner_{G'-e}(e_i)|=1$, and since partner sets appear in an ``ordered'' way on $P_2$, we have $\partner_{G'-e}(e_{i'})=\partner_{G'-e}(e_i)$ for every $i < i' \leq j$.
The second statement follows from the third statement of Lemma~\ref{lem:segmentconnections}. 
%
\end{proof}

\noindent
In light of Lemma \ref{lem:samepartners}, for any edge $e_i$
with $i \notin \hat I$, we let $w(i)$ denote the single partner vertex of $e_i$,
i.e., $\partner_{G'-e}(e_i)=\{w(i)\}$. 

\begin{figure}[t]
  \begin{center}\includegraphics[scale=0.35]{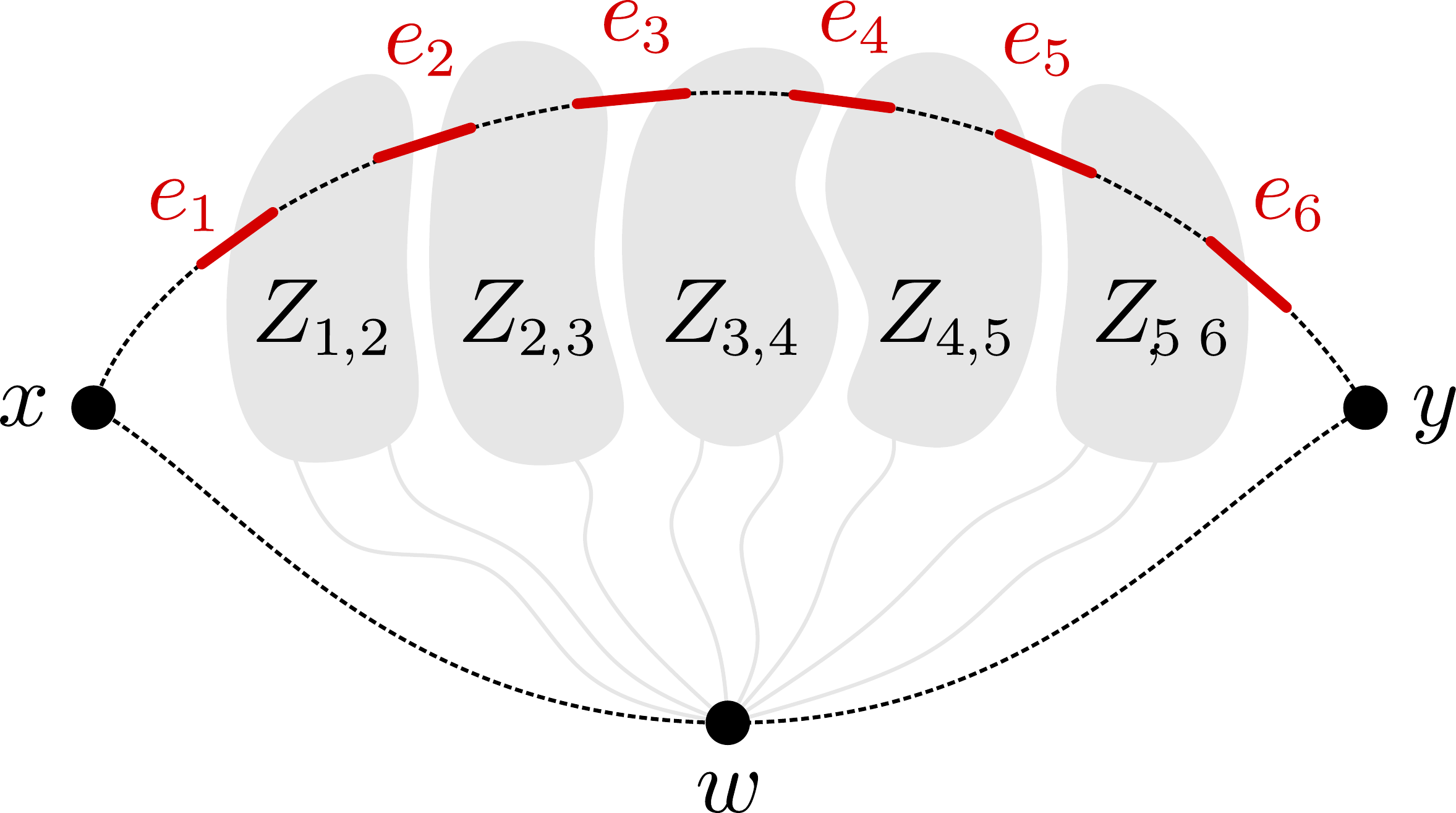}
  	
  \end{center}

  \caption{An illustration of the sets $\{\component[i,i+1]\mid i\in [t-1]\}$ for $t=6$, where the set $Z_{i,i+1}$ represents $\component[i,i+1]$. Note that it is possible that $\component[i,i+1]$ contains a single vertex, in which case this vertex must be the same as both $v_i$ and $u_{i+1}$.}
  \label{fig:components}
\end{figure}

\begin{definition}
  For each $1 \leq i < t$ with $i \notin \hat I$, we define $\component[i,i+1]$ as the set of vertices reachable from  $V(\segment[i,i+1])$ in the graph $G'-\{e_i,e_{i+1},w(i)\}$ (see Figure \ref{fig:components}). We let $\Gamma[i,i+1]$ denote the edge set $E(\component[i,i+1])\cup E(w(i),\component[i,i+1])$. 
\end{definition}

\begin{observation}\label{obs:component_properties}
Let $Z=[t] \setminus \hat I$ be the indices of non-exceptional edges $e_i$. 
The following hold. 
	\begin{itemize}
        \item For every $i \in Z$, $\segment[i,i+1]$ is contained in $\component[i,i+1]$.
        \item For every $i \in Z$, $N_{G'}(\component[i,i+1])=\{u_i,v_{i+1},w(i)\}$.
        \item For every pair $i, j \in Z$, $i \neq j$, 
          the sets $\component[i,i+1]$ and $\component[j,j+1]$ are vertex-disjoint and they are disjoint from $V(P_2)$.
        \item For every pair $i, j \in Z$, $i \neq j$, 
          $\Gamma[i,i+1]$ and $\Gamma[j,j+1]$ are disjoint.
	\end{itemize}
\end{observation}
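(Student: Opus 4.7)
The plan is to handle the four properties in sequence, leveraging the two mixed $x$-$y$ cuts $C_1 = \{e_i, w(i)\}$ and $C_2 = \{e_{i+1}, w(i)\}$ in $G'-e$ that exist by Lemma~\ref{lem:samepartners}: since $i \notin \hat I$, we have $\partner(e_i) = \partner(e_{i+1}) = \{w(i)\}$, so both cuts have the same single partner vertex. I will write $A_1, B_1$ for the $x$- and $y$-sides of $C_1$ and $A_2, B_2$ for those of $C_2$. The first property is essentially immediate: $\segment[i,i+1]$ is a subpath of $P_1$ between $v_i$ and $u_{i+1}$, its edges exclude $e_i$ and $e_{i+1}$, and it misses $w(i)$ because $w(i) \in V_{\sf int}(P_2)$ is internally vertex-disjoint from $P_1$; hence the segment survives entirely in $G' - \{e_i, e_{i+1}, w(i)\}$ and lies in $\component[i,i+1]$.

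For the second property, I will first check that $\segment[i,i+1] \subseteq A_2 \cap B_1$ by tracing $P_1$ from $v_i$ to $u_{i+1}$, and then argue that the only edges of $G'$ crossing the boundary of $A_2 \cap B_1$ are $e_i$, $e_{i+1}$ and edges incident to $w(i)$, because any edge crossing $C_1$ must be $e_i$ or incident to $w(i)$ and symmetrically for $C_2$; the edge $e=(x,y)$ is irrelevant since $x,y \notin A_2 \cap B_1$. Thus, removing $\{e_i, e_{i+1}, w(i)\}$ confines the component containing $\segment[i,i+1]$ to $A_2 \cap B_1$, so its outside neighbours lie in $\{u_i, v_{i+1}, w(i)\}$. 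Each of these three is indeed a neighbour: $u_i$ via $e_i$ incident to $v_i$, $v_{i+1}$ via $e_{i+1}$ incident to $u_{i+1}$, and $w(i)$ by taking a nice path for $\segment[i,i+1]$ from Lemma~\ref{lem:segmentconnections}(\ref{item:pathsp1p2}) and using Lemma~\ref{lem:samepartners} to identify its $P_2$-endpoint as precisely $w(i)$, with the rest of the path living inside $\component[i,i+1]$.

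For the third property, disjointness from $V(P_2)$ is easy: traversing $P_2$ from $x$ to $w(i)$ or from $y$ back to $w(i)$ never crosses $e_i$ or $e_{i+1}$ (these sit on $P_1$), so the vertices of $V(P_2)$ split cleanly into $A_1 \cap A_2$, $\{w(i)\}$, and $B_1 \cap B_2$, all disjoint from $A_2 \cap B_1 \supseteq \component[i,i+1]$. The main obstacle will be proving disjointness of $\component[i,i+1]$ and $\component[j,j+1]$ for $i < j$ in $Z$. For this, I will use the cuts $C_2^{(i)} = \{e_{i+1}, w(i)\}$ and $C_1^{(j)} = \{e_j, w(j)\}$. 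We already have $\component[i,i+1] \subseteq A_2^{(i)}$, so it suffices to show $\component[j,j+1] \subseteq B_2^{(i)}$. Assume for contradiction a path in $G' - \{e_j, e_{j+1}, w(j)\}$ from some $u \in \segment[j,j+1] \subseteq B_2^{(i)}$ to a vertex in $A_2^{(i)}$. This path must cross $C_2^{(i)}$, so it either traverses $e_{i+1}$ or passes through $w(i)$. In the first subcase, the path visits $v_{i+1}$, which lies in $A_1^{(j)}$ (reachable from $x$ along $P_1$ up to $v_{i+1}$, avoiding $e_j$ and $w(j)$); since $u \in B_1^{(j)}$, the path must also cross $C_1^{(j)} = \{e_j, w(j)\}$, but both have been removed. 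In the second subcase, Observation~\ref{obs:partner_switching} gives $w(i) < w(j)$ on $P_2$ (the equal case is trivially impossible since $w(i)=w(j)$ is removed), placing $w(i) \in A_1^{(j)}$ via the initial arc of $P_2$ from $x$, and the same contradiction recurs. The case $j = i+1$ is immediate since then $e_{i+1} = e_j$ and $w(i) = w(j)$ are all simultaneously removed.

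The fourth property then reduces to a short case analysis built on the third: any edge $f \in \Gamma[i,i+1] \cap \Gamma[j,j+1]$ has each of its two endpoints in $\component[i,i+1] \cup \{w(i)\}$ and in $\component[j,j+1] \cup \{w(j)\}$ respectively. Since $w(i), w(j) \in V(P_2)$ are disjoint from both components by the third property, and the components themselves are disjoint, every combination of endpoint-choices leads to an immediate contradiction, completing the proof.
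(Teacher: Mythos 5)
Your argument is correct, but it follows a more self-contained route than the paper. The paper settles the observation in a few lines by citing earlier results: statements (1) and (2) are read off the definition of $\component[i,i+1]$, disjointness from $V(P_2)$ is deduced from the second statement of Lemma~\ref{lem:samepartners}, pairwise disjointness of the components from the first statement of Lemma~\ref{lem:segmentconnections} (no path with endpoints in two different segments can avoid $P_1$ and $P_2$ internally), and (4) is noted to follow from (3). You instead re-derive the containments directly from the mixed $x$-$y$ cuts $\{e_i,w(i)\}$ and $\{e_{i+1},w(i)\}$ in $G'-e$: each $\component[i,i+1]$ is trapped in the intersection of the $y$-side of the first cut with the $x$-side of the second, disjointness from $V(P_2)$ falls out of where $P_2$ sits relative to these sides, and pairwise disjointness comes from a crossing argument against $\{e_{i+1},w(i)\}$ and $\{e_j,w(j)\}$ combined with the partner ordering of Observation~\ref{obs:partner_switching}. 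This is longer than the paper's citation-based proof but makes the confinement of the components explicit, which is arguably more transparent than inferring it from Lemma~\ref{lem:segmentconnections}(1); conversely, the paper's version shows that all the work was already done in the structural lemmas.

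Two small points you should make explicit. First, the cuts $C_2^{(i)}$ and $C_1^{(j)}$ are cuts in $G'-e$, while your hypothetical path from $\segment[j,j+1]$ lives in $G'-\{e_j,e_{j+1},w(j)\}$, which still contains the edge $e=(x,y)$; so ``the path must cross $C_2^{(i)}$, hence uses $e_{i+1}$ or $w(i)$'' has a third conceivable escape, namely crossing via $e$ itself. This is ruled out by your own property-(2) confinement applied at index $j$: since $\component[j,j+1]\subseteq A_2^{(j)}\cap B_1^{(j)}$ and $x,y$ lie outside this set, the path never reaches an endpoint of $e$; the same remark applies when you later force a crossing of $C_1^{(j)}$. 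Second, for the equality in (2) you should note that $u_i\in A_1$ and $v_{i+1}\in B_2$, hence neither lies in $\component[i,i+1]$, so they are genuinely outside neighbours. With these additions the proof is complete.
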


\begin{proof} 
  The first and second statements follow from the definition of
  $\component[i,i+1]$. For the third  statement, the second statement of
  Lemma~\ref{lem:samepartners} implies that  $\component[i,i+1]$ is disjoint
  from $V(P_2)$ and the first statement of  Lemma~\ref{lem:segmentconnections}
  implies that for every $i\neq j$, the sets $\component[i,i+1]$ and
  $\component[j,j+1]$ are vertex-disjoint. The final statement is a  direct
  consequence of the third statement. This completes the proof. 
\end{proof}

\noindent
We need to consider one further complication. Recall that $\hat F$, 
as defined  after Observation \ref{obs:exists_good_index},
denote the edges removed from the original graph $G$ to create $G'$. 

\begin{definition}
  For each $1\leq i\leq t-1$, $i \notin \hat I$, we say that $\component[i,i+1]$ is  {\em affected} if the set $\hat F$ has an endpoint in $\component[i,i+1]$ and {\em unaffected} otherwise.
\end{definition}

\noindent
Since $|\hat F| < k$, it follows that fewer than $2k$ of these disjoint vertex-sets can be affected. We will treat these as a secondary set of exceptional indices; let $\hat J = \{i \in [t] \setminus \hat I \mid \component[i,i+] \text{ is affected}\}$. 
We make a final observation.

\begin{observation}
  Let $e_1, \ldots, e_t$ be as above, with $t \geq 10k^2+23k$.
  Then there is a contiguous sequence $a, \ldots, b \in [t]$ of indices
  such that $b \geq a + 2k+3$ and for every integral $i \in [a,b]$, $\partner(e_i)=\partner(e_a)$
  and $\component[i,i+1]$ is unaffected. 
\end{observation}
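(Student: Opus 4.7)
The plan is to carry out a pigeonhole argument on \emph{bad} indices. The idea is that the only obstructions to producing the desired contiguous sequence are indices in $\hat I \cup \hat J$: an index $i \in \hat I$ marks a switch of partner set between $e_i$ and $e_{i+1}$, while an index $i \in \hat J$ marks that $\component[i,i+1]$ is affected. A maximal subinterval of $[t-1]$ disjoint from $\hat I \cup \hat J$ is then automatically a candidate for the desired $[a,b]$, since both required conditions are determined entirely by whether $i$ avoids $\hat I$ and $\hat J$.

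First I would bound the size of the bad set. The hypothesis that there are at most $3k$ distinct partner sets, combined with Observation~\ref{obs:partner_switching} which tells us that partner sets appear in an ordered way on $P_2$, implies that they can change at most $3k$ times as we traverse $P_1$, so $|\hat I| \leq 3k$. Next, since $|\hat F| < k$ and the sets $\component[i,i+1]$ are pairwise vertex-disjoint by Observation~\ref{obs:component_properties}, fewer than $2k$ of them can contain an endpoint of $\hat F$; hence $|\hat J| < 2k$. As $\hat I$ and $\hat J$ are disjoint by construction, $|\hat I \cup \hat J| \leq 5k-1$.

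Second, I would apply pigeonhole. These at most $5k-1$ bad indices split $[t-1]$ into at most $5k$ intervals of consecutive good indices, whose union has size at least $(t-1) - (5k-1) = t - 5k \geq 10k^2 + 18k$. Therefore some good interval $[a,b] \subseteq [t-1]$ has size at least $\lceil (10k^2+18k)/(5k) \rceil = \lceil 2k + 18/5 \rceil = 2k+4$ for every integer $k \geq 1$, giving $b - a + 1 \geq 2k+4$ and thus $b \geq a + 2k + 3$. Since $[a,b] \cap \hat I = \emptyset$, we obtain $\partner(e_a) = \partner(e_{a+1}) = \ldots = \partner(e_b)$ by induction on the index, and since $[a,b] \cap \hat J = \emptyset$ every $\component[i,i+1]$ with $i \in [a,b]$ is defined and unaffected, as required.

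I do not anticipate a genuine obstacle: the statement is a counting argument and the numerical threshold $t \geq 10k^2 + 23k$ is manifestly tailored so that after removing the at most $5k$ bad indices and dividing into at most $5k$ good intervals, the largest one still has size at least $2k+4$. The only mildly subtle point is that $|\hat I|$ must be bounded by the \emph{number} of distinct partner sets rather than by their total number of occurrences along $P_1$, which is what Observation~\ref{obs:partner_switching} guarantees via the ordering of partner sets along $P_2$.
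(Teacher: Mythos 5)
Your proof is correct and takes essentially the same route as the paper: bound $|\hat I|\leq 3k$ via the ordering of partner sets from Observation~\ref{obs:partner_switching} and $|\hat J|$ by roughly $2k$ via the disjointness of the components, then apply pigeonhole to find a run of at least $2k+4$ consecutive good indices. The paper uses the marginally sharper bound $|\hat J|\leq 2k-2$ and verifies $t\geq (2k+4)(5k-1)+5k-2$ directly instead of dividing the good indices among the at most $5k$ intervals, but the arithmetic is equivalent and your threshold computation checks out.
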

\begin{proof}
  We have $|\hat I|\leq 3k$ and $|\hat J| \leq 2k-2$, hence $[t] \setminus (\hat I \cup \hat J)$
  decomposes into at most $5k-1$ parts.
  With $t \geq (2k+4)(5k-1) + 5k-2 = 10k^2+23k-6$, one of these parts
  will contain at least $2k+4$ indices, hence its bounding indices $a, b$
  will satisfy $b \geq a+2k+3$. 
\end{proof}

\noindent
We refer to such a sequence $e_a, \ldots, e_b$ of edges as an \emph{clean stretch} of $P_1$. 
The remaining task towards the {\sf FPT} algorithm is to show that a sufficiently long
clean stretch contains an irrelevant edge. 

\subsubsection{Reducing clean stretches}

\noindent
We will now restrict our attention to a single clean stretch $[a,b]$, and prove that it contains an irrelevant edge. 
To simplify the notation, let $w=w(a)$. 
We have the following lemma, where $\delta_{H}(Q)$ denotes the edges of $H$ with one endpoint in $Q$.

\begin{lemma} \label{lem:unaffected_component_globally_same}
Let $G'=G-\hat F$ be as above and let $Z=\{e_a, \ldots, e_b\}$ be a clean stretch.
Then for every $a \leq i < b$,
(a) $\component[i,i+1]$ is unaffected, (b) $N_{G'}(\component[i,i+1])$ = $N_{G}(\component[i,i+1])=\{u_i,v_{i+1},w\}$  and (c) $\delta_{G'-w}(\component[i,i+1])=\delta_{G-w}(\component[i,i+1])=\{e_{i},e_{i+1}\}$.
\end{lemma}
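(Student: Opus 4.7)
The plan is to prove all three parts by unpacking definitions and invoking the structural facts already established, particularly Observation \ref{obs:component_properties} and the single-partner conclusion from Lemma \ref{lem:samepartners}. Statement (a) is immediate from the definition of a clean stretch: the stretch was chosen so that every index $i \in [a,b]$ lies outside $\hat J$, which is exactly the statement that $\component[i,i+1]$ is unaffected.

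For statement (b), Observation \ref{obs:component_properties} gives $N_{G'}(\component[i,i+1]) = \{u_i, v_{i+1}, w(i)\}$, and since we are in a clean stretch, Lemma \ref{lem:samepartners} forces $w(i) = w(a) = w$. To transfer this equality from $G'$ to $G$, I would use unaffectedness: because no edge of $\hat F = E(G) \setminus E(G')$ has any endpoint in $\component[i,i+1]$, every such edge has both endpoints outside $\component[i,i+1]$, so reinserting $\hat F$ into $G'$ cannot introduce any new neighbour of $\component[i,i+1]$. Hence $N_G(\component[i,i+1]) = N_{G'}(\component[i,i+1]) = \{u_i, v_{i+1}, w\}$.

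For statement (c), I would argue the two containments. Since $v_i \in V(\segment[i,i+1]) \subseteq \component[i,i+1]$ and $u_i \notin \component[i,i+1]$ by (b), the edge $e_i = (u_i, v_i)$ lies in $\delta_{G'-w}(\component[i,i+1])$; symmetrically for $e_{i+1}$, and neither is incident to $w$. Conversely, any edge $f = (x', u_i) \in \delta_{G'-w}(\component[i,i+1])$ with $x' \in \component[i,i+1]$ must equal $e_i$: otherwise $f$ is not among $\{e_i, e_{i+1}\}$ and, since $u_i \in V(P_1) \setminus V(P_2)$ while $w \in V(P_2)$ gives $u_i \neq w$, the edge $f$ survives in $G' - \{e_i, e_{i+1}, w\}$; but then $u_i$ would be reachable from $\segment[i,i+1]$ in that graph, contradicting $u_i \notin \component[i,i+1]$. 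The argument for edges into $v_{i+1}$ is analogous, so $\delta_{G'-w}(\component[i,i+1]) = \{e_i, e_{i+1}\}$. Equality with $\delta_{G-w}(\component[i,i+1])$ is again immediate from unaffectedness.

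I do not expect a genuine obstacle here; this is a bookkeeping lemma that makes precise the intuition ``on a clean stretch the local picture around each $\component[i,i+1]$ is identical in $G$ and $G'$,'' which is needed before the subsequent irrelevance argument. The only care required is in (c): ensuring $u_i \neq w$ (and $v_{i+1} \neq w$) so that candidate extra edges actually persist after deleting $\{e_i, e_{i+1}, w\}$; this is guaranteed because $u_i, v_{i+1}$ lie internally on $P_1$ while $w \in V_{\sf int}(P_2)$, and $P_1, P_2$ are internally vertex-disjoint.
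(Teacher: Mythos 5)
Your proof is correct and follows essentially the same route as the paper's (much terser) argument: (a) directly from the definition of a clean stretch, (b) and (c) from Observation~\ref{obs:component_properties}, the single-partner conclusion of Lemma~\ref{lem:samepartners}, and the fact that unaffectedness means no edge of $\hat F$ touches $\component[i,i+1]$, so neighbourhoods and boundary edges coincide in $G$ and $G'$. Your explicit two-containment argument for (c) and the check that $u_i,v_{i+1}\neq w$ are just a more detailed write-up of what the paper leaves implicit.
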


\begin{proof} 
Statement (a) holds by definition. 
For statements (b) and (c), the neighbourhoods and incident edges are the same in $G$ as in $G'$ since the components are unaffected, 
and it follows from the definition of $\component[i,i+1]$ that $N_{G'}(\component[i,i+1])=\{u_i,v_{i+1},w\}$ and $\delta_{G'-w}(\component[i,i+1])=\{e_i,e_{i+1}\}$. 
\end{proof}

%
%

\begin{lemma}\label{lem:local_2_flow}
	Let $G',Z=\{e_a,\dots, e_b\}$ be as above. For any $i\in [a+1,b-1]$, the following hold:
	\begin{enumerate}
	\item There is a $v_i$-$\{w,u_{i+1}\}$ flow of value 2 in the graph $G[\component[i,i+1]\cup \{w\}]$.
\item There is a $u_i$-$\{w,v_{i-1}\}$ flow of value 2 in the graph $G[\component[i-1,i]\cup \{w\}]$.
\end{enumerate}
	\end{lemma}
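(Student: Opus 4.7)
The plan is to prove part (1) via Menger's theorem on $H := G[\component[i,i+1] \cup \{w\}]$; part (2) will follow by a symmetric argument with the roles of the two ends of $e_i$ reversed. Writing $C := \component[i,i+1]$, attach a new sink $t^*$ adjacent to both $w$ and $u_{i+1}$. The desired $v_i$-$\{w,u_{i+1}\}$ flow of value $2$ exists iff $\kappa_{H+t^*}(v_i, t^*) \geq 2$, which by Menger holds iff no single vertex $z \in V(H) \setminus \{v_i\}$ separates $v_i$ from $\{w, u_{i+1}\} \setminus \{z\}$ in $H - z$. The subcase $z = w$ is immediate: since $\segment[i,i+1] \subseteq C \subseteq H - w$, the segment itself gives a $v_i$-$u_{i+1}$ path.

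For the main subcase $z \neq w$, I would argue by contradiction. Suppose such a $z$ exists and let $A$ be $v_i$'s component in $H - z$. The key structural claim is that the only edge from $A$ to $V(G) \setminus V(H)$ is $e_i$. This uses Lemma~\ref{lem:unaffected_component_globally_same}, which gives $\delta_{G-w}(C) = \{e_i, e_{i+1}\}$: since $u_{i+1} \notin A$ (either because $z = u_{i+1}$ and $u_{i+1}$ is thus removed, or because $u_{i+1}$ lies in the forbidden set $\{w,u_{i+1}\} \setminus \{z\}$), the edge $e_{i+1}$ has no endpoint in $A$. Moreover $A$ cannot be adjacent to $w$ in $G$, since any such edge would place $w$ into $A$'s component of $H - z$.

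Once the claim is established, the contradiction is short. Because $C$ is unaffected, $G$ and $G'$ agree on all edges incident to $C$, so the same external-edge structure holds in $G'$. Thus deleting $e_i$ and $z$ from $G'$ strips $A$ of every connection to $V(G') \setminus A$, while $u_i \in V(G') \setminus A$ witnesses another component, so $G' - e_i - z$ is disconnected. Hence $z$ is a cut vertex of $G' - e_i$, contradicting the biconnectivity of $G' - e_i$, which holds because $e_i \in \critical_{G'}(e) \setminus \critical_{G'}(\emptyset)$. The main obstacle is pinpointing $e_i$ as the right edge to remove and verifying rigorously that $A$ has no other external edges; once this is observed via Lemma~\ref{lem:unaffected_component_globally_same}, the Menger argument collapses into a one-line contradiction with non-criticality of $e_i$, and uniformly subsuming $z = u_{i+1}$ within the general argument is pure bookkeeping.
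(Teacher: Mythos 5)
Your proof is correct, but it runs in the opposite direction from the paper's. The paper argues directly: since $e_i\in\critical_{G'}(e)$ is non-critical in $G'$, there is a $u_i$-$v_i$ flow of value $2$ in $G'-e_i$; by Lemma~\ref{lem:unaffected_component_globally_same} the set $\{w,u_{i+1}\}$ separates $u_i$ from $v_i$ in $G'-e_i$, so one flow path passes through $w$ and the other through $u_{i+1}$, and truncating them at those vertices yields the desired local flow, which the boundary structure forces to stay inside $G'[\component[i,i+1]\cup\{w\}]$. You instead argue by contradiction via Menger: a hypothetical single-vertex separator $z$ inside $G[\component[i,i+1]\cup\{w\}]$ is lifted, using the same boundary facts $\delta_{G-w}(\component[i,i+1])=\{e_i,e_{i+1}\}$ and $N_G(\component[i,i+1])=\{u_i,v_{i+1},w\}$, to the conclusion that $\{e_i,z\}$ disconnects $G'$, contradicting the biconnectivity of $G'-e_i$. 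So both proofs hinge on exactly the same two ingredients (non-criticality of $e_i$ in $G'$ and Lemma~\ref{lem:unaffected_component_globally_same}); yours is essentially the cut-side (contrapositive) view, the paper's the flow-side view. The paper's version is slightly shorter since it never needs the auxiliary sink or the case analysis over $z$, while yours localizes the contradiction and makes the role of the component boundary very explicit. One small point you gloss over: when $v_i=u_{i+1}$ (a single-vertex segment), your vertex $v_i$ is adjacent to the added sink $t^*$, so the Menger equivalence you invoke does not apply as stated; the paper treats this case separately (trivial path plus a nice path for $\segment[i,i+1]$), and in your setup it is equally immediate since $G[\component[i,i+1]\cup\{w\}]$ is connected, but a sentence to that effect is needed.
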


	\begin{proof} We show the first statement; the proof of the second is analogous. If $v_i=u_{i+1}$, then the statement follows by considering the single-vertex path $v_i$ in combination with the nice path for $\segment[i,i+1]$, hence assume that $v_i \neq u_{i+1}$. 
          Since $e_i$ is not critical in $G'$ it follows that there is a $u_i$-$v_i$ flow of value 2 in $G'-e_i$. However, observe that due to Lemma \ref{lem:unaffected_component_globally_same},  $\{w,u_{i+1}\}$ is a $u_i$-$v_i$ separator in $G'-e_i$. Hence of the two paths of the flow, one contains $w$ and the other contains $u_{i+1}$. Truncating these paths at $w$ and at $u_{i+1}$ produces a flow in $G'$.
By Lemma~\ref{lem:unaffected_component_globally_same}, this truncated flow 
must remain in $G'[\component[i,i+1] \cup \{w\}]$, 
and since $G$ is a supergraph of $G'$ it also exists in $G$.
This completes the proof of the statement.
	\end{proof}

%
%

\begin{lemma}\label{lem:spanning_path}
Let $G',Z=\{e_a,\dots, e_b\}$ be as above. For any $i\in [a+1,b-1]$ and $u_i$-$v_i$ path $P$ in $G-w-e_i$, there exists paths $P_1,P_2,P_3$ such that $P=P_1+P_2+P_3$ and the following hold:

\begin{enumerate}
\item $P_1$ is a $u_i$-$u_a$ path such that for every $a\leq j<i$, $P_1$ contains $e_j$
and $v_{j}$ occurs before $u_j$ when traversing $P_1$ from $u_i$ to $u_a$. 

\item $P_3$ is a $v_i$-$v_b$ path such that 
for every $i<j\leq b$, $P_3$ contains $e_j$ and $u_{j}$ occurs before $v_j$ when traversing $P_2$ from $v_i$ to $v_b$. 

\item  $E(P_2)$ disjoint from $\{e_{j}\mid j\in [a,b]\}$ and $V(P_2)$ is disjoint from $\bigcup_{j\in [a,b-1]} \component[j,j+1]$.
	
\end{enumerate}

\end{lemma}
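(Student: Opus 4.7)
The plan is to trace the path $P$ directly, leveraging the boundary structure of clean-stretch components from Lemma~\ref{lem:unaffected_component_globally_same}: for each $j \in [a,b-1]$, the only edges connecting $\component[j,j+1]$ to the rest of $G-w$ are the two boundary edges $e_j$ and $e_{j+1}$. Since $P$ lives in $G-w-e_i$, this pins down every possible entry and exit between a clean-stretch component and the rest of the graph. The starting and ending points are in different clean-stretch components (respectively, $u_i \in \component[i-1,i]$ since $u_i$ is the right endpoint of $\segment[i-1,i]$, and $v_i \in \component[i,i+1]$), so $P$ must traverse the boundary of these components a controlled number of times.

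I would begin at $u_i$ and argue that the only live boundary edge of $\component[i-1,i]$ is $e_{i-1}$ (because $e_i$ is deleted and $w$ is deleted), so $P$ must exit $\component[i-1,i]$ by traveling from $v_{i-1}$ to $u_{i-1}$, and cannot re-enter since $P$ is simple and the edge is spent. The vertex $u_{i-1}$ lies in $\component[i-2,i-1]$ whenever $i \geq a+2$, and by the same argument the next transition is forced to be $e_{i-2}$. Iterating this forced-chain argument gives that $P$ traverses precisely $\component[i-1,i],\component[i-2,i-1],\ldots,\component[a,a+1]$ in order, consuming exactly $e_{i-1},e_{i-2},\ldots,e_a$ as transitions, and finally leaves the clean stretch at $u_a$. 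I take $P_1$ to be this prefix of $P$. Applying the identical forced-chain argument to the reverse of $P$ starting from $v_i$ yields a symmetric suffix $P_3$ that begins at $v_b$ (outside the clean stretch), enters $\component[b-1,b]$ via $e_b$, and then traverses $\component[b-1,b],\ldots,\component[i,i+1]$ using edges $e_b,e_{b-1},\ldots,e_{i+1}$. The remaining middle piece of $P$ from $u_a$ to $v_b$ is defined to be $P_2$.

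The three stated properties then follow quickly. Each transition edge $e_j$ along $P_1$ crosses from $v_j$ to $u_j$, giving the ordering required in (1); the symmetric observation about $P_3$ yields (2). For (3), every boundary edge $e_j$ with $j \in [a,b]$ is either placed on $P_1$, placed on $P_3$, or equals the absent $e_i$, so $E(P_2)\cap\{e_j\mid j\in[a,b]\}=\emptyset$; and any vertex of $P_2$ inside some $\component[j,j+1]$ with $j \in [a,b-1]$ would force $P_2$ to cross one of the now-excluded boundary edges of that component (since $P_2$ also avoids $w$), a contradiction. The main delicacy in the proof is justifying the forced-chain claim, which must rule out the possibility that $P$ leaves a clean-stretch component only to re-enter through some unforeseen route; this is precisely what Lemma~\ref{lem:unaffected_component_globally_same} prevents, because each $\component[j,j+1]$ has only the two non-$w$ boundary edges $e_j,e_{j+1}$, both of which are consumed by the chain.
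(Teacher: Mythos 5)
Your proposal is correct and follows essentially the same route as the paper: both use Lemma~\ref{lem:unaffected_component_globally_same} to observe that $\delta_{G-w-e_i}(\component[i-1,i])=\{e_{i-1}\}$, iterate this forced-exit argument down to $e_a$ (and symmetrically up to $e_b$), and then conclude statement (3) because $P_2$ avoids all the edges $e_a,\dots,e_b$ and hence cannot enter any $\component[j,j+1]$. No substantive differences to report.
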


\begin{proof}


Observe that $u_{i}$ lies in the set $\component[i-1,i]$. Furthermore, by Lemma \ref{lem:unaffected_component_globally_same}, we know that $\delta_{G-w-e_i}(\component[i-1,i])=\{e_{i-1}\}$. Since $\component[i-1,i]$ does not contain $v_i$, it must be the case that $P$ contains the edge $e_{i-1}$ and furthermore, $v_{i-1}$ is encountered before $u_{i-1}$ when traversing $P$ from $u_i$ to $v_i$. We can then repeat the same argument for $e_{i-1},e_{i-2}$ and so on until $e_a$. Hence, we conclude that $u_a$ lies on $P$ and that $P_1$, which is the subpath of $P$ from $u_i$ to $u_a$, contains every $e_j$ such that $a\leq j<i$. Furthermore, for every $a\leq j<i$, $v_{j}$ is encountered before $u_j$ when traversing $P_1$ from $u_i$ to $u_a$. This completes the argument for the first statement.

A symmetric argument implies that $v_b$ lies on $P$ and that $P_3$, which is the subpath of $P$ from $v_i$ to $v_b$, contains every $e_j$ such that $i<j\leq b$. Furthermore, for every $i<j\leq b$, $u_{j}$ is encountered before $v_j$ when traversing $P_2$ from $v_i$ to $v_b$.

For the final statement, observe that  $E(P_1)\cup E(P_3)$ contains the set $\{e_a,e_{a+1},\dots, e_{b}\}\setminus e_{i}$. Therefore,  the subpath of $P$ from $u_a$ to $v_b$, which we denote by $P_2$, is disjoint from the set $\{e_a,e_{a+1},\dots, e_{b}\}$. From Lemma \ref{lem:unaffected_component_globally_same}, we infer that the only way $P_2$ can contain a vertex of $\component[j,j+1]$ for some $j\in [a,b-1]$ is if it contains either $e_{j}$ or $e_{j+1}$. Since we have already ruled this out, we conclude that $P_2$ is disjoint from the set $\bigcup_{j\in [a,b-1]} \component[j,j+1]$.  This completes the proof of the lemma.
\end{proof}

\noindent
We are now ready to prove our lemma concerning irrelevant edges.

\begin{figure}[t]
  \begin{center}
    \includegraphics[width=350pt]{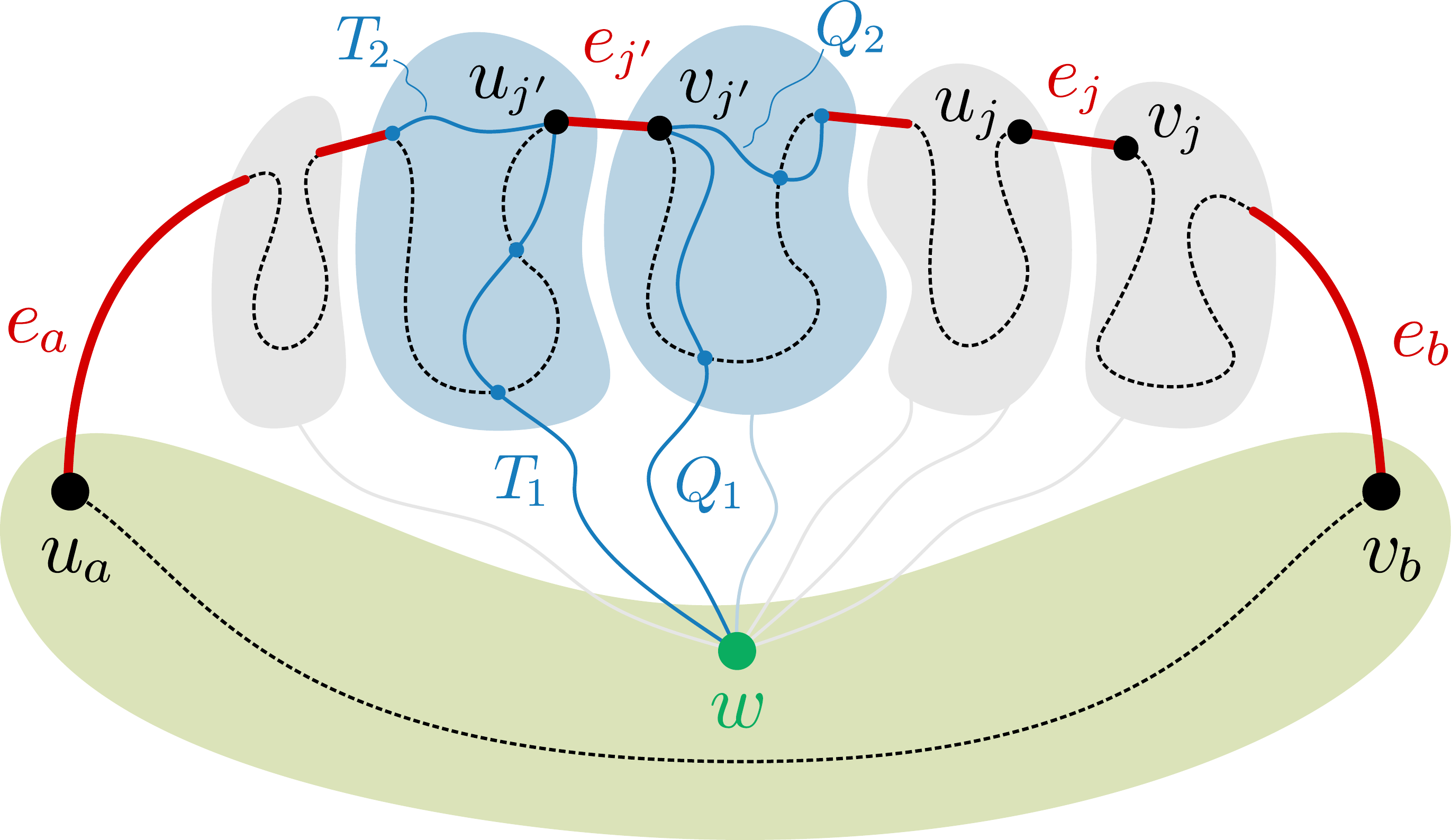}	
  \end{center}

  \caption{An illustration of the paths used in the proof of Lemma \ref{lem:irrelevant_edge}. The dashed lines represent the $u_j$-$v_j$ path $R$.}
  \label{fig:final_paths}
\end{figure}

\begin{lemma}\label{lem:irrelevant_edge}
	Let $G',Z=\{e_a,\dots, e_b\}$ be as above where $b \geq a + 2k+3$.
        Let $j\in [a+1,b-1]$ be such that for every  $f\in \{e_{a+1},\dots, e_{b-1}\}$, $w(f)\geq w(e_j)$. Then, $e_j$ is irrelevant.
\end{lemma}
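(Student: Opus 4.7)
The proof strategy is to take an arbitrary solution $S$ containing $e_j$ and swap $e_j$ for some $e_{j'}\in\{e_{a+1},\ldots,e_{b-1}\}\setminus\{e_j\}$ to obtain $S':=(S\setminus\{e_j\})\cup\{e_{j'}\}$; by the choice of $j$ we have $w(e_{j'})\ge w(e_j)$, so $|S'|\le k$ and $w(S')\ge w(S)\ge w^*$ are automatic, and the only non-trivial condition to verify is that $G-S'$ remains biconnected.

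I would first establish the structural fact that $|S\cap Z|\le 1$ for every valid solution, where $Z:=\{e_a,\ldots,e_b\}$. If $S$ contained two such edges $e_p,e_q$ with $p<q$, then by Lemma~\ref{lem:unaffected_component_globally_same} the only external edges of the set $U:=\bigcup_{p\le i<q}\component[i,i+1]$ in $G-\{e_p,e_q\}$ are edges to $w$; thus $w$ would be a cut-vertex of $G-\{e_p,e_q\}$, and since $G-S$ is a spanning subgraph with even fewer edges, $w$ remains a cut-vertex of $G-S$, contradicting biconnectivity. Hence $S\cap Z=\{e_j\}$, and any candidate $e_{j'}\in Z\setminus\{e_j\}$ automatically satisfies $e_{j'}\notin S$. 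I then pick $j'$ by pigeonhole: by Observation~\ref{obs:component_properties} the sets $\Gamma[i,i+1]$ are pairwise disjoint and $e_j$ is in none of them, so the $\le k-1$ edges of $S\setminus\{e_j\}$ can ``touch'' at most $k-1$ segments, and each touched segment rules out at most two of the $\ge 2k+2$ candidate indices in $[a+1,b-1]$. At least $(2k+2)-2(k-1)=4$ good indices remain, so I fix any such $j'\ne j$ for which both neighbouring segments $\component[j'-1,j']$ and $\component[j',j'+1]$ are untouched by $S$.

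The main technical step is to show that $e_{j'}$ is non-critical in the biconnected graph $H:=G-(S\setminus\{e_j\})=(G-S)+e_j$, or equivalently that $H-e_{j'}=G-S'$ is biconnected; by Menger's theorem this reduces to constructing two internally vertex-disjoint $u_{j'}$-$v_{j'}$ paths in $H-e_{j'}$. Path $\pi_1$ passes through $w$: because both neighbouring segments are untouched, the two flows of Lemma~\ref{lem:local_2_flow} survive in $H$ and integral flow decomposition yields internally disjoint paths $\alpha_1: u_{j'}\to w$, $\alpha_2: u_{j'}\to v_{j'-1}$ inside $H[\component[j'-1,j']\cup\{w\}]$ and $\beta_1: v_{j'}\to w$, $\beta_2: v_{j'}\to u_{j'+1}$ inside $H[\component[j',j'+1]\cup\{w\}]$; I set $\pi_1:=\alpha_1+\beta_1^{-1}$. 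Path $\pi_2$ wraps around the clean stretch: applying Lemma~\ref{lem:spanning_path} to any $u_j$-$v_j$ path in $G-S-w$ (connected because $H-e_j=G-S$ is biconnected) yields a decomposition $P_1'+P_2'+P_3'$ living entirely in $H-w$, producing an internal path $\gamma_l$ inside every segment $\component[l,l+1]$ for $l\in[a,b-1]$ together with an external $u_a$-$v_b$ detour $P_2'$ disjoint from all segment interiors and from $w$. Concatenating $\alpha_2$, the edges $e_{j'-1},e_{j'-2},\ldots,e_a$ interleaved with appropriately oriented $\gamma_l$'s for $l\in[a,j'-2]$, then $P_2'$, then $e_b,e_{b-1},\ldots,e_{j'+1}$ with the $\gamma_l$'s for $l\in[j'+1,b-1]$, and finally $\beta_2^{-1}$ yields $\pi_2\subseteq H-e_{j'}$.

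The subtle step will be verifying internal vertex-disjointness of $\pi_1$ and $\pi_2$. Inside $\component[j'-1,j']$ and $\component[j',j'+1]$ this is immediate from the flow decomposition, since $\alpha_1,\alpha_2$ (respectively $\beta_1,\beta_2$) are internally disjoint by construction and their non-shared endpoints $w,v_{j'-1}$ (respectively $w,u_{j'+1}$) are distinct; elsewhere, the remaining $\gamma_l$'s live in segments that Observation~\ref{obs:component_properties} guarantees are vertex-disjoint from $\component[j'-1,j']\cup\component[j',j'+1]\cup\{w\}$, and $P_2'$ avoids both all segment interiors and $w$ by Lemma~\ref{lem:spanning_path}. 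Once disjointness is established, $e_{j'}$ is non-critical in $H$, so $G-S'=H-e_{j'}$ is biconnected, $S'$ is a valid solution not containing $e_j$, and $e_j$ is irrelevant.
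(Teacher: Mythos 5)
Your proposal is correct and takes essentially the same route as the paper's proof: choose $e_{j'}$ by pigeonhole over the pairwise disjoint $\Gamma$-sets so that both flanking components are untouched by $S$, route one $u_{j'}$-$v_{j'}$ path through $w$ using Lemma~\ref{lem:local_2_flow}, and obtain the second, $w$-avoiding path from the decomposition of a $u_j$-$v_j$ path in $G-S-w$ given by Lemma~\ref{lem:spanning_path}. The only cosmetic differences are that you derive $S\cap Z=\{e_j\}$ via a direct cut-vertex argument around $w$ (the paper reads it off Lemma~\ref{lem:spanning_path}) and that you assemble the second path piecewise, whereas the paper forms a cycle through $e_j$ and $e_{j'}$ and deletes $e_{j'}$ --- the resulting paths coincide.
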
 

\begin{proof}
In order to prove the lemma, we need to argue that if there is a solution for the instance $(G,k,w^*,w,E^\infty)$, then there is one which does not contain $e_j$. Let $S$ be a solution for this instance. If $S$ is disjoint from $e_j$, then we are done. Suppose that this is not the case. We first argue that there is an edge which is not in $S$ and has certain special properties. We will then argue that replacing $e_j$ with this special edge also leads to a solution for the same instance.

\begin{claim} 
	There exists $j'\in [a+1,b-1]$ such that $S$ is disjoint from $\Gamma[j'-1,j']\cup\{e_{j'}\}\cup  \Gamma[j',j'+1]$.
\end{claim}

\begin{proof}
  We first observe that $S$ is disjoint from $\{e_a, \ldots, e_b\} \setminus \{e_j\}$
  by Lemma~\ref{lem:spanning_path}. 
  Next, for $\ell\in \{0,1,\dots ,k\}$ let $\ell'=a+2\ell$. Since there are $2k+2$ edges in the set $\{e_{a+1},\dots, e_{b-1}\}$ and by Observation \ref{obs:component_properties}, it follows that there are $k+1$ edge-disjoint sets $K_1,\dots, K_{k+1}$   where  
  \[
    K_\ell = \Gamma[\ell'+1,\ell'+2]  \quad \bigcup \quad \Gamma[\ell'+2,\ell'+3].
  \]
  Since $|S|\leq k$, we conclude that there is an index $j'$ such that $S$ is disjoint from $\Gamma[j'-1,j']\cup\{e_{j'}\}\cup  \Gamma[j',j'+1]$. This completes the proof of the claim. 
\end{proof}

\noindent
Furthermore, by our choice of $e_j$, it follows that $w(e_{j'})\geq w(e_j)$. Let $S'=S\setminus \{e_j\}\cup \{e_{j'}\}$. Clearly, $|S'|\leq k$ and $w(S')\geq w(S)$. We now argue that $S'$ is also a biconnectivity deletion set.\looseness-1

Observe that in order to do so, it suffices to prove that there is a $u_{j'}$-$v_{j'}$ flow of value 2 in $G-S'$. Let ${\cal Q}=\{Q_1,Q_2\}$ 
be a  $v_{j'}$-$\{w,u_{j'+1}\}$ flow of value 2 in the graph $G[\component[j',j'+1]\cup \{w\}]$ where $Q_1$ is incident with $w$ and $Q_2$ with $u_{j'+1}$ (see Figure \ref{fig:final_paths}). This flow exists by Lemma \ref{lem:local_2_flow}.
 Similarly, let ${\cal T}=\{T_1,T_2\}$ be a $u_{j'}$-$\{w,v_{j'-1}\}$ flow of value 2 in the graph $G[\component[j'-1,j']\cup \{w\}]$ where $T_1$ is incident with $w$ and $T_2$ with $v_{j'-1}$. By Observation \ref{obs:component_properties}, we know that $\component[j',j'+1]$ and $\component[j'-1,j']$ are vertex-disjoint. Therefore, $J_1=Q_1+T_1$ is a $u_{j'}$-$v_{j'}$ path in $G-S'$ and furthermore, $J_1$ is contained in $G[\component[j'-1,j']\cup \component[j',j'+1]\cup \{w\}]$. 
 
Since $S$ is a solution containing $e_j$, it follows that there is a $u_{j}$-$v_{j}$ flow of value 2 in the graph $G-S$. In particular, there is a $u_j$-$v_j$ path $R$ in the graph $G-S-w-e_j$. 
We assume without loss of generality that $j>j'$. The arguments in the other case are exactly the same. Due to Lemma \ref{lem:spanning_path}, we know that $R=R_1+R_2+R_3$ where the paths $R_1$, $R_2$ and $R_3$ satisfy the stated properties.

Let $R_1'$ be the subpath of $R_1$ from $u_j$ to $u_{j'+1}$, $R_1''$ be the subpath of $R_1$ from  $v_{j'-1}$ to $u_{a}$. Now, observe that $H=R_1'+Q_2+e_{j'}+T_2+R_1''+R_2+R_3$ is also a $u_j$-$v_j$ path in $G-S-w-e_j$. Futhermore, $H$ intersects $J_1$ only in $\{u_{j'},v_{j'}\}$. 

Now, $C=H+e_j$ is a cycle in $G-S-w$ such that  $S\cap E(C)=e_j$ and $S'\cap E(C)=e_{j'}$. 
Therefore, $J_2=C-e_{j'}$ is a $u_{j'}$-$v_{j'}$ path in $G-S-w$. Since $C$ intersects $J_1$ only in $\{u_{j'},v_{j'}\}$, we conclude that $J_2$ is internally vertex-disjoint from $J_1$ and contains no edges of $S'$. Thus, we have demonstrated a $u_{j'}$-$v_{j'}$ flow $\{J_1,J_2\}$ of value 2 in the graph $G-S'$, implying that $S'$ is a biconnectivity deletion set and hence a solution for the instance $(G,k,w,w^*,E^\infty)$. Therefore, we conclude that the edge $e_j$ is irrelevant, completing the proof of the lemma.
\end{proof}

\noindent
Combining Lemma~\ref{lem:manypartnersets} and the fact that we can clearly
locate the irrelevant edge $e_j$ (in the statement of
Lemma~\ref{lem:irrelevant_edge}) in polynomial time, we obtain
Lemma~\ref{lem:biconnected_bound_on_non_critical_edges}, our main
objective. This concludes the description of our algorithm for {\twoconndel}.

\subsection{A randomized kernel for {\sc Unweighted Biconnectivity Deletion}}
\label{subsubsec:kernel}

We now present our randomized kernel for the {\twoconndel} problem where instances are of the form $(G,k,w^*,w,E^\infty)$ where $w(e)=1$ for every $e\in E(G)\setminus E^\infty$, $w(e)=0$ for every $e\in E^\infty$, and $w^*=k$. This version of the problem will be referred to as {\sc Unweighted Biconnectivity Deletion} and instances of this problem will henceforth be of the form $(G,k,E^\infty)$ where a solution is a biconnectivity deletion set of size $k$ contained in $E(G)\setminus E^\infty$. We continue to refer to the set $E(G)\setminus E^\infty$ as the set of potential solution edges and assume without loss of generality that at any point, any edge in the set $\critical_G(\emptyset)$ is already part of $E^\infty$.
Finally, recall that a \emph{linkage} from $A$ to $B$ in a digraph $D$, where $A$
and $B$ are vertex sets, is a collection of $|A|=|B|$ pairwise
vertex-disjoint paths originating in $A$ and terminating in $B$.

Our kernelization relies on a result of Kratsch and Wahlstr\"om~\cite{KratschW12}. Before we are able to state it formally, we need the following  definitions.  Let us define a \emph{potentially overlapping $A$-$B$ vertex cut} in a
digraph $D$ to be a set of vertices $C \subseteq V(D)$ such that $D-C$
contains no directed path from $A \setminus C$ to $B \setminus C$.
For any digraph $D$ and set $X \subseteq V(D)$, a set $Z\subseteq V(D)$
   is called a  \emph{cut-covering set} for $(D,X)$ if 
  for any $A, B, R \subseteq X$, 
  there is a minimum-cardinality potentially overlapping 
  $A$-$B$ vertex cut $C$ in $D-R$
  such that $C \subseteq Z$. We are now ready to state the result of Kratsch and Wahlstr\"om on which our kernelization is based.

\begin{lemma}[Corollary 3,  \cite{KratschW12}]
  \label{lemma:cutcover}
  Let $D$ be a directed graph 
  and let $X \subseteq V(D)$.
  We can identify a cut-covering set $Z$  for $(D,X)$ of size $O(|X|^3)$ in polynomial time with failure probability $O(2^{-|V(D)|})$.
%
\end{lemma}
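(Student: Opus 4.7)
The plan is to prove Lemma \ref{lemma:cutcover} using the gammoid plus representative-set framework. First, I would encode vertex-cut information in $D$ algebraically by associating with $D$ the gammoid $\Gamma$ on ground set $V(D)$, where a set $S \subseteq V(D)$ is independent iff there is a linkage (vertex-disjoint path collection) from $S$ to $X$ in the vertex-split version of $D$. By Menger's theorem, minimum potentially overlapping $A$-$B$ vertex cuts in $D-R$ admit a matroid-theoretic description: a vertex $v \notin R$ lies in some minimum such cut precisely when it occupies a ``bottleneck'' position with respect to linkages in $\Gamma$ contracted/deleted by $R$.

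Next, I would construct a randomized linear representation of $\Gamma$ over a field $\mathbb{F}$ of size $2^{\Omega(|V(D)|)}$, assigning each vertex a vector of path-polynomials evaluated at random field elements (in the style of Lindstr\"om--Lov\'asz). A Schwartz--Zippel argument shows that with probability at least $1 - O(2^{-|V(D)|})$ the resulting matrix faithfully represents $\Gamma$, \ie matroid rank equals matrix rank on every subset simultaneously.

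With a linear representation in hand, I would invoke the matroid representative-set theorem (Lov\'asz; Marx; Fomin--Lokshtanov--Panolan--Saurabh) to compute a representative subfamily $Z \subseteq V(D)$ that preserves membership in minimum cuts for all triples $(A,B,R)$ at once. Since the rank of $\Gamma$ is at most $|X|$ and the ``extension'' needed to certify membership in a cut for a fixed $(A,B,R)$ involves at most $O(|X|)$ additional elements, an appropriately chosen representative family has size polynomial in $|X|$; a careful accounting of the three slots $A$, $B$, $R$ (each a subset of $X$) yields the bound $|Z| = O(|X|^3)$. Unpacking the definition of representative family, for any triple $(A,B,R)$ the algebraic witness of a minimum cut can be ``rerouted'' through elements of $Z$, producing a minimum $A$-$B$ cut in $D-R$ contained in $Z$.

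The main obstacle is the uniformity claim: a single $Z$ must simultaneously serve \emph{all} triples $(A,B,R)$ with size merely cubic in $|X|$. The gammoid viewpoint is what enables this, since the three sets $A, B, R$ all index into the same matroid, and the representative-set guarantee has exactly the ``for every bounded extension'' form required. A secondary delicate point is that the definition of \emph{potentially overlapping} cut requires a mild adaptation of the Menger-style algebraic equivalence (compared to the standard non-overlapping variant), but this is routine once the underlying gammoid is in place; the randomization in the statement comes solely from the Schwartz--Zippel step, and the polynomial running time follows from the efficient truncation procedures available for representable matroids.
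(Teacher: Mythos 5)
You are trying to prove a statement that the paper itself does not prove: Lemma~\ref{lemma:cutcover} is imported verbatim as Corollary~3 of Kratsch and Wahlstr\"om \cite{KratschW12}, and the paper's ``proof'' is simply that citation. So the only meaningful comparison is with the original proof in \cite{KratschW12}, and your sketch does follow that proof's route: model vertex cuts through a gammoid on $V(D)$ with terminal set derived from $X$, obtain a linear representation by the Lindstr\"om--Lov\'asz random-matrix construction (this is where the $O(2^{-|V(D)|})$ failure probability and the randomization come from, via Schwartz--Zippel over a large enough field), and then apply a representative-set computation over a matroid of rank $O(|X|)$ to extract a set $Z$ of size polynomial --- in fact cubic --- in $|X|$ that hits a minimum cut for every choice of $A,B,R\subseteq X$.

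That said, as written your proposal is a plan rather than a proof, and the parts you wave at are exactly the technical heart of \cite{KratschW12}. First, ``a vertex $v$ occupies a bottleneck position'' is not a usable criterion: the original argument works with the \emph{closest} (or essential-vertex) characterization of minimum $A$--$B$ cuts and shows that membership of $v$ in such a canonical cut is equivalent to an independence condition in a suitable matroid; without fixing such a canonical cut the ``for all triples $(A,B,R)$ simultaneously'' claim does not follow. Second, since $D$ is directed, one gammoid does not suffice: reachability from $A$ to $v$ and from $v$ to $B$ must be captured separately, which in the original proof is done with gammoids of $D$ and of its reversal (plus handling of the deletable set $R$), and the $O(|X|^3)$ bound comes out of a representative-family computation over the direct sum of these matroids, each of rank $O(|X|)$ --- not from ``accounting for the three slots $A,B,R$'' in a single matroid. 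Third, the ``potentially overlapping'' variant ($C$ may intersect $A\cup B$) is absorbed by a standard vertex-splitting/sink-copy construction, which you should make explicit rather than call routine. None of these gaps indicates a wrong approach --- it is the approach of the cited paper --- but filling them essentially amounts to reproducing the proof of \cite{KratschW12}, which is precisely why the present paper states the lemma as a black box instead of proving it.
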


\noindent
Armed with this lemma, we first give a randomized kernelization that outputs
an instance whose size is bounded polynomially in the number of the potential
solution edges in the input instance.

\begin{lemma} \label{lemma:kernel-representation}
 
  \textsc{Unweighted Biconnectivity Deletion} has a randomized kernel  
with number of vertices bounded by $O(|E(G)\setminus E^\infty|^3)$.
\end{lemma}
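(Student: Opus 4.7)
The plan is to apply the cut-covering lemma (Lemma~\ref{lemma:cutcover}) to a directed auxiliary graph that encodes biconnectivity of $G$ under edge deletions as a family of vertex-cut queries. First, I would convert edge deletions into vertex deletions by subdividing every potential solution edge $e = uv \in E(G)\setminus E^\infty$ with a fresh vertex $v_e$, placing the two new edges into $E^\infty$; call the resulting graph $G^*$. Since subdivision preserves biconnectivity, $S\subseteq E(G)\setminus E^\infty$ is a biconnectivity deletion set for $G$ iff $V_S := \{v_e : e \in S\}$ is a vertex set whose removal leaves $G^*$ biconnected. Next, I would convert $G^*$ into a digraph $D$ via the standard vertex-split: each $v \in V(G^*)$ becomes $v^-, v^+$ joined by an arc $v^- \to v^+$, and each undirected edge $\{u,v\}$ becomes arcs $u^+ \to v^-$ and $v^+ \to u^-$. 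Vertex cuts in $G^*$ then correspond to vertex cuts of split-vertices in $D$.

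Let $X_0 \subseteq V(G^*)$ be the union of all endpoints of potential solution edges and all subdivision vertices $v_e$, so $|X_0| = O(|E(G)\setminus E^\infty|)$. Set $X := \{v^-, v^+ : v \in X_0\} \subseteq V(D)$ and apply Lemma~\ref{lemma:cutcover} to $(D, X)$ to obtain a cut-covering set $Z$ of size $O(|X|^3) = O(|E(G)\setminus E^\infty|^3)$. The kernel is then constructed as a torso-like reduction of $G^*$ onto $W := X_0 \cup \{v \in V(G^*) : \{v^-, v^+\} \cap Z \ne \emptyset\}$: keep the vertices in $W$ together with all edges of $G^*$ between them, and for each connected component $C$ of $G^* - W$, replace $C$ by a bounded-size gadget on $N_{G^*}(C) \subseteq W$ that preserves, for every pair $u, w \in N_{G^*}(C)$, the number of internally vertex-disjoint $u$-$w$ paths through $C$. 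This yields an undirected graph on $O(|E(G)\setminus E^\infty|^3)$ vertices.

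For correctness, the key observation is that any cut vertex $c$ in $G^* - V_S$ with separated components $A, B$ must have $A \cap V^* \ne \emptyset$ and $B \cap V^* \ne \emptyset$, where $V^* \subseteq X_0$ denotes the endpoints of potential solution edges; indeed, if $A \cap V^* = \emptyset$ then no vertex of $A$ is a neighbour of any vertex in $V_S$ (since the only neighbours of the subdivision vertices are in $V^*$), so $c$ would already separate $A$ from $B$ in $G^*$, contradicting biconnectivity. Consequently biconnectivity of $G^* - V_S$ is equivalent to the statement that for every pair $u, w \in X_0$ the minimum $u^+$-$w^-$ vertex cut in $D$ after removing the split images of $V_S$ has size at least $2$, which is an instance of a $(A, B, R)$-cut query with $A, B, R \subseteq X$. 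By the cut-covering property of $Z$, a representative minimum cut for each such query lies inside $Z$ and is thus preserved by the torso construction. The main obstacle will be formalising the gadget replacement so that all these $x$-$y$ cut values for $x, y \in X$ and removals $R \subseteq X_0$ match those in $D$ exactly, and then inverting the vertex-split and subdivision operations to produce a valid undirected instance of \textsc{Unweighted Biconnectivity Deletion}; this is a standard but careful application of the Kratsch--Wahlstr\"om machinery, and the randomised failure is inherited from Lemma~\ref{lemma:cutcover}.
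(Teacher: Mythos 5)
Your high-level strategy is the same as the paper's (subdivide the potential solution edges, pass to an auxiliary digraph, invoke Lemma~\ref{lemma:cutcover} with a terminal set of size $O(|E(G)\setminus E^\infty|)$, and keep only a torso on the $O(|E(G)\setminus E^\infty|^3)$ returned vertices), and your observation that every cut vertex of $G^*-V_S$ must separate two endpoints of potential solution edges is correct and is implicitly what makes the terminal set small. However, there is a genuine gap at the heart of the argument: the reduced instance is never actually constructed. Replacing each component $C$ of $G^*-W$ by ``a bounded-size gadget on $N_{G^*}(C)$ that preserves, for every pair of boundary vertices, the number of internally vertex-disjoint paths through $C$'' is not a construction -- no such gadget is exhibited, its size is not bounded (so the $O(|E(G)\setminus E^\infty|^3)$ vertex bound is unsubstantiated), and preserving pairwise path multiplicities component-by-component is neither clearly achievable in a simple graph nor shown sufficient for preserving biconnectivity of $G-S$ for all $S$. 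The paper resolves exactly this difficulty by a different mechanism: it takes the \emph{plain} torso of $G$ on a set $Y\subseteq V(G)$ (each added edge is a single edge placed into $E^\infty$), but only \emph{after} Reduction Rule~\ref{redrule:firstrule}, which greedily deletes (and charges to the budget) any edge $(u,v)\in F$ that has a parallel $u$-$v$ path avoiding $F$ and $Y\setminus\{u,v\}$; the soundness of that rule itself needs the cut-covering set. This rule disposes of precisely the parallel connections that a single undeletable torso edge cannot encode, and your proposal has no analogue of it -- this is where the real work of the proof lies, together with the two-directional linkage-transfer argument for the torso (Claim inside Rule~\ref{redrule:secondrule}'s soundness), which you do not supply.

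A second, more technical mismatch: Lemma~\ref{lemma:cutcover} speaks of minimum \emph{potentially overlapping} $A$-$B$ cuts, which may contain vertices of $A\cup B$. With $A=\{u^+\}$ and $B=\{w^-\}$ the minimum such cut always has size $1$ (take $\{u^+\}$ itself), so the query ``is the $u^+$-$w^-$ cut at least $2$'' is not one the lemma covers. The paper's auxiliary digraph adds $v^+,v^-$ only for endpoints of $F$-edges and phrases the condition as a $2$-linkage from the two-element set $\{u^+,u\}$ to $\{v^-,v\}$ in $D-\{x_e\mid e\in S\}$, exactly so that ``every potentially overlapping cut has size $\geq 2$'' characterizes the desired flow of value $2$; your formulation would need to be recast with two-element source and sink sets (and the adjacent-pair case handled). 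Finally, ``inverting the vertex-split and subdivision operations'' to output a legitimate instance of \textsc{Unweighted Biconnectivity Deletion} is also left open; the paper sidesteps this by defining $Y$ as a subset of $V(G)$ and performing the torso in $G$ itself.
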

\begin{proof}
Let $F=E(G)\setminus E^\infty$ be the set of potential solution edges. 	
  Now, the kernelization task essentially consists of retaining enough
  information from the input graph $G$ to verify for any set $S \subseteq F$,  
  whether $S$ is a biconnectivity deletion set for $G$. 
  Observe that this is equivalent to verifying whether there exists an edge $e=(u,v) \in S$, such that the maximum value of a $u$-$v$ flow in $G-S$ is less than 2. 
  We show an equivalent formulation of this as a question about the existence of 
  linkages in an auxiliary digraph, followed by an appropriate invocation of Lemma~\ref{lemma:cutcover}.

  For the formulation, we create a digraph $D_{G,F}$ from $G$ and $F$. We refer to this digraph as $D$ when $G$ and $F$ are clear from the context. In the first step, subdivide every edge $e \in F$ with a new vertex $x_e$. That is, for an edge $e=(u,v)\in F$, we create a new vertex $x_e$, remove the edge $e$ and add edges $(u,x_e)$ and $(v,x_e)$. 
  Let $G_1$ be the resulting undirected graph. In the second step, replace every edge $(u,v)$ in $E(G_1)$ by a pair of arcs $(u,v)$, $(v,u)$. 
  Finally, for every vertex $v$ incident to any edge of $F$ in $G$,  
  add vertices $v^+,v^-$ and add arcs from $v^+$ to all vertices in 
  $N_{G_1}(v)$ and from all vertices in $N_{G_1}(v)$ to $v^-$. Let $D$ be the resulting digraph. Note that $N_D^+(v^-)=\emptyset$ and 
 $N_D^-(v^+)=\emptyset$. 
%
  Let $X_E=\{x_e \mid e \in F\}$,   $X_V=\{v^+, v^-, v \mid e \in F, e=(u,v)\}$ and $X=X_E\cup X_V$. 
We now relate solutions for the given instance and linkages in $D$.

  \begin{claim}\label{clm:linkage_equivalence}
    For any $S \subseteq F$, $S$ is a biconnectivity deletion set for $G$ 
    if and only if for every edge $(u,v) \in S$
    there is a linkage from $\{u^+, u\}$ to $\{v^-, v\}$ in $D-\{x_e \mid e \in S\}$.
  \end{claim}
  \begin{proof}
    Consider an arbitrary edge $e=(u,v)$ in $S$. 
    On the one hand, assume that there exists a $u$-$v$ flow of value 2 in $G-S$. 
    Then, by definition there exists a pair $\{P_1,P_2\}$ of
    internally vertex-disjoint $u$-$v$-paths in $G-S$. Observe that orienting both paths 
    from $u$ to $v$, replacing one copy of $u$ by $u^+$ and one copy of 
    $v$ by $v^-$, and subdividing any edge $e' \in E(P_i) \cap F$ by the vertex $x_{e'}$
    yields the required linkage. 

    On the other hand, let $\{P_1, P_2\}$ be a linkage from $\{u^+,u\}$
    to $\{v^-, v\}$ in $D-\{x_e \mid e \in S\}$. For each $i\in \{1,2\}$, if $P_i$ originates in $u^+$, then replace $u^+$ by $u$ and if $P_i$ terminates in $v^-$, then replace $v^-$ by $v$. Call the paths resulting from $P_1$ and $P_2$ in this way,  $P_1'$ and $P_2'$ respectively.
     Then $P_1'$ and $P_2'$ use only vertices 
    of $V(G) \cup X_E$. Furthermore, these two paths use no edge of $S$
    since by definition, $P_1$ and $P_2$ are disjoint from every vertex $x_{e'}$ such that $e' \in S$. 
    Thus the paths $P_1'$, $P_2'$ use only edges and vertices
    present in $G-S$, and form an internally vertex-disjoint pair of $u$-$v$-paths. 

    Since the above applies to any edge,
    the claim follows.
  \end{proof}
  
  \noindent
  Let $Z \subseteq V(D)$ be the cut-covering set for $(D, X)$,
  as computed by the algorithm of Lemma~\ref{lemma:cutcover}. %
%
  Having in hand the set $Z$, we define the  set $Y=(Z\cap V(G)) \cup V(F)$.  
   Note that $Z$ could contain vertices from $X_V$, but we want $Y$ to be a subset of $V(G)$. Therefore, we first add to $Y$ those vertices in $Z$ which are also vertices in $G$ and then add the vertices of $V(F)$. Our objective now is to  reduce $G$ down to what is commonly known as the \emph{torso} graph of $G$ defined by $Y$ (see~\cite{KratschW12}).
  We now make this precise in the form of reduction rules. 
  In the rest of the proof of the lemma, we fix $Z$ to be a set computed using  Lemma~\ref{lemma:cutcover} and let $Y$ be as defined above. We now state three reduction rules which will be applied on the given instance in the order in which they are presented.

  \begin{redr}\label{redrule:zerothrule} 
    If $k=0$, then return an arbitrary yes-instance of constant size.
  \end{redr}
  
  \begin{redr}\label{redrule:firstrule} 
    Suppose that Reduction Rule \ref{redrule:zerothrule} has been applied on
    the given instance. If there is an edge $(u,v) \in F$ such that $G$
    contains a $u$-$v$ path avoiding all edges of $F$ and all vertices of $Y
    \setminus \{u,v\}$, then delete $(u,v)$ from $G$ and reduce the budget $k$
    by 1. That is,  return the instance $(G-\{(u,v)\}, k-1,E^\infty)$.
  \end{redr}

  \begin{redr}\label{redrule:secondrule}
    Suppose that Reduction Rule \ref{redrule:zerothrule} and Reduction Rule
    \ref{redrule:firstrule} have been applied exhaustively on the given
    instance. For every pair $u,v\in Y$ such that $(u,v)\notin E(G)$ and there
    is a $u$-$v$-path in $G$ that is internally vertex-disjoint from $Y$, we
    add the edge $(u,v)$. Finally, return the instance $(G',k,E'^\infty)$,
    where $G'=G[Y]$ and $E'^\infty=(E^\infty\cap E(G'))\cup (E(G')\setminus
    E(G))$.
  \end{redr}


\noindent
The soundness of Rule~\ref{redrule:zerothrule} is trivial and we move on 
to prove the soundness of the remaining two rules.

  \begin{claim}
Reduction Rules~\ref{redrule:firstrule} and~\ref{redrule:secondrule} are sound.
    \end{claim}
    

  \begin{proof}
  
   Let $e=(p,q)\in F$ be an edge which is deleted in an application of Reduction Rule \ref{redrule:firstrule}. Observe that in order to argue the soundness of this reduction rule, it suffices to argue that $e$ is part of some  solution for the given instance (if there exist any). Let $S$ be an arbitrary subset of $F$ containing $e$ such that $S\setminus \{e\}$ is a solution. If $S$ itself is a biconnectivity deletion set then we may correctly conclude that $e$ is part of some  solution for the given instance. Suppose that this is not the case.

  Recall that by the previous claim, $S$ is a biconnectivity deletion set for $G$
  if and only if there is a linkage from $\{u^+, u\}$ to $\{v^-, v\}$
  in $D-\{x_e \mid e \in S\}$ for every $(u,v) \in S$. Since we are in the case that $S$ is {\em not} a biconnectivity deletion set,  there is a $(u,v)\in S$, with $A=\{u^+, u\}$, $B=\{v^-, v\}$, and $R=\{x_e \mid e \in S\}$  such that there is no linkage from $A$ to $B$
  in $D-R$. Since $S\setminus \{e\}$ is a biconnectivity deletion set, we may assume without loss of generality that $u=p$ and $v=q$ and furthermore, $\kappa_{G-S}(p,q)=1$. In addition, the fact that $Z$ is a cut-covering set for $(D,X)$ implies that $Z$ contains a vertex $w$ such that $C=\{w\}$ is a minimum-cardinality potentially overlapping 
  $A$-$B$ vertex cut in $D-R$. It is straightforward to see that $w\notin \{p,q,p^+,q^-\}$ since otherwise, there will be at least one path from $A$ to $B$ which is disjoint from $w$.  
  Finally, since $\kappa_{G-S}(p,q)=1$, it follows that every $p$-$q$ path in $G-S$ intersects $w$. If $w\in X_E$ then we know that it  corresponds to an edge in $F$. Otherwise, it corresponds to a vertex in $Y$. In either case, we obtain a contradiction to the applicability of Reduction Rule \ref{redrule:firstrule} on the edge $(p,q)$, completing the proof of soundness for this rule.   

We now argue the soundness of Reduction Rule~\ref{redrule:secondrule}.
To do so, we prove that $S\subseteq F$ is a solution for $(G,k,E^\infty)$ if and only if it is a solution for $(G',k,E'^\infty)$. Let $D_1=D_{G,F}$ and let $D_2=D_{G',F}$. 

In the forward direction, suppose that $S$ is a solution for $(G,k,E^\infty)$. By Claim, \ref{clm:linkage_equivalence}, it follows that for every edge $(u,v)\in S$, there is a linkage from $\{u^+,u\}$ to $\{v^-,v\}$ in $D_1-\{x_e\mid e\in S\}$. Fix such an edge $(u,v)$ and  let the paths in the linkage be $P_1,P_2$.  If we demonstrate such a linkage in $D_2$, then we are done. This can be achieved as follows. Let  $i\in \{1,2\}$ and consider a pair of vertices $x_i,y_i\in V(P_i)\cap Y$ such that the subpath of $P_i$ from $x_i$ to $y_i$ has all its internal vertices disjoint from $Y$. Then, we know that the graph $G'$ contains the edge $(x_i,y_i)$ and hence the digraph $D_2$ contains the arc $(x_i,y_i)$. We replace the subpath from $x_i$ to $y_i$ with the arc $(x_i,y_i)$ and we do this for every such subpath of $P_i$. It is straightforward to see that what results is indeed a linkage from $\{u^+,u\}$ to $\{v^-,v\}$ in $D_2-\{x_e\mid e\in S\}$. Hence, we conclude that $S$ is a solution for $(G',k,E'^\infty)$. 

 The same argument can be reversed for the converse direction in order to convert, for any $(u,v)\in S$,  a linkage from $\{u^+,u\}$ to $\{v^-,v\}$ in $D_2-\{x_e\mid e\in S\}$ to a a linkage from $\{u^+,u\}$ to $\{v^-,v\}$ in $D_1-\{x_e\mid e\in S\}$. This completes the proof of soundness of Reduction Rule \ref{redrule:secondrule}. 
  \end{proof}
  
  \noindent
  The above claim implies that if $(G',k',E(G')\setminus F')$ is the instance obtained by exhaustively applying the three reduction rules above, then  $(G',k',E(G')\setminus F')$ is indeed equivalent to  $(G,k,E^\infty)$. Furthermore, the size $|V(G')|=O(|F|^3)$ and the randomized polynomial running time
  follow from Lemma~\ref{lemma:cutcover}. This completes the proof of the lemma.
\end{proof}

\kernelbiconnected*

\begin{proof}
Let $(G,k,E^\infty)$ be the given instance and let $F=E(G)\setminus E^\infty$ be the set of potential solution edges in this instance. We present reduction rules which   reduce $F$ (while maintaining equivalence) to size $O(k^3)$; the result then follows
  from Lemma~\ref{lemma:kernel-representation}. 

  If $|F| = O(k^3)$, we are done. Otherwise, following the approach described in Section \ref{subsubsec:greedy}, we 
  greedily construct a biconnectivity deletion set in $G$, at each step keeping track of the
  edges that become critical. That is, we let $\hat S=\{f_1,\dots, f_r\}\subseteq F$ be a set greedily constructed as follows. The edge $f_1$ is an arbitrary edge in $F$ and for each $2\leq i\leq r$, $f_i$ is an arbitrary edge which is \emph{not} critical in $G-\{f_1,\dots, f_{i-1}\}$. As earlier, we terminate this procedure after $k$ steps if we manage to find edges $\{f_1,\dots, f_k\}$ or earlier if for some $r<k$, every remaining edge of $F$  is critical in $G-\{f_1,\dots, f_r\}$. 
  
  If $r=k$, then we identify the instance as a yes-instance and return an arbitrary yes-instance of constant size. Otherwise, if there is an $i\in [r]$ such that $G-\{f_1,\dots, f_{i}\}$ is biconnected and  $|\critical_{G- \{f_1,\dots, f_{i-1}\}}(f_i)|\geq 20k^2+46k$, then we execute the case analysis in
 Section~\ref{subsubsec:case_analysis} and in polynomial time, either find $3k+1$ distinct partner sets or an irrelevant
  edge. In the latter case, 
  we simply remove this irrelevant edge from $F$ (add it to the set $E^\infty$).
  Finally, if we reach a case with at least $3k+1$ distinct partner sets,
  then according to the proof of Lemma~\ref{lem:manypartnersets} we can find a biconnectivity deletion set
  $S \subseteq F$ with $|S| \geq k$ in polynomial time, and since we are dealing with the unweighted case, we can simply
  identify the instance as a yes-instance and return an arbitrary yes-instance of constant size.

  The only remaining case is that this  greedy algorithm fails to produce
  a large enough solution yet never marks too many edges as critical at once. That is, it terminates in   $r<k$ steps and never marks more than $20k^2+46k$ edges as critical in step $i$ for any $i\in [r]$.  
  This implies that $|F| \leq 20k^3+46k^2+k=O(k^3)$, completing the proof of the theorem.
\end{proof}

\section{Conclusions}
Our results on {\strongcontract} and {\twoconndel} provide additional
data points for the algorithmic landscape of graph editing problems under
connectivity constraints and its application in network design. 

Since we established that {\strongcontract}
is {\sf W[1]}-hard for general digraphs, we ask whether the problem becomes
{\sf FPT} when restricted to planar digraphs or other structurally sparse
classes.

Concerning the parameterized algorithm for {\twoconndel}, we ask whether
the dependence of~$2^{O(k \log k)}$ can be improved to single-exponential
or proven to be optimal. Naturally, we would further like to know whether
we can reach beyond \emph{bi}connectivity and extend our algorithm to 
higher values of vertex-connectivity. Is it  possible to obtain a similar algorithm on digraphs?

Finally, regarding our polynomial kernel for {\sc Unweighted Biconnectivity Deletion},
we ask whether it is possible to obtain a deterministic kernel. It is also
left open whether the weighted case admits a polynomial kernel.

The results presented in this paper raise more questions than they answer,
a clear indication that connectivity constraints are far from properly
explored under the paradigm of parameterized complexity. As such, the
topic offers exciting but challenging opportunities for further research.

\end{document}